\DeclareMathAlphabet\mathbfcal{OMS}{cmsy}{b}{n}
 \newcommand{\beqn}{\begin{eqnarray}}
 \newcommand{\eeqn}{\end{eqnarray}}
 \newcommand{\be}{\begin{equation}}
 \newcommand{\ee}{\end{equation}}
 \newcommand{\ba}{\begin{array}}
 \newcommand{\ea}{\end{array}}
 \newcommand{\pa}{\partial}
  \newcommand{\ci}{\cite}
 \newcommand{\ds}{\displaystyle}
 \newcommand{\la}{\label}
  \newcommand{\llangle}{{\langle\!\langle}}
  \newcommand{\rrangle}{{\rangle\!\rangle}}
   \newcommand{\lan}{\langle}
    \newcommand{\ran}{\rangle}
      \newcommand{\un}{ | \hat u_n|^2}
       \newcommand{\strela}{\rightharpoonup}
 \newcommand{\fr}{\frac}
\newcommand{\eps}{\epsilon}
\newcommand{\ov}{\overline}
\newcommand{\lann}{\langle\langle}
\newcommand{\ra}{\rangle }
\newcommand{\rann}{\rangle\rangle }
\newcommand{\cF}{ \mathcal{F} }
\newcommand{\cB}{{\mathcal B}}
\newcommand{\cD}{{\mathcal D}}
\newcommand{\aE}{{\mathbb{E}}}
\newcommand{\EE}{{\mathbb{E}}}
\newcommand{\PP}{{\bf P}}
\newcommand{\aP}{{\mathbb{P}}}
\newcommand{\cH}{{\mathcal H}}
\newcommand{\cM}{{\mathcal M}}
\newcommand{\cP}{{\mathcal P}}
\newcommand{\vp}{\varphi}
\newcommand{\De}{\Delta}
\newcommand{\de}{\delta}
\newcommand{\dist}{{\rm dist\5}}
\newcommand{\xinn}{{}^{N}\!\xi}
\newcommand{\xin}{\xi^N}
\newcommand{\al}{\alpha}
\newcommand{\vka}{{\varkappa}}
\newcommand{\si}{\sigma}
\newcommand{\om}{\omega}
\newcommand{\Om}{\Omega}
\newcommand{\na}{\nabla}
\newcommand{\Si}{\Sigma}
\newcommand{\lam}{\lambda}
\newcommand{\5}{{\hspace{0.5mm}}}
\newcommand{\N}{\mathbb{N}}
\newcommand{\R}{\mathbb{R}}
\newcommand{\Z}{\mathbb{Z}}
\newtheorem{theorem}{Theorem}[section]
\newtheorem{defin}[theorem]{Definition}
\newtheorem{lemma}[theorem]{Lemma}
\newtheorem{remark}[theorem]{Remark}
\newtheorem{remarks}[theorem]{Remarks}
\newtheorem{cor}[theorem]{Corollary}
\newtheorem{pro}[theorem]{Proposition}
\newtheorem{exe}[theorem]{Example}
\newtheorem{exes}[theorem]{Examples}
\newtheorem{exer}[theorem]{Exercise}
\newcommand{\bp}{\begin{pro}}
\newcommand{\ep}{\end{pro}}
\newcommand{\bt}{\begin{theorem}}
\newcommand{\et}{\end{theorem}}
\newcommand{\bl}{\begin{lemma}}
\newcommand{\el}{\end{lemma}}
\newcommand{\sgn}{\mathop{\rm sgn}\nolimits}
\newcommand{\bce}{\begin{center}}
\newcommand{\ece}{\end{center}}
\newcommand{\bpr}{\begin{proof}}
\newcommand{\epr}{\end{proof}}
\newcommand{\br}{\begin{remark}}
\newcommand{\er}{\end{remark}}
\newcommand{\brs}{\begin{remarks}}
\newcommand{\ers}{\end{remarks}}
\newcommand{\bd}{\begin{defin}}
\newcommand{\ed}{\end{defin}}
\newcommand{\bc}{\begin{cor}}
\newcommand{\ec}{\end{cor}}
\newcommand{\bex}{\begin{exe}}
\newcommand{\eex}{\end{exe}}
\newcommand{\bexs}{\begin{exes}}
\newcommand{\eexs}{\end{exes}}
\newcommand{\bexe}{\begin{exer}}
\newcommand{\eexe}{\end{exer}}
\author{Sergei Kuksin}
\address%{Institut de Math\'emathiques de Jussieu--Paris Rive Gauche, CNRS, Universit\'e Paris Diderot, UMR 7586, Sorbonne Paris Cit\'e, F-75013, Paris, France \& School of Mathematics, Shandong University, Jinan, Shandong,  China}
{Universit\'e  Paris Cit\'e and Sorbonne Universit\'e, CNRS, IMJ-PRG, F-75013 Paris, France, and
%{Institut de Math\'emathiques de Jussieu--Paris Rive Gauche, CNRS, Universit\'e Paris Diderot, UMR 7586, Sorbonne Paris Cit\'e, Paris, France, and
 Peoples' Friendship University of Russia,  %(RUDN University),
 Moscow,  Russia, and Steklov Mathematical Institute of
  Russian Academy of Sciences, Moscow, Russia, and  Institute for Financial Studies, Shandong University, Jinan, China.
}
\email{sergei.kuksin@imj-prg.fr}
\title
{  Stochastic 1d Burgers equation as a model for hydrodynamical turbulence. 	}
\begin{document}
\date{}
%%%%%%%%%%%%%%%%%%%%%%%%%%%%%%%%%%%%%%%%%%%%%%%%%%%%%%%%%%%%%%%%%%%%%%%%%%%%%%%%%%%%%

\maketitle

\begin{abstract}
This work is a review with proofs of a group of results on  the stochastic  Burgers equation with small viscosity, 
 obtained during the last two decades. These results   jointly show that the 
 equation makes a surprisingly good model of hydrodynamical turbulence. The model provides natural and
rigorously justified analogies of a number of key predictions of the theory of turbulence, including the main 
assertions of the Kolmogorov approach to turbulence, known as the K41 theory. 
% The presentation is based on the recent book 
%Boritchev, Kuksin {\it One-Dimensional Turbulence and the Stochastic Burgers Equation} (AMS, 2021) and on two papers
%which develop the book's  results.
\end{abstract}

%\ece

%\newpage
\tableofcontents
\setcounter{equation}{0}
\section {Introduction: Kolmogorov's theory and its 1d model}

A theory of turbulence, known now as the K41 theory, was created 
in three  articles, published by  A.N.~Kolmogorov  in 1941, and in two articles of his student  Obukhov which appeared the  same year. 
Probably now this is the most popular theory of turbulence, but as all other theories of hydrodynamical turbulence it is heuristic, and it is unclear if in the foreseeable future its claims will be rigorously justified. So  at the 
current stage of development of the field any mathematically correct theory which is consistently 
related to K41 and may  be compared with it is useful and important. 
 In this paper we give a concise survey of such  a theory  which deals with  turbulence in fictitious one-dimensional fluid, 
  described by the space-periodic one-dimensional   stochastic  Burgers equation. 

The Burgers equation as a 1d model of fluid motion was suggested by Burgers in late 1930's  and since then was  
systematically used in this quality by him (e.g. see \cite{B1948}) and by other 
 experts in hydrodynamics. In 1980's-1990's  Frisch  studied   on the physical level of rigour the 
equation with small viscosity and random initial data and/or random forcing, regarding this as a stochastic model of 1d turbulence, 
 see \cite{FF, AFLV}. Motivated by this work, in 1990's Sinai,  himself and with collaborators, 
   started to examine the stochastic Burgers equation under the inviscid limit $\nu\to0$. This research resulted in 
the influential paper \cite{WKMS} and  then  was continued by 
Sinai's  students and followers (including {E,  Iturriga, Khanin and others; see \cite{IK2003, KZ} and references in these works).  Early 
this century, also in connection with 1d turbulence,  the space-periodic 
  Burgers equation with small positive viscosity was examined  by two students of the  author,  Biriuk  and Boritchev,
  using tools from 
nonlinear PDEs and some ideas from previous work of the author on nonlinear PDEs with small dissipation (see \cite{BK} for references). 
    The study was continued by the two and the author and   led to  the book
\cite{BK}, which shows  that  many basic statements of Kolmogorov's  theory 
allow a rigorous interpretation in the Burgers framework in terms of  solutions  for  the stochastic Burgers equation.\footnote{We also mention
the work  \cite{CGR}, where 
 without any relation to the Burgers equation is constructed a class of stationary processes $u^\nu(x)$, $x\in S^1$,
  satisfying  1d versions of the main predictions of K41.}

The goal of this paper is to present  main results of the book \cite{BK} and some their recent developments in a  form, lighter than in  \cite{BK},  
and hopefully  more suitable
for physicists and those mathematical readers who are less concerned with the  rigour of  argument.  The lightness of the presentation 
is achieved by giving  without verification  a few lemmas which  we regard as technical, less interesting and less important 
  (their demonstrations may be found in \cite{BK}), and  by 
   significant   shortening  the proofs: we    assume that the omitted details do not interest  readers from physics 
     and may be   relatively easily recovered  by those from mathematics. 
     Besides,   we omit some results from \cite{BK}   which regard as less important. 
  At the origin of this work  lies the lecture notes  of an online course which the author was teaching 
    at the Fudan University in December of 2021.

Now we will briefly  describe the content of  our work.
Short Sections~\ref{s_2}-\ref{s_5} are preliminary. There we develop some analysis, needed for the main part of the paper, show 
that the space-periodic stochastic Burgers equation is well-posed in Sobolev spaces and defines there Markov processes.  
Then in Sections~\ref{s_olei} and \ref{s_8} we discuss the behaviour of solutions for the equation as the 
viscosity $\nu$ goes to zero. In Sections~\ref{s_burgulence}--\ref{s_45} we talk 
about  properties of solutions for the Burgers equation with small viscosity in parallel  with the assertions of the K41 theory,
regarding the derived there  results  as the laws of turbulent  motion of  fictitious 1d ``burgers fluid", i.e.  as the laws of 1d turbulence.
Finally in Section~\ref{s_9}    we discuss the well known (e.g. see \cite{WKMS}) 
 existence of an inviscid   limit for solutions of  the  Burgers equation 
as $\nu\to 0$: $u^\nu(t,x)\to u^0(t,x)$, a.s.  The limit $u^0$  is a discontinuous function, bounded for bounded $t$, which 
 satisfies the inviscid   stochastic Burgers  equation in the sense of generalised functions, and is traditionally called an 
  {\it entropy}, or an {\it inviscid}    solution. Passing to the limit in the results of Sections~\ref{s_burgulence}--\ref{s_45}  we show 
  that the entropy solutions $u^0$ possess a collection of properties which with good reasons may be called {\it inviscid 1d
  turbulence}. 
  \smallskip
  
  Most of the results on 1d turbulence whose rigorous proof we talk 	about in this work were earlier obtained on heuristic level of 
  rigour by Burgers himself and in works of other physicists, e.g. in \cite{AFLV}. We discuss this in the main part of the paper.

Below in our work,  speaking
  about the K41 theory,  as in the K41 papers we usually assume that the turbulent velocity fields $u(t,x)$ under discussion are random fields,
  stationary in time and homogeneous in space. In addition, unless otherwise stated, 
   we suppose that these fields are space-periodic, and normalise   the period to be one.
  We assume that  the units are  chosen in such a way that the velocity  fields are of order one, uniformly in small $\nu$:
  \be\label{Kenergy}
  \EE | u(t,x)|^2 \sim1.
  \ee
   Then their Reynolds numbers  equal $\nu^{-1}$, where $\nu$ is the viscosity. Moreover, as in K41 we 
  suppose that the rates of energy dissipation $\eps^K$ of the flows
   remains of order one  as $\nu\to0$, 
 \be\label{Krate}
 \eps^K := \nu \EE |\nabla u(t,x)|^2 \sim1. 
 \ee
 Here and below  $A\sim B$ means  that  ratio $A/B$   is bounded from below and from 
       above by positive constants, independent of  $\nu$ and of  the indicated arguments in $A$ and $B$.

\noindent{\bf Agreements and Notation. } In our work 
all random processes have continuous trajectories, and  always  the process $\xi$  in relations  \eqref{xi}
is that  introduced in Proposition~\ref{p1}.
For a Banach space $X$  and $R>0$ we denote by $B_X^R$ the open ball 
$
\{u\in X:  \|u\|_X<R\},
$
and  by  $\ov B_X^R$ -- its closure. By $\| \cdot\|_m$ we  denote the homogeneous $m$-th
Sobolev norm for functions on $S^1$ (see \eqref{nHm}) and by $| \cdot|_p$ -- the norm in $L_p(S^1)$. 
By $H^m$, $ m\in\R$, we denote the Sobolev space of order $m$ of functions on  $S^1$ with zero mean value. 
Any
metric space $M$ is provided
with the Borel $\si$-algebra $\cB(M)$ (see \cite{Sh} and 
 Appendix~C in \ci{BK}). So when we say that  ``a map to $M$ is measurable",
 it means that it is  measurable with respect to $\cB(M)$. By $\cP(M)$ we denote 
 the set of probability Borel  measures on $M$, and by the symbol $\strela$ denote the weak convergence of measures. 
   A set from $\cB(M)$ of zero measure is called a null-set. 
  For a function $f$ and a measure $\mu$ we  write
$$
\langle f,\mu\rangle= %\langle \mu,f \rangle=
\int f(u)\mu(du)
$$
(a clash of notation with the $L_2$-scalar product should not cause a problem for the reader).

\section {The  setting }\label{s_2}
\subsection {The Burgers equation}

The initial-value problem for the  space-periodic Burgers equation reads
\be\la{B}
\left\{\ba{rcl}
u_t(t,x)+u u_x-\nu u_{xx}&=&\eta(t,x),\qquad t\ge 0,
\\\
u(0,x)&=&u_0(x).
\ea\right|\,\,x\in S^1=\R/\Z.
\ee
 Here 
$$
0<\nu \le1,
$$
so everywhere below ``for all $\nu$" means ``for all $0<\nu\le1$". 
The force $\eta$ is a random field $\eta^\om(t,x)$, 
defined on a probability space $(\Om,\cF,P)$, and specified below. 
All   details
on probability objects and assertions  which are given below without explanation may be found  e.g.  in \ci{Sh}
and  appendices to \cite{BK}.

We always assume that
$
\int \eta^\om(t,x)dx\equiv\int u^\om_0(x)dx=0.
$ 
Since $u u_x=\fr12 \fr\pa{\pa x} u^2$, then integrating the Burgers equation (\ref{B}) over $S^1$, we get that
$
\fr\pa{\pa t}\int u(t,x)dx\equiv 0, 
$
so
$$
\int u(t,x)dx\equiv 0,\qquad t\ge 0.
$$
Consider the space 
$$
H=\{u\in L_2(S^1):\int u(x)dx=0\}, %\qquad | u|_{L_2(S^1)} =: \| u\|\,,
$$
equipped with the $L_2$-scalar product $\langle \cdot, \cdot \ra$ and the $L_2$-norm $\| \cdot\|$ (so $\| u\|^2 = \langle u, u \ra$).  We will 
 regard a solution $u$ of the Burgers equation (\ref{B})
either as a function $u(t,x)$ of $t,x$, or as a curve $t\mapsto u(t,\cdot)=:u(t)\in H$,
depending on the random parameter $\om$. That is, either as a random field $u^\om(t,x)$, or as a random process $u^\om(t) \in H$. 

Below in Sections \ref{s_3}-\ref{s_8} we show that eq.~\eqref{B} is well posed and study properties of its solutions with small $\nu$. In particular,
we obtain lower and upper bounds for second moments of their Sobolev norms which are asymptotically sharp as $\nu\to0$ in the sense that
they involve $\nu$ in the same negative degree. Then in Section~\ref{s_burgulence} we state one-dimensional versions of the main laws of the K41 
theory and use results of the previous sections to prove them rigorously for the fictitious  1d fluid whose motion is described by eq.~\eqref{B}.

\subsection{Function spaces and random force}\label{s_force}

 We denote by  $\{e_s(x)\in H: s\in\Z^*=\Z\setminus \{0\}\,\}$ the orthonormal trigonometric  basis of $H$, 
\be\la{baH}
e_s(x)=\left\{\ba{ll}
\sqrt{2}\cos 2\pi sx,&s\ge 1,\\
\sqrt{2}\sin 2\pi |s|x,&s\le -1.\ea \right .
\ee
Any $u\in H$ decomposes as
$$
u(x)=\sum_{s\in \Z^*}  u_s e_s(x),\qquad x\in S^1,
$$
and  may be written as  Fourier series
$$
u(x)=\sum  \hat u_s e^{2\pi i sx},\qquad \hat u_s= \ov{\hat u}_{-s}=
\fr1{\sqrt{2}}(u_s-iu_{-s}),\quad s\in\N;\quad \hat u_0=0.
$$

\noindent
{\it The force and solutions.} We suppose that $\eta(t,x)$ is a regular function of $x$, while as a function of $t$ it is a
 distribution:
\be\la{xi}
\eta=\eta^\om(t,x)=\pa_t\xi^\om(t,x), \qquad 
\xi^\om(t,x)=\sum_{s\in Z^*} b_s\beta_s^\om(t)e_s(x).
\ee
Here $\{b_s\}$ are real numbers, and $\{\beta_s\}$ are standard independent
Wiener processes  on  $(\Om,\cF,P)$. Abusing language we also call the random field 
 $\xi$ ``a force". 
 It is easy to see that 
\be\la{hom}
\text{if  $b_s \equiv b_{-s}$, then the random field $\xi(t,x)$ is homogeneous in $x$} 
\ee
(see \cite[Section 1.5]{BK}).

For 
$m\in\N_0=\N\cup 0$ we denote
$$
B_m=\sum |2\pi s|^{2m}b_s^2\le \infty\,,
$$
and will   always assume that
\be\label{B_assume}
B_0>0, \qquad 
B_{m}<\infty  \;\; \forall\, m.
%\qquad\text{for some real number $1^*>1$}.
\ee
The first relation in \eqref{B_assume} is needed for majority of our results, while the second may be weakened, see in \cite{BK}.

Let $m\in\N$. 
The Hilbert space $H^m$ is the Sobolev  space $\{v\in H: ({\partial}^m/{\partial} x^m) v  \in H\}$, equipped 
 with the homogeneous  Hilbert norm
\be\la{nHm}
\Vert v\Vert_m:= \|  \tfrac{\partial^m}{\partial x^m} v  \|.
\ee
If $v(x)=\sum v_s e_s(x)$, then
$
\Vert v\Vert_m^2 =\sum |2\pi s|^{2m}|v_s|^2.
$ 
By this relation  we define the norm $\Vert v\Vert_m$
for any $m\in\R$. Then, for $m\ge 0$,
$\,
H^m:=\{v\in H: \Vert v\Vert_m<\infty\},
$ 
and $H^{-m}$ is the complement of $H$ in the norm $\Vert\cdot\Vert_{-m}$. We also set
$
H^\infty = \cap H^m = C^\infty(S^1) \cap H.
$
 Next, for 
 $0<T<\infty$,  we introduce  the Banach spaces 
$$
X_T^m=C(0,T; H^m)\,. %\qquad\dot X_T^m=\{u\in X_T^m: u(0)=0\}.
$$
For   $T=\infty$ we set 
$
X^m_\infty=C(0,\infty; H^m)
$. 
This  is a complete separable metric space with the distance
$$
\dist(u,v)=\sum_{T=1}^\infty 2^{-T} \psi\big( |(u-v)\!\mid_{[0,T]} |_{X_T^m}\big),\qquad \psi(r):=r/(1+r),\,\,\,r\ge 0.
$$

Well known basic properties of the random field  $\xi^\om(t,x)$  in \eqref{xi} are described by the following  proposition (e.g. see 
 \ci{BK} for a  proof).

\bp\la{p1}
 If \eqref{B_assume} holds, then  there exists a 
  null-set $Q$ such that for each non-negative integer $m$ we have:

\noindent 
i) for  $\om\not\in Q$ and $t\ge0 $
the series in (\ref{xi}) converges in $H^m$ to a limit  $\xi (t)$ which is a continuous process in $H^m$, vanishing at zero. 
 For $\om\in Q$ we set $\xi=0$. 
\smallskip\\
ii) $\aE\Vert\xi(t)\Vert_m^2 = tB_m\quad \forall\,t\ge0$.
\smallskip\\
iii) For any $T<\infty$,\,
$\aE e^{\al\Vert\xi\Vert_{X_T^m}^2} \le  4 e^{2T\al B_m}-3
\quad{\rm if}\;\; \al\le 
\al_m(T)= 1/(4TB_m).
$
\ep

The process $\xi(t)$ in this proposition is {\it a Wiener process in the spaces $H^m$.}

\setcounter{equation}{0}
\section {Deterministic equation} \label{s_3}
Consider first  the initial-value problem for  the deterministic  Burgers equation 
\be\la{B2}
\left\{\ba{rcl}
u_t(t,x)+u u_x-\nu u_{xx}&=&\eta(t,x)=\pa_t \xi(t,x),\qquad t\ge 0,
\\\
u(0,x)&=&u_0(x)
\ea\right|,\,\,x\in S^1=\R/\Z,
\ee
where $\xi \in  C([0,\infty),H)$ and $u_0\in H$ are non-random. We say that 
a function $u(t,x) \in C([0,\infty),H)$ solves  (\ref{B2})  if 
\be\la{dB}
u(t,x)-u_0(x)+\int_0^t [u(s,x)u_x(s,x)-\nu u_{xx}(s,x)]ds=\xi(t,x)-\xi(0,x), 
\ee
for all $t\ge 0$. Since $u(s, \cdot) \in H$ and $uu_x = \frac12 (\pa/\pa x) u^2$,  then for any $t$ the l.h.s. of \eqref{dB}
as a function of $x$  is a
well defined distribution, and equality \eqref{dB} is assumed to hold for all $t$ 
in the sense of generalised functions in $x$.

\medskip

\subsection {Gagliardo--Nirenberg estimates}
The result below -- the  1d version of the 
Gagliardo--Nirenberg inequalities --  is of fundamental importance for what follows.
For $1\le p\le\infty$, $m \in\N_0=\N\cup 0$ and a function $h$ on $ S^1$ (not necessarily with zero mean-value) 
we denote
\be\label{norm}
|h|_{m,p}= |\pa^m h|_p +  | h|_p ,
\qquad \text{where} \;\;
 |u|_p:=|u|_{L_p}.
\ee
\bl\la{lGN}
Let $m\in\N$ and  $\beta\in\N_0$,  $\beta\le m-1$. Let 
$q,r\in[1,\infty]$. Then
\smallskip\\
a) if $p\in(0,\infty)$, and $\theta$ found from the relation
\be\la{2.2}
\beta-\fr1r=\theta(m-\fr1p)-(1-\theta)\fr 1q
\ee
satisfies $\theta\in[\fr\beta m,1)$, then
\be\la{2.3}
|h|_{\beta,r}\le C|h|_{m,p}^\theta |h|_{q}^{1-\theta},
\ee
with some $ C=
C(\beta,p,q,r,m)$.
\smallskip\\
b) If $p=1$ or $p=\infty$, then (\ref{2.3}) holds if in addition to (\ref{2.2}) also $\theta=\beta/m$.

\el

\bexs\la{ex1}
{\bf A.} Choosing $r=p=q=\infty$ we get for any  $b, m \in \N_0$, $b\le m-1$, 
 the Landau-Kolmogorov inequality
$
|h|_{C^b}\le C|h|^\theta_{C^m}|h|^{1-\theta}_{\infty}$, where 
$\theta= b/m.
$

\noindent
{\bf B.} If $p=q=2$, $r\ge 2$ and  $0\le k\le m-1$, then
$$
|h|_{k,r}\le C\Vert h\Vert^\theta_m \Vert h\Vert^{1-\theta},
\qquad \theta=\fr{2rk+r-2}{2rm}.
$$
{\bf C.} If $1\le k \le  m-1$ and $2m\ge k+1+(2m-2)/r$, then
applying (\ref{2.3}) to $h_x$, we get:
\be\la{exC}
|h|_{k,r}\le C\Vert h\Vert^\theta_m | h|^{1-\theta}_{1,1},
\qquad \theta=\fr2r\fr{rk-1}{2m-1}.
\ee
\eexs

\subsection{Well-posedness of the deterministic equation}
The following result is obtained  in 
 \cite[Section 1.3]{BK} by a very traditional  application of the   Galerkin  method  \ci{L1969}.
 We sketchy recall the proof. \

\bt\la{t2}
Let $0<T<\infty$, $m\in\N$ and let $m_*$ be  any real number bigger than  $m$.
If $u_0\in H^m$ and $\xi\in {X}_T^{m_*}$, then the initial-value problem (\ref{B2}) has a unique solution  $u\in X_T^m$. 
Moreover, the a priori bound holds,
\be\la{apri}
|u|_{X_T^m}\le C(m,T,\nu, \Vert u_0\Vert_m,|\xi|_{X_T^{m_*}}).
\ee
\et

% The theorem's proof relies on the Galerkin approximations \ci{L1969}. Namely,
For $N \in\N$, let  $H^{(N)}$ be the $2N$-dimensional 
subspace of $H$, spanned by the vectors $\{e_s,\,|s|\le N \}$,
so
 $H^{(N)}\subset H^m$ for all $m$. Denote by 
$
\Pi_N: H\to H^{(N)}
$
the orthogonal projection. Then 
$
\Pi_N\Big(\sum_{-\infty}^\infty v_s e_s\Big)=\sum_{-N}^N v_s e_s, 
$
and $\Pi_N$ commutes with the operator $\fr{\pa^2}{\pa x^2}$. 
Let us substitute in (\ref{B2}) the sum
$ 
u=u^N(t)=\sum_{-N}^N u_s^N(t) e_s,
$\ 
and apply to that $\Pi_N$.  We obtain
\be\la{BN}
\pa_t u^N-\nu u^N_{xx}+ \Pi_N(u^N u^N_x) =
\pa_t\Pi_N\xi(t),\qquad u^N(0):=\Pi_N u_0.
\ee
This is the $N$-th Galerkin approximation for problem  \eqref{B2}.
For  $v\in H^{(N)}$  its nonlinear term $\Pi_N(vv_x)$ is $L_2$-orthogonal to $v$:
\be\la{*}
\langle\Pi_N(vv_x),v\rangle
%=\langle vv_x,\Pi_n v\rangle
=\langle vv_x,v\rangle=\frac13\int_{S^1}\pa_x v^3=0 .
\ee
  Denoting by $v^N(t)$ a solutions of the linear equation, obtained from   \eqref{BN} by removing the term ${\Pi_N(u^N u^N_x)}$,
and writing   in (\ref{BN})  $u^N=v^N+w^N$, we get for $w^N(t)$ an ODE in $H^{(N)}$ 
with  continuous  in $t$ coefficients.  Using the 
orthogonality \eqref{*} we easily get that  this equation   has a unique solution, defined for all $t\ge0$. 
 So \eqref{BN} also  has a unique solution $u^N$. 
 When $N\to\infty$, the solutions $u^N$ converge to a solution of 
\eqref{B2}:

\bl\la{l4}
For any $T>0$,  $m_*>m\in\N$, $u_0\in H^m$ and $\xi\in{X}_T^{m_*}$ solutions $u^N(t)$ of (\ref{BN})
converge to a unique solution $u(t)$ of  (\ref{B}) as $N\to\infty$, weakly  in   space
$X_T^m$, as well as strongly  in   spaces $X_T^{m-1}$ and  $L_2(0,T;H^m)$.
\el

For a proof see \ci{BK}. This lemma is a useful  tool to study   properties of solutions for 
stochastic PDE \eqref{B2}   with  random force  \eqref{xi}  since it allows to 
approximate this {\it  infinite-dimensional} system by  {\it  finite-dimensional}  stochastic systems \eqref{BN} (for stochastic ODE e.g.  see
\cite{Oks} ).

 \smallskip
Theorem \ref{t2} implies that for any $m_*>m\in\N$ and $0<T<\infty$ we can define the mapping
\be\la{M}
\cM=\cM^{T,\nu}: H^m\times{X}_T^{m_*}\to X_T^m,
\quad
(u_0,\xi)\mapsto u(\cdot)|_{[0,T]},
\ee
and for  $0\le t\le T$ -- the mappings
\be\la{Mt}
\cM_t=\cM_t^{T,\nu}: H^m\times{X}_T^{m_*}\to H^m,
\quad
(u_0,\xi)\mapsto u(t),
\ee
where $u(t,x)$ solves \eqref{B2}. 
Certainly, if $T_1\ge T_2\ge t\ge 0$, then
$$
\cM_t^{T_1,\nu}(u_0,\xi)=\cM_t^{T_2,\nu}(u_0,\xi|_{[0,T_2]}).
$$
So solutions $u(t,x)$ of \eqref{B2} are defined for all $t\ge0$.

 A map of Banach spaces $F :B_1\to B_2$  is called {\it locally bounded } if for any  $ R>0$, 
$F(\ov B_{B_1}^R)$ is a bounded set in $B_2$. 
It is called {\it locally Lipschitz},  if for  any $R>0$ its restriction  
$F|_{\ov B_{B_1}^R}$ is a Lipschitz mapping.

\bt\la{t3}
Under the assumptions of Theorem  \ref{t2}  the mappings 
$\cM^{T,\nu}$ and $\cM^{T,\nu}_t$ with $0\le t\le T$ are
 locally Lipschitz, for any $m\in\N$.
 In particular, they are locally bounded  and  continuous.
\et
See \cite[Section 1.3]{BK}

Solutions, constructed in Theorem \ref{t2}, possess an  important non-expanding property, needed below:
  
  \bl\la{lnexp}
   For a fixed    $ \xi \in  {X}^{2}_\infty$     and $u_1, u_2 \in H^1$ 
  let $u^j(t,x)=:u^j(t)$ solves eq.~\eqref{B2} with $u_0 =u_j$, $j=1,2$.  Then for  any $T\ge0$, 
   \be\la{nexp}
  \big| u^1(T) - u^2(T)\big|_1  \le |u_1-u_2|_1.
  \ee
  \el
  \bpr
  Let first $u_1, u_2 \in H^\infty$. 
  Denoting $w=u^1-u^2$ we see that $w$   satisfies 
  \be\label{h1}
  w_t + \tfrac12 (w(u^1+u^2))_x -\nu w_{xx} =0,\quad w(0,x) = u_1-u_2 =: w_0. 
  \ee
  Let us consider the conjugated Cauchy problem 
  \be\label{h2}
  {\phi}_t + \tfrac12 {\phi}_x(u^1+u^2)  + \nu {\phi}_{xx} =0,\; \; 0\le t\le T; 
  \quad {\phi}(T,x) = {\phi}_T(x),  
  \ee
where $\phi_T$ is a Lipschitz function such that $| \phi_T |_\infty =1$. For any $\phi_T$ like that  problem \eqref{h2} has a unique
classical solution $\phi$.  The maximum principle applies to the equation in \eqref{h2}, and so
$
| \phi(t)|_\infty \le1
$
for 	all $t\in[0,T]$. Multiplying  the equation in \eqref{h1} by $\phi$ and integrating by parts we get that 
$$
|\langle w(T), \phi_T \rangle |=| \langle w_0, \phi(0)\rangle| \le |w_0|_{L_1}, 
$$
for any $\phi_T$ as above.  Now let $\chi_\eps(w)$, $0<\eps \le1$, be a sequence of piece-wise linear  continuous 
 functions on $\R$,  for each  $w$  converging to sgn$\,(w)$
  as $\eps\to0$, and such that $|\chi_\eps|_\infty \le1$.
 Since by Theorem~\ref{t2}  functions $u^1,u^2$ are smooth in $x$, then we can take for $\phi_T$ any function 
 $\chi_\eps(w(T,x))$. Thus, 
 $\ds \int w(T,x) \chi_\eps(w(T,x))dx \le   |w_0|_{L_1}$ forall $\eps. $
 Passing to the limit as $\eps\to0$ we get that 
   $
   | w(T)|_1 =
 \big| u^1(T) - u^2(T)\big|_1  \le |u_1-u_2|_1.
  $
  By continuity the estimate stays true for $u^1,u^2 \in H^1$. 
   \epr

\setcounter{equation}{0}
\section {Stochastic initial-value problem}\label{s_4}

Everywhere below, if in the stochastic initial-value problem \eqref{B2} the initial data $u_0$ is a r.v. (random variable), it is assumed
to be independent of  the random force \eqref{xi}; e.g. $u_0$ may be a  non-random function.
Let in \eqref{B2} $\xi$ be the random field \eqref{xi},  let 
$0\le T\le \infty$, $m\in\N$ and let $u_0\in H^m$ be a r.v.
(independent of   $\xi$). 
\bd
A random process $u^\om(t)\in H^m$ is a strong solution in $H^m$
of the Cauchy problem (\ref{B2}) for $0\le t\le T$ 
if there exist  a null-set
$Q\subset\Om$ such that for any $\om\in\Om\setminus Q$
the curve  $u(t):=u^\om(t,\cdot)$ satisfies \eqref{dB} for $0\le t\le T$.\footnote{ If $T=\infty$, the interval $[0,T]$ becomes $[0, \infty)$, 
and \eqref{dB} holds for $t<\infty$. }
\ed

The result below is an obvious consequence from Theorem \ref{t2} (we recall that the force $\xi$ is assumed to be as in Proposition~\ref{p1}). 

\bt\la{t5}
Let $m\in\N$ and 
 $u_0\in H^m$ be a r.v., independent of  $\xi$. Then for any
$T\in[0,\infty)$ the Cauchy problem (\ref{B2}) admits a  strong solution
\be\la{strs}
u^\om(t;u_0)= \cM_t^{ T,\nu}
(u_0^\om,\xi^\om|_{[0,T]}), \quad 0 \le t\le T,
\ee
satisfying \eqref{B2} for all 
$\om\in\Om\setminus Q$, where $Q$  is the null set as in Proposition~\ref{p1}. 
For each $\omega$,   $u^\om(\cdot)$ belongs to $X^m_T$, and for every 
$\om\in\Om\setminus Q_{m_*}$ it is 
a limit as in 
 Lemma~\ref{l4} of solutions $u^{N,\om}(t)$ for stochastic differential equations  (\ref{BN})
with $\xi=\xi^\om$ and $u_0=u_0^\om$. Solution \eqref{strs}
  is unique in the sense that any other strong solution coincides with it, a.s. 
\et

The solutions, constructed in Theorem \ref{t5},  are random  processes $u^\om(t) \in H^m$,  $t\ge0$.

We will denote the strong solution of (\ref{B2}), built in Theorem 
\ref{t5}, as
$$
u(t;u_0,\xi)=u(t;u_0)=\cM_t(u_0,\xi).
$$

Next  we obtain basic  a priori estimates for  moments of solutions $u(t;u_0)$. 
We will only sketch their derivations, formally applying  Ito's formula. Rigorously the estimates may be proved by applying the formula to  
the  Galerkin approximations and then using Lemma~\ref{l4} to pass to a limit,  similarly to the proof of Theorem \ref{t2}, sketched   above. 

Let us denote by $f$ the mapping 
 $f(u)=\nu u_{xx}- uu_x
$
and  write the Cauchy problem (\ref{B2}), \eqref{xi}  as
\be\la{BN2}
\dot u(t)=f(u(t))+\sum_s b_s\pa_t\beta_s^\om(t)e_s,
\qquad u(0)= u_0.
\ee 
Let $u(t)$ be a solutions for \eqref{BN2}. 
For a $C^2$-smooth functional $F(u)$ of $u\in H$  the  formal\,\footnote{``Formal" since we do not discuss the properties of the 
solution $u$ and requirements on $F$, needed for the formula to hold.} 
weak 
 Ito formula  (obtained by taking expectation of the usual Ito formula, e.g. see  in \ci{Oks})), reeds
\be\la{Ito}
\fr\pa{\pa t}\EE F(u(t),t)= \EE
\langle\na F(u,t),f(u(t)) %+\sum_s b_s\pa_t\beta_s^\om(t)e_s
\rangle+\fr12\EE\sum_s b_s^2\fr{\pa^2}{\pa u_s^2}F(u(t)), 
\ee
in the sense of distributions in $t$ (that is,  (\ref{Ito})   is equivalent  to its integrated in time version).

\subsection{Exponential  $L_2$  moments}

Ito's formula (\ref{Ito}) allows  to get a priori bounds for  solutions of the stochastic equation   (\ref{B2}),  \eqref{xi}. 
Choosing there $F(u)=e^{\sigma'\Vert u\Vert^2}$, where $\sigma'>0$ depends on $\nu $ and $B_0$, we formally derive  from (\ref{Ito}) that 
\be\la{2dies}
\fr \pa{\pa t}\aE e^{\sigma'\Vert u(t)\Vert^2}\le 
C\Big(
\aE e^{\sigma'\Vert u(t)\Vert^2}(-  G \Vert u(t)\Vert^2+B_0)
\Big), \quad t\ge0, 
\ee
where $C$ and $G $ depend on $\nu$ and  $B_0$.
Considering separately the case when 
$- G \Vert u(t)\Vert^2+B_0\ge -1$ and $- G \Vert u(t)\Vert^2+B_0\le -1$, we find that the r.h.s. of 
(\ref{2dies}) is less than 
$-\aE e^{\sigma'\Vert u(t)\Vert^2} + C$. So by the 
Gronwall inequality,
\be\la{Gr}
\aE e^{\sigma'\Vert u(t)\Vert^2}\le 
e^{-t}
\aE e^{\sigma'\Vert u_0\Vert^2}+ C',\qquad t\ge0,
\ee
where $\sigma' =\sigma'(\nu, B_0)$ and  $C'= C' (\nu, B_0)$. We repeat that  to derive this estimate rigorously 
we should apply Ito's formula to solutions of the Galerkin approximations \eqref{BN}
and then pass to a limit as $N\to\infty$, using  Lemma~\ref{l4}. See in \cite{BK}.

\subsection {Moments of higher Sobolev norms }

Applying  (\ref{Ito}) to 
$F(u)=\Vert u\Vert_m^2$ with $m\in\N$, we get that
\be\la{2.4}
\fr\pa{\pa t}\aE \Vert u(t)\Vert_m^2=
-2\nu\aE \Vert u(t)\Vert_{m+1}^2
-\aE\langle u,\pa_x(u_x)^2\rangle_m+B_m, 
\ee
where $\langle\cdot,\cdot\rangle_m$  is the scalar product in
$H^m$. 
From the Gagliardo--Nirenberg inequality (Example~\ref{ex1}.{\bf B}) it follows the estimate
$$
|\langle u,\pa_x u^2\rangle_m|\le C
\Vert u\Vert^{\fr{2m+3}{2m+2}}
\Vert u\Vert^{\fr{4m+3}{2m+2}}_{m+1}, \quad m\in \N.
$$
Applying then   Young's inequality  we get that 
\be\la{Yin}
|\langle u,\pa_x u^2\rangle_m|
\le  \nu \Vert u\Vert_{m+1}^2+C_m(\nu)
\Vert u \Vert^M \le \nu \Vert u\Vert_{m+1}^2+  C(M,\sigma', \nu) e^{\sigma' \| u \|},
\ee
for some $M=M(m)$.
Now from (\ref{2.4}),  (\ref{Yin}), (\ref{Gr}) and Gronwall's
inequality  it follows that
\be\la{2.5}
\aE \Vert u(t)\Vert_m^2\le C_m
\Big[
1+\aE \Vert u_0\Vert_m^2+\aE e^{\sigma'\Vert u_0\Vert^2}
\Big],
\ee
and then -- that 
\be
\nu\aE \int_0^t\Vert u(s)\Vert_{m+1}^2ds\le C_m
\Big[
\aE \Vert u_0\Vert_m^2+\aE e^{\sigma'\Vert u_0\Vert^2}
\Big],\!\!\qquad \qquad\qquad\la{2.6}
\ee
where $C_m$ depends on $m,\nu,\sigma'$, and $B_m$.

\bt\la{t2.3}
If  for some  $m\in\N$  $u_0$ is a r.v. in $H^m$, 
 independent of  $\xi$,   then solution $u(t;u_0)$ satisfies 
(\ref{Gr}), (\ref{2.5}) and (\ref{2.6}) for all $t\ge0$, where the constants depend on $\nu$.
\et
 If $m=0$, then by \eqref{*} the second term in the r.h.s. of 
(\ref{2.4}) vanishes. So, formally,
\be\la{patV}
\fr\pa{\pa t}\aE \Vert u(t)\Vert^2=-2\nu
\aE\Vert u(t)\Vert_1^2+B_0.
\ee
This equality can be rigorously justified,  using Lemma \ref{l4} and 
Theorem~\ref{t2.3} with $m=2$.

\setcounter{equation}{0}
\section{The Markovness}\label{s_5}

\subsection{The law of a solution} 
Let $u_0$ be a r.v. in $H^m$, $m \in \N$, independent of  $\xi$.  Then  a solution $u^\om(t):=u(t;u_0)$  defines a r.v.
$
\om\mapsto u^\om(\cdot)\in X_T^m.
$
Its law 
$\mu:=\cD(u)\in \cP(X_T^m)$ is a (Borel) measure on $X_T^m$,  and  for any function $f\in C_b(X_T^m)$,
$
\ds\int_{X_T^m} f(u)\mu(du)=\aE f(u^\om(\cdot)).
$
 %Recall that $u^\om=\cM(u_0^\om,\xi^\om(\cdot))$ (see \eqref{M}). 
Consider the r.v.
$$
\Psi:\om\mapsto(u_0^\om,\xi^\om(\cdot))\in H^m\times{X}_T^{m_*}.
$$
Its law is a measure $\cD(\Psi)\in\cP(H^m\times{X}_T^{m_*})$.
Since $u_0$ and $\xi$ are independent, then
$
\cD(\Psi)=\cD(u_0)\times\cD(\xi(\cdot)),
$
where $\cD(u_0)\in\cP(H^m)$ and 
$\cD \xi(\cdot)\in\cP(X_T^{m_*})$. So, finally,
\be\la{2.7}
\cD(u(\cdot;u_0))=\cM \circ \big(\cD(u_0)\times \cD(\xi(\cdot))\big),
\ee
where $\cM \circ  \lam$ stands for the image of a measure $\lam$ under the mapping $\cM$ (see \eqref{M}).\footnote{That is, 
$
(\cM\circ \lam)(Q)=\lam(\cM^{-1}(Q))$ for $ Q\in\cB(X_T^m).
$}
Similarly,  for $0\le t\le T$, 
$$
\cD(u(t;u_0))=\cM_t \circ \big(\cD(u_0)\times \cD(\xi(\cdot))\big).
$$
Due to \eqref{2.7} the distribution $\cD\big( u(\cdot; u_0)\big)$ will not
change if in  series \eqref{xi} we replace $\{\beta_s^\om(t)\}$ by another set of 
standard independent Wiener processes.

Note that in the case when the initial function $u_0$ is non-random,  its law is the  delta-measure
$\cD(u_0)=\de_{u_0}$.

\bd
The transition probability for the Burgers equation (\ref{B}) is the mapping
$$
\Si:\R_+\times H^m\to\cP(H^m), \quad 
(t,u_0)\mapsto\Si_t(u_0)=\cM\circ(\de_{u_0}
\times\cD(\xi(\cdot))).
$$
%\ed
For $Q\in\cB(H^m)$ we will write
$
\Si_t(u_0)(Q)=:\Si_t(u_0,Q).
$
\ed

Then for a (non-random) $u_0\in H^m$
we have
$
\Si_t(u_0,Q)=\aP(u(t;u_0)\in Q),
$
and for $f\in C_b(H^m)$
$
\int_{H^m}f(v)\Si_t(u_0,dv)=\aE\big(f\big(u(t;u_0)\big)\big).
$

 For $0 <T\le\infty$ we  denote $$W_T^{m_*} :=\cD(\xi \mid\!_{[0,T]} )\in \cP({X}_T^{m_*}).$$
 This is a Wiener measure on ${X}_T^{m_*}$ (the latter  is a Banach space  for
  $T<\infty$ and is a complete 
separable metric space  for  $T=\infty$).

\subsection {The semigroup in mesures}\label{ss_6.2}
For $t\ge 0$ denote
$$
S_t^*:\cP(H^m) \to \cP(H^m), \quad 
\mu\mapsto S_t^*(\mu) = \cD\big(u(t;u_0)\big),
$$
where $\cD(u_0)=\mu$. Then for any $f\in C_b(H^m)$
and a r.v. $u_0$ with a law $\cD(u_0)= \mu$, we have   that 
$$
\langle f,  S_t^*(\mu)\rangle=\aE f(u(t;u_0))=
\int_{{X}_T^{m_*} \times H^m } f\big(\cM_t(u_0,\xi)\big)
dW_T^{m_*}(\xi)d\mu(u_0).
$$
Applying  Fubini's  theorem, we get
$$
\langle f,  S_t^*(\mu)\rangle=
\int d\mu(u_0)\Big(\int f\big(\cM_t(u_0,\xi)\big)dW_T^{m_*}(\xi)\Big)
=
\int d\mu(u_0)\Big(\int f(v)  \Si_t(u_0,dv)\Big).
$$
It shows that 
\be\la{2.8}
S_t^*(\mu)=\int\Si_t(u_0)d\mu(u_0).
\ee
The r.h.s. is   a measure on $M$ such  that for  any $f\in C_b(H^m)$,
\be\label{from}
\langle f,  \int\Si_t(u_0)d\mu(u_0) \rangle= \int \langle  f,\Sigma_t(u_0)  \rangle d\mu(u_0).
\ee
In the next subsection  we show that the transformations $S_t^*$ form a semigroup. 

%(We have thus defined an integral of a function, valued in measures, ``in a weak sense").
\smallskip

A measure $\mu_\nu \in \cP(H^m)$ is called {\it stationary} 
for  stochastic equation \eqref{B2}, \eqref{xi} if $S_t^* \mu_\nu=\mu_\nu$ 
for all $t\ge0$. In view of Theorem~\ref{t2.3} the well known Krylov--Bogolyubov argument applies to the equation and easily implies 
that for any $m\in\N$
\be\label{KB}
\text{a stationary  measure \ $\mu_\nu \in \cP(H^m)$ \ exists}.
\ee
Moreover, due to \eqref{B_assume}  there exists a stationary 
 measure $\mu_\nu$ such that 
$\mu_\nu(H^\infty) =1$.

\subsection {The Chapman--Kolmogorov  relation and   Markovness
}

Let the initial state $u_0\in H^m$ be non-random.
For  $t_1,t_2\ge 0$ let us  consider  the r.v. 
$u(t_1+t_2; u_0) \in H^m$ and its law 
$\Si_{t_1+t_2}(u_0)\in\cP(H^m)$.
Denote $v(t):=u(t_1+t;u_0)$ and $\zeta^\om(t):=\xi^\om(t_1+t)-\xi^\om(t_1)$. Then $v(t_2)=u(t_1+t_2;u_0)$, and
$$
\pa_t v+vv_x-\nu v_{xx}=\pa_t\xi^\om(t_1+t)
%=\pa_t\big( \xi^\om(t_1+t)-\xi^\om(t_1)\big)
=\pa_t\zeta^\om(t),
\quad t\ge 0;\qquad v(0)= u^\om(t_1; u_0).
$$
We have 
$
\cD(v(0))=\Si_{t_1}\big(u_0\big)=:\mu_1.
$
By (\ref{xi}), 
$
\zeta^\om(t)=\sum b_s\big( \beta_s^\om(t_1+t)-\beta_s^\om(t_1)\big) e_s(x).
$
But the set of random processes $\{\beta_s^\om(t_1+t)-\beta_s^\om(t_1),\; t\ge 0\}$ is another collection  of 
 standard independent Wiener processes. So by (\ref{2.8}),
\begin{equation*}
\ba{ccc}
\cD\big(v(t_2)\big)&=&\ds\int \Si_{t_2}(v_0)d\mu_1(v_0)
\\
\Vert&&\Vert
\\
\cD\big(u(t_1+t_2;u_0)\big)=\Si_{t_1+t_2}(u_0)&&\ds\int \Si_{t_2}(v_0)\Si_{t_1}(u_0;dv_0).
\ea
\end{equation*}
Thus, we have proved the Chapman--Kolmogorov relation
\be\la{K-Ch}
\Si_{t_1+t_2}(u_0)=\int \Si_{t_2}(v)\Si_{t_1}(u_0;dv).
\ee

Now consider $0\le t_1\le t_2\le t_3$ and denote $\De=[0,t_3-t_2]$.
 Take any $F\in C_b(X^m_{t_3-t_2})$. We are going to calculate $\aE F\big( u(t_2+\tau\,;u_0)\big)$ with $\tau\in\De$. Here  $\tau \mapsto u(t_2+\tau\,;u_0)$ is the trajectory, starting from $u_0$
 at $t=0$
and restricted to $[t_2,t_3]$, which is a shifted interval $\De$.
Arguing exactly as when proving (\ref{K-Ch}), 
we get 
\be\la{3.4}
\aE F\big( u(t_2+\tau\,;u_0)\big)=\int_{H^m}
\aE F\big( u(t_2-t_1+\tau\,;v)\big)\Si_{t_1}(u_0;dv),
\ee
where in the r.h.s.  $u(t_2-t_1+\tau\,;v)$
is the trajectory, starting from $v$  at $\tau=0$
and restricted to $[t_2-t_1,t_3-t_1]$ (which is also a shifted interval $\De$).
\medskip

%\subsection{The  Markovness.}\label{ss_6.4}
The Chapman--Kolmogorov relation (\ref{K-Ch}) implies 
that the operators $\{S_t^*:t\ge 0\}$ form a semigroup:
\be\la{3.5}
S_{t_1+t_2}^*=
S_{t_1}^*\circ S_{t_2}^*,\qquad t_1,t_2\ge 0; \qquad S_0^*= \text{id}.
\ee
Indeed, for any $\mu\in\cP(H^m)$  let us integrate  (\ref{K-Ch})
against the measure $\mu(du_0)$. Then by (\ref{2.8}) the l.h.s. is
$ \,
\int_{H^m}\Si_{t_1+t_2}(u_0)\mu(du_0)=S_{t_1+t_2}^*(\mu), \,
$
while the r.h.s. is
$$
\int_{H^m}\Si_{t_1}(v)\int_{H^m} \Si_{t_2}(u_0;dv)\mu(du_0)
=\int_{H^m}\Si_{t_1}(v) [ S_{t_2}^*(\mu) (dv) ]
=S_{t_1}^*\circ S_{t_2}^*(\mu),
$$
so  (\ref{3.5}) is proved. 

This  implies  that solutions $\{u^\om(\tau;u_0): u_0\in H^m\}$ make  a family of Markov
processes in $H^m$ and the semigroup $\{S_t^*:t\ge 0\}$
transforms their distributions.

Apart from transformations $S_t^*$ of the space of measures on $H^m$, solutions $u(t;u_0)$ define linear transformations $S_t$, $t\ge0$, of the space $C_b(H^m)$:
$$
S_t:  C_b(H^m) \to C_b(H^m), \quad S_t(f)(v) = \EE f(u(t;v)) =\lan f, \Sigma_t (v)\ran.
$$
(The function $v\mapsto \EE f(u(t;v))$  obviously is bounded, and it 
 follows from the continuity of the mappings $\cM_t$ that it is continuous). 
From \eqref{from} it follows that the transformations $S^*_t$ and $S_t$ obey the duality relation:
\be\label{dual}
\langle S_t f, \mu\rangle = \langle f, S_t^*\mu\rangle. 
\ee
It is  easily seen from \eqref{3.5} and \eqref{dual} that the transformations  $S_t$ also  form a semigroup. 

The Markovness of the random process, defined by solutions of the stochastic Burgers equation, 
 is as well a   characteristic feature  for  other  well-posed
stochastic PDEs, e.g. for the stochastic 2d Navier--Stokes
equations; see \ci{DZ, Oks,KS}. Now we move to results which are specific for 
the Burgers equation (\ref{B}).

\setcounter{equation}{0}
\section {Improving  upper estimates via Oleinik maximum principle.}\label{s_olei}

Let $u(t,x)$ solves the free Burgers equation:
\be\la{fB}
u_t(t,x)+u u_x-\nu u_{xx}=0,\qquad t\ge 0;
\quad
u(0,x)=u_0(x), %\in C^\infty (S^1).
\ee
and assume first that $u_0\in C^\infty (S^1)$. 
Then $u(t,x)$ is a smooth function. 
   Differentiating \eqref{fB} in $x$, multiplying by $t$  and denoting  $w=tu_x$ 
we get:
\be\la{(*)}
(w_t-u_x)+tu_x^2 +uw_x=\nu w_{xx},\qquad w(0,x)=0.
\ee
Consider the function $w(t,x)$ on the cylinder $Q=[0,T]\times S^1$ and denote
$$
M=\max w|_Q.
$$
since $\ds\int w(t,x)dx=0$, then $M\ge 0$.
If $M=0$ then  $u(t,x)\equiv 0$. This   is a trivial case.
Now let $M>0$.  Then the  maximum $M$
is attained at a point $(t_1,x_1)$ with $t_1>0$. By the criterion of maximum, 
\be\la{(**)}
w_t\ge 0, \quad w_x=0,\quad w_{xx}\le 0 \qquad{\rm at}\,\,\,(t_1,x_1).
\ee
From (\ref{(*)}) and  (\ref{(**)}) we obtain that at  $(t_1,x_1)$ we have
$\ 
-u_x+tu_x^2=\nu w_{xx}-w_t\le 0.
$
Multiplying by $t$ and using that $tu_x=w$  we get that 
$
-w+w^2\le 0$ at $ (t_1,x_1).
$ 
Hence,
$
-M+M^2\le 0$, or $M(M-1)\le 0.$
Since $M > 0$, then $M\le 1$. Thus, we got
\smallskip\\
{\bf $1^\circ$.} If  $u_0\in C^\infty\cap H$, then
\be\la{(dies)}
t u_x(t,x)\le 1,\qquad \forall\, t\ge 0,\,\,\,\forall\, x\in S^1.
\ee
Approximating $u_0\in H^1$ by smooth functions, we get that
\smallskip\\
{\bf $2^\circ$.} If  $u_0\in H^1$, then 
(\ref{(dies)}) still holds.

As usual,  we write a function $v(x)$ as 
$
v=v^+-v^-,
$
where  $v^+(x)=\max(0,v(x))$ and $v^-(x)=\max(0,-v(x)).$
\bl 
\la{lemma}
Let $v\in C^1(S^1)$, $\ds\int vdx=0$, and $v_x(x)\le C_*$, $C_*\ge0$. 
Then 
\be\la{lab} 
|v|_\infty\le C_*,\qquad |v_x|_1\le 2C_*.
\ee
\el
\bpr
Since $v(x)$ is periodic with zero mean, then $v(x_0)=0$
for some $x_0\in S^1$. Therefore, $v(x)=\ds\int_{x_0}^x v_y(y)dy\le C_*$ for $x\in [x_0,x_0+1]$.
Similarly, $-v(x)=\ds\int_x^{x_0} v_y(y)dy\le C_*$ for $x\in [x_0-1,x_0]$. Thus, the first inequality of (\ref{lab}) is proved.
For the second inequality, note that $\ds\int (v_x)^+dx\le C_*$. Since 
$\ds\int_{S^1} v_x(x)dx=0$, then  $\ds\int_{S^1} (v_x)^-dx=\ds\int_{S^1} (v_x)^+dx$. This  implies the second inequality as 
$ |v_x| ={v_x}^+ +{v_x}^-$. 
\epr

The result below was proved by O.A. Oleinik \cite[Section~7]{Olei} for solutions of problem \eqref{B2} with a regular force $\eta$. Its proof 
in \cite{BK} which we now discuss follows the argument in \cite{Olei}. 
\bt\la{t3.1}
Let $u(t,x)$ solves non-random problem (\ref{B2}) with
$\xi\in {X}_T^4$ and $u_0\in H^1$. Then for  all $\nu$ and any $T>0$, $\theta\in (0,T]$, we have 

i) $\sup_{\theta\le t\le T}|u_x^+(t,\cdot)|_\infty
\le 4\theta^{-1}(1+T|\xi|^{1/2}_{X_T^4}+T|\xi|_{X_T^4})$, 

ii) $\sup_{\theta\le t\le T}|u(t,\cdot)|_\infty
\le 4\theta^{-1}(1+T|\xi|^{1/2}_{X_T^4}+T|\xi|_{X_T^4})$,

iii) $\sup_{\theta\le t\le T}|u_x(t,\cdot)|_1
\le 8\theta^{-1}(1+T|\xi|^{1/2}_{X_T^4}+T|\xi|_{X_T^4}).$
\et

\bpr
 It suffices to prove i) since ii) and iii) follow from it by the lemma
above.  If $\xi=0$, then i) is already  proved. 
 If $\xi\ne 0$ and $u_0\in C^\infty$, write $u=\xi+v$. Then
$\ 
v_t+uu_x-\nu v_{xx}=\nu\xi_{xx}.
$
Denote $w=tv_x$ and consider $w$ as a function on 
the cylinder $Q=[0,T]\times S^1$ with $M=\max w|_Q$.
Then again the criterion of maximum  (\ref{(**)}) holds at a point
$(t_1,x_1)$ where $M$ is achieved. Arguing as above when proving (\ref{(dies)}), we get i). 

For $u\in H^1$, we approximate it in $H^1$ by $u_0\in C^\infty$ and pass to a limit. See  \ci{BK} for details.
\epr

Now let us pass to the stochastic Burgers equaution. 

\bt\la{t3.2}
If $u_0\in H^1$ is a r.v. (independent of  $\xi$) and $u^\om(t)=u^\om(t;u_0)$, then for any 
$ p\in[1,\infty)$,  $0<\theta\le 1$ and $ T\ge \theta$
we have:

i) $\aE\sup\limits_{T\le t\le T+1}|u_x^+(t)|^p_\infty \le C\theta^{-p}$, 

ii) $\aE\sup\limits_{T\le t\le T+1}
\big[|u(t)|^p_\infty+|u_x(t)|^p_1\big]\le C\theta^{-p}$, 
\noindent 
where $C=C(p, B_4)$. 
\et
\bpr
By Lemma \ref{lemma}, it suffices to prove i).
For any $v(t,x)$ denote 
$
\Phi(v)=\sup\limits_{\theta\le t\le\theta+1}
|(v_x)^+(t,\cdot)|_\infty^p.
$
Further we consider the cases $T=\theta$ and $T>\theta$
separately.
\smallskip\\
{\bf The case $T=\theta$.} By i) in Theorem \ref{t3.1},
$
\Phi(u^\om)\le C_p\theta^{-p}(1+|\xi^\om|^p_{X^4_{\theta+1}}) \ 
$
for each $\om$. 
So by Proposition~\ref{p1}, 
\be\la{est}
\aE \Phi(u^\om)\le C_p(B_4) 
\theta^{-p}. %(1+|\xi^\om|^p_{X^4_{\theta+1}}).
\ee
This proves i) with $T=\theta$ for a non-random $u_0$. 
If $u_0$ is random, then we integrate   estimate  (\ref{est}) over $\cD(u_0)$ and again get i). 
 \smallskip\\
{\bf The case $T>\theta$.} Now $T=\vka+\theta$ where 
$\vka>0$. By the Chapman--Kolmogorov relation  in the form (\ref{3.4}),
the l.h.s. of i) with a non-random $u_0$ is
$$
\int_{H^1}\Si_\vka(u_0;dv)\,\aE\Phi\big(u(t;v)\big)
\le C\theta^{-p}, 
$$
where we used (\ref{est}).
If $u_0$ is random, then integrating over $\cD(u_0)$  we complete the proof.
\epr 

This is the first main upper bound. 
Now we pass to the second.

\bt\la{t3.3}
Let $m\in\N$, $\theta>0$ and $u_0\in H^1$ be non-random. Then
there exists a constant $C_m^*(\theta, B_{\max(4,m)} )$
such that 
\be\la{*3.3}
\aE\Vert u(t;u_0)\Vert_m^2\le C_m^*\nu^{-(2m-1)}
%{\color{red} degree -(2m-1) \,\,is\,\, correct ????}
=:Y,\qquad t\ge \theta.
\ee
Estimate \eqref{*3.3}  remains true if $u_0\in H^1$ is a r.v., independent of  $\xi$. 

 \et

\bpr
Consider first the smooth 
case when $u_0\in C^\infty\cap H$.  Denote $u(t;u_0) =: u(t)$. 
Then by \eqref{2.4} 
\be\la{3.6}
 \aE \Vert u(t)\Vert_m^2- \aE \Vert u_0\Vert_m^2
 =\int_0^t\big(
-2\nu\aE \Vert u(s)\Vert_{m+1}^2
-\aE\langle u(s),\pa_x(u_x(s))^2\rangle_m+B_m\big) ds.
\ee
By the Gagliardo--Nirenberg inequality (\ref{exC})  we easily get that
\be\la{3.7}
|\langle  u, \pa_x u^2\rangle_m|\le C_m |u|_{1,1}^{\fr{2m+3}{2m+1}} \Vert  u \Vert_{m+1}^{\fr{4m}{2m+1}},\qquad |u|_{1,1}=|u_x|_1.
\ee
Let $t\ge\theta$ and $u=u(t)$. Then by Theorem~\ref{t3.2}.ii) and  H\"older's inequality,
\be\la{3.8}
\aE \Big(|u|_{1,1}^{\fr{2m+3}{2m+1}} \Vert  u \Vert_{m+1}^{\fr{4m}{2m+1}}\Big)
\le
\Big(\aE|u|_{1,1}^{2m+3}\Big)^{\fr1{2m+1}} 
\Big(\aE\Vert  u \Vert_{m+1}^2\Big)^{\fr{2m}{2m+1}}
  \le 
C_m'
\Big(\aE\Vert  u \Vert_{m+1}^2\Big)^{\fr{2m}{2m+1}}.
\ee
Denote $X_m(t)=\aE\Vert u(t)\Vert_m^2$ for $m\in\N$.
Then by (\ref{3.6}) and (\ref{3.7})--(\ref{3.8}) 
\be\la{3.9}
\fr d{dt} X_m\le B_m-2\nu X_{m+1}  +
 C_m X_{m+1}^{\fr{2m}{2m+1}}.
\ee
Using again  (\ref{exC}),    H\"older's inequality
 and Theorem~\ref{t3.2}.ii)  we obtain  
$$
X_m(t) \!=\!\aE \| u(t)\|_m^2 \!\le\! C_m  \aE \| u(t)\|_{m+1}^{2\kappa_m}   | u(t)|_{1,1}^{2(1\!-\!\kappa_m)} 
\!\le \!C_m \Big(  \aE \| u(t)\|_{m\!+\!1}^{2}\Big)^{\!\kappa_m}
\Big(  \aE | u(t)|_{1,1}^{2}\Big)^{\!1\!-\!\kappa_m}\!, \,\,\, \kappa_m\!=\!\tfrac{2m\!-\!1}{2m\!+\!1}.
$$
So in view of Theorem \ref{t3.2}.ii), 
$$
X_m(t)\le C_m(\theta, B_{\max(4,m)} )  X_{m+1}^{\kappa_m}(t), %\quad\text{for} 
\qquad t \ge \theta.
$$
  Thus, 
\be\la{3.10}
X_{m+1}(t)\ge C_m^{''} X_m^{\fr{2m+1}{2m-1}}(t),
\qquad t\ge\theta.
\ee
Let us consider the cases $X_m(\theta)=\aE\Vert u(\theta)
\Vert_m^2 <Y$ and $X_m(\theta)\ge Y$ separately.
\smallskip\\
{\bf Case  $X_m(\theta) <Y$.} If (\ref{*3.3}) is wrong, find the first moment of time $\tau>\theta$ when $X_m(\tau)=Y$. Then 
$
\fr d{dt} X_m(\tau)\ge 0$ and $ X_m(\tau)=Y.
$
Plugging this into (\ref{3.9}) 
 with $t=\tau$ we get that 
$$
0 \le B_m-  X_{m+1}^{\fr{2m}{2m+1}} (\tau)\big( 2\nu   X_{m+1}^{\fr{1}{2m+1}} (\tau) -C_m\big).
$$
 But by \eqref{3.10} we have
$
X_{m+1} \ge C_m{''} (C_m^*)^{\frac{2m+1}{2m-1}} \nu^{-(2m+1)}
$
since $X_m(\tau) = Y$. Therefore 
$$
B_m \ge  C_m^{(1)} ( C_m^*)^{\frac{2m}{2m-1}} \nu^{-2m} \big(  C_m^{(2)} \big( C_m^*)^{\frac1{2m-1}} -C_m\big), 
$$
where the constants $ C_m,  C_m^{(1)}$ and $ C_m^{(2)}$ depend on $\theta$ and $B_{\max(4,m)} $, but not 
 on $C_m^*$.  Since $\nu\le1$, then choosing $C_m^*$ in
\eqref{*3.3} sufficiently big we get a contradiction, which proves \eqref{*3.3}. 
\smallskip\\
{\bf Case  $X_m(\theta)\ge Y$.} The proof in this case is similar, but a bit more involved; see \ci[Section~2.2]{BK}.
There we show that now (\ref{*3.3}) still  holds
for $t\ge 3\theta$, for any positive 
 $\theta$. Re-denoting $3\theta =: \theta$ we get the assertion, if $u_0$ is smooth in $x$.

Finally, for $u_0\in H^1$   we approximate it by  the Galerkin projections  $\Pi_N u_0$. The latter are smooth functions, so for them the 
estimate is already proved. Passing  to a limit as $N\to\infty$ using  Theorem~\ref{t3} and 
 Fatou's lemma we recover \eqref{*3.3} for $u_0$. 
 
 The validity of the last assertion follows  by integrating (\ref{*3.3}) against  the measure $\cD(u_0)$. 
  \epr

 In particular, we see that for any 
 $u_0\in H^1$   solution $u(t;u_0)$ is such that for each  $\theta>0$, 
$
u(\theta; u_0) \in H^\infty = C^\infty \cap H$,  a.s.

\setcounter{equation}{0}
\section {Lower bounds}\label{s_8}
We recall that the rate of energy dissipation $\epsilon^K$ for a flow of real fluid is defined in \eqref{Krate}  (see  more in \cite{LL, F, Fal}). 
Accordingly, for a flow $u^\nu(t,x)$ of ``Burgers fluid" it should be defined as 
$$
\epsilon^B = \nu  \EE\int | u_x^\nu(t,x)|^2dx.
$$
Upper bound for $\epsilon^B$  follows from \eqref{*3.3} with $m=1$. 
Our first goal in this section is to estimate this quantity from below. To do that let  us integrate in time  the balance of energy relation  \eqref{patV} from $T$ to $T+\sigma$: 
 \be\label{en_bal}
 {\mathbb E} \int \tfrac12 |u(T+\sigma, x)|^2dx -  {\mathbb E} \int\tfrac12 |u(T, x)|^2dx  + \nu {\mathbb E} \int_T^{T+\sigma} \!\!\!\int |u_x(s,x)|^2dxds = 
\tfrac12 \sigma B_0,
\ee
where $T, \sigma >0$. 
Let $T\ge1$. Then by  Theorem \ref{t3.1}  the first two terms are bounded by a constant $C_*$, which depends 
only on the random force. If $ \sigma \ge \sigma_* = 4C_*/B_0$, then  $C_*\le \tfrac14 \sigma B_0$ and   we get from \eqref{en_bal} 
 a lower bound  for the rate of  energy dissipation
 locally averaged in time:
 \be\la{en_diss}
   \nu {\mathbb E}\, \frac1{\sigma}  \int_T^{T+\sigma} \!\!\!\int |u^\nu_x(s,x)|^2dxds \ge \tfrac14 B_0.
 %4\,\frac{C_*}{\sigma}.  
\ee

\noindent {\it 
 Notation.}
For a  random process $f^\om(t)$ we  denote by the brackets 
 $\llangle f \rrangle $ its averaging  in ensemble and local  averaging  in  time, 
 \be \label{average}
\llangle f \rrangle =  {\mathbb E}  \, \frac1{\sigma} \int_T^{T+\sigma} f(s) \,ds, 
\ee
 where  $ T\ge1$ and $\sigma\ge \sigma_*$ are parameters of the brackets. 
 \smallskip
 
 Note that  $\llangle f \rrangle $ is the expectation of $f$, regarded as a r.v. 
   on the extended probability space 
 $
 Q= [T, T+\sigma] \times \Omega, 
 $ 
 where the interval $ [T, T+\sigma] $ is provided with the normalised measure $dt/\sigma$. Obviously if $f$ is a stationary process, then 
 $
 \llangle f \rrangle = \EE f(t),
 $
 for any $t$.

  In this notation we have just  proved that 
 $
\llangle \| u^\nu\|_1^2\rrangle \ge \nu^{-1} \,\tfrac12 B_0.
$
But by  Theorem 1 \  $\llangle \| u^\nu\|_1^2\rrangle \le \nu^{-1} C$.    So
\be\label{neww}
\llangle \| u_x^\nu \|_{0}^2\rrangle =
\llangle \| u^\nu \|_1^2\rrangle \sim \nu^{-1}, 
\ee
where $\sim$ means that the ratio of two quantities is bounded from below and from above uniformly in $\nu$ and in the constants 
 $T\ge1$, $\sigma\ge \sigma_*$,
entering the definition of the brackets. A similar  relation  for 3d turbulence 
is one of the  basic  postulates of K41. See  \cite[Section~23]{LL},    \cite[Section~7.3]{F} and \cite[Section~2.2.2]{Fal} 
for more detailed explanation.

Now the 
 Gagliardo-Nirenberg   inequality (Example \ref{ex1}.C)  and   Oleinik's estimate  imply:
 $$
\llangle \| u_x^\nu\|_{0}^2\rrangle
\overset{G-N}{ \le} C'_m \llangle \| u^\nu\|_m^2\rrangle^{\frac{1}{2m-1}}  \llangle | u^\nu_x|_1^2\rrangle^{\frac{2m-2}{2m-1}} \overset{Oleinik}{ 
\le} C_m \llangle \| u^\nu\|_m^2\rrangle^{\frac{1}{2m-1}} .
$$
 From here and \eqref{neww} follows 
  a lower bound for $ \| u^\nu\|_m$: \ 
 $
\llangle \| u^\nu\|_m^2\rrangle \ge C''_m \nu^{-(2m-1)}$
for all  $m\in \N.
$
Combining it with the upper bound in Theorem 3.3, we get:

\begin{theorem}\label{t4.1}
   ({\it second moments of  Sobolev norms of solutions}).  Let $m\in\N$. Then for     any $u_0$ in $H^1$, 
    \be\label{main_est}
      \llangle \| u^\nu(t;u_0) \|_m^2\rrangle
       \sim 
       \nu^{-(2m-1)} .
  \ee
  \end{theorem}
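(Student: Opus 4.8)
The plan is to derive the two halves of \eqref{main_est} by synthesising results already established rather than starting afresh: the upper bound is essentially immediate from Theorem~\ref{t3.3}, and the lower bound is the combination of the energy balance \eqref{patV} with a Gagliardo--Nirenberg interpolation exploiting Oleinik's $L_1$-estimate for $u_x^\nu$ (Theorem~\ref{t3.2}.ii). In fact the ingredients have all appeared in the paragraphs preceding the statement, so the "proof" amounts to assembling them and checking uniformity in $\nu$.

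\textbf{Upper bound.} Fix $\theta\le1$ in Theorem~\ref{t3.3}. Then $\EE\|u(t;u_0)\|_m^2\le C_m^*\nu^{-(2m-1)}$ for every $t\ge\theta$, in particular for every $t$ in the averaging window $[T,T+\sigma]$ of the brackets, since $T\ge1\ge\theta$. Averaging this pointwise-in-$t$ bound against $dt/\sigma$ over $[T,T+\sigma]$ (and, if $u_0$ is random, integrating also against $\cD(u_0)$, exactly as in the last line of Theorem~\ref{t3.3}) gives $\llangle\|u^\nu\|_m^2\rrangle\le C_m^*\nu^{-(2m-1)}$, uniformly in $\nu$, in $T\ge1$ and in $\sigma\ge\sigma_*$.

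\textbf{Lower bound.} I would first record the case $m=1$, which is relation \eqref{neww}: integrating \eqref{patV} over $[T,T+\sigma]$ produces \eqref{en_bal}; by Theorem~\ref{t3.1} (applicable since $T\ge1$) the two endpoint energy terms are bounded by a force-dependent constant $C_*$, so taking $\sigma\ge\sigma_*=4C_*/B_0$ absorbs them into $\tfrac14\sigma B_0$ and yields \eqref{en_diss}, i.e.\ $\llangle\|u^\nu\|_1^2\rrangle\ge\tfrac12 B_0\,\nu^{-1}$ — and here the hypothesis $B_0>0$ from \eqref{B_assume} is indispensable. For general $m\ge2$ I would bootstrap off this via Example~\ref{ex1}.C: $\|u_x^\nu\|_0^2\le C_m'\|u^\nu\|_m^{2/(2m-1)}|u_x^\nu|_1^{2(2m-2)/(2m-1)}$; taking the brackets, applying H\"older's inequality on the extended probability space $Q=[T,T+\sigma]\times\Om$, and using the Oleinik bound $\llangle|u_x^\nu|_1^2\rrangle\le C$ (Theorem~\ref{t3.2}.ii) gives $\llangle\|u_x^\nu\|_0^2\rrangle\le C_m\llangle\|u^\nu\|_m^2\rrangle^{1/(2m-1)}$. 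Since $\|u_x^\nu\|_0=\|u^\nu\|_1$, combining with the $m=1$ lower bound yields $\tfrac12 B_0\,\nu^{-1}\le C_m\llangle\|u^\nu\|_m^2\rrangle^{1/(2m-1)}$, hence $\llangle\|u^\nu\|_m^2\rrangle\ge C_m''\,\nu^{-(2m-1)}$.

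\textbf{Main obstacle.} The one genuinely substantive point is the lower bound on the dissipation rate: the boundary energy terms in \eqref{en_bal} must stay bounded as $\nu\to0$, which is exactly what Oleinik's maximum principle (Theorems~\ref{t3.1}--\ref{t3.2}) provides, and the averaging must use a long window ($\sigma\ge\sigma_*$) starting after the transient ($T\ge1$). The softer subtlety is that the Gagliardo--Nirenberg step has to be used in the favourable direction — estimating $\|u_x\|_0$ from above in terms of $\|u\|_m$ to a \emph{small} power times the low norm $|u_x|_1$, which Oleinik controls uniformly in $\nu$ — so that the $\nu^{-1}$ lower bound already known for $\llangle\|u^\nu\|_1^2\rrangle$ propagates to $\llangle\|u^\nu\|_m^2\rrangle$ with precisely the exponent $2m-1$, matching the upper bound from Theorem~\ref{t3.3}.
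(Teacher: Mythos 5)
Your proposal is correct and follows essentially the same route as the paper: the upper bound by averaging the pointwise-in-$t$ estimate of Theorem~\ref{t3.3}, and the lower bound by deriving $\llangle\|u^\nu\|_1^2\rrangle\ge\tfrac12 B_0\nu^{-1}$ from the energy balance \eqref{patV}--\eqref{en_bal} with the Oleinik control of the endpoint terms, then propagating it to general $m$ via the Gagliardo--Nirenberg interpolation \eqref{exC} combined with H\"older and $\llangle|u^\nu_x|_1^2\rrangle\le C$, giving $\llangle\|u^\nu\|_1^2\rrangle\le C_m\llangle\|u^\nu\|_m^2\rrangle^{1/(2m-1)}$. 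The only cosmetic point is that the boundedness of the endpoint energy terms really rests on the stochastic Oleinik estimate (Theorem~\ref{t3.2}.ii) rather than the deterministic Theorem~\ref{t3.1}, a citation you inherited from the text itself.
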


  For $m=0$ behaviour of the norm $\|u\|_0$ is different: 
  \be\label{m=0}
  \lann \|u^\nu\|_0^2\rann \sim 1.
  \ee  
  The upper bound $ \lann \|u^\nu\|_0^2\rann \le C$ follows from Theorem~\ref{t3.2}.ii). Proof of the lower bound see in
  \cite[Theorem~2.3.15]{BK}.
  \smallskip

  As we will see in next sections,   Theorem \ref{t4.1}   and the  Oleinik  estimates  are  powerful and 
 efficient tools  to study  turbulence in  1d Burgers equation \eqref{B}.

 Relation \eqref{main_est} immediately implies that 
 $\ 
 \frac{\ln  \llangle \| u^\nu\|_m^2\rrangle}{ \ln \nu^{-1}} = 2m-1 +o(1)$
 as 
 $
 \nu\to0.
 $
 But can  \eqref{main_est}  be improved to a real asymptotic for $ \llangle \| u^\nu\|_m^2\rrangle$?

\noindent {\it 
  Open problem. }Prove (or disprove)  that for $m\in\N$, 
  $  \llangle \| u^\nu\|_m^2\rrangle$ admits an asymptotic expansion: 
$$
   \llangle \| u^\nu\|_m^2\rrangle = C_m  \nu^{-(2m-1)}  +o( \nu^{-(2m-1)} ) \quad \text{as}\;\; \nu\to0, 
 $$
 for some $C_m>0$. 
 \smallskip
 
 If $u^\nu(t,x)$ is a stationary  in time solution of \eqref{B}, \eqref{xi},
  then from the energy balance \eqref{en_bal} we get that 
 $
  \llangle \| u^\nu\|_1^2\rrangle = B_0 \nu^{-1}.
 $
 So in the stationary case asymptotic  above  is valid  for $m=1$ with $C_1=B_0$.  This is the only situation when we know its validity.

\setcounter{equation}{0}

\section{Turbulence in 3d and 1d}\label{s_burgulence} 

 Now we will  discuss  main heuristic   laws of turbulence and  the 
  K41 theory in their relation  with rigorous results  for  1d turbulence, described by 
   the stochastic Burgers equation \eqref{B},~\eqref{xi}. The results for  1d turbulence 
    will be derived  from the theorems, obtained above.    In this section   we assume that
\be\label{we_assume}
\text{ $u(t) = u(t,x;u_0)$ where $u_0\in H^1$ is a r.v., independent of  } \xi.
\ee
 While speaking about 
K41 we always assume that the corresponding  flows  $u(t,x)$  are 1-periodic in each $x_j$, stationary in $t$,
homogeneous in $x$ 	and satisfy \eqref{Kenergy} and \eqref{Krate}.

 \subsection{Dissipation scale.}\label{ss_9.1}
 The essence of the K41  theory is  analysis of properties of  turbulent  flows  that
are uniform in small values of viscosity $\nu$ (in our choice of units, then the Reynolds number equals $\nu^{-1}$). 
Crucial for    Kolmogorov's  analysis are the  concepts of dissipation and inertial ranges.  The  {\it dissipation range} 
is the region of   wavelengths corresponding to  predominance of the dissipation term in the 
Navier-Stokes equations, while the complementary  {\it inertial range} is characterised by  predominance of the inertial term. 
The two ranges may be defined in the Fourier-  or in  $x$-presentation.  For our  purposes  we choose the first  option.
\medskip

 \noindent {\it 
   In  1d turbulence.}
  Now we present a rigorous theory   of the  ranges in the  context of   Burgers equation (\ref{B}).
  Let us write its solution   $u(t,x)$ as Fourier series 
$$
u(t,x)=
{ \sum}_{s= \pm1, \pm2, \dots} \hat u_s(t) e^{2\pi i sx}. 
$$
 The ranges of $u$ as a 1d turbulent flow are defined via   its {\it dissipation scale}  $l_d$, a.k.a. 
      {\it inner scale}. 
      We define it in the Fourier     representation
 as  the biggest number of the form 
 $$
 l_d(\nu)=\nu^{\gamma}, \quad \gamma>0
 $$
 (corresponding to the smallest possible exponent $\gamma>0$),  such that for $|s| >l_d^{-1}$ the second moments of 
 Fourier coefficient  $\llangle | \hat u_s(t)|^2\rrangle $ 
   as a function of $|s|$ decays    fast,  {\it uniformly in $\nu$.}

   To be precise, for a solution $u=u^\nu(t,x)$ of \eqref{B}, \eqref{xi} 
      let $\Gamma$ denote the set of all real numbers $\gamma'>0$ such that
   $$
   \forall N\in \N \; \; \exists  \ C >0 \;\; \text{such that} \;\;  \llangle | \hat u_s|^2\rrangle\le C  |s|^{-N} \;
    \; \forall\, |s| \ge \nu^{-\gamma'}, \;\;
   \forall \nu \in (0,1]\,,
   $$
   where $C$ depends only on $N$ and $\gamma'$. 

   \begin{defin}
   Mathematical dissipation scale     $l_d=l_d(\nu)$
   (if it exists) equals  $\nu^{\gamma}$, where $\gamma>0$ is defined as 
   $
   \gamma = \inf \Gamma.
   $
      If $\Gamma=\emptyset$ or   $ \inf \Gamma=0$, then  $l_d$ is not defined.  
    \end{defin}

 We emphasize that  $l_d$ depends on $\nu$.   So 
    the concept of {\it dissipation scale} concerns the 
   {\it family of solutions} $u=u^\nu(x,t;u_0)$ for eq.~(\ref{B}),  parameterized by $\nu$ (and depending on  a fixed 
    initial state $u_0$).   Note also that  
   $
   l_d=l_d(\nu)$  vanishes with $\nu$.

      Theorem \ref{t4.1} relatively easy  implies (see in \cite{BK}):
      
      \begin{theorem}\label{t4.2}    The  dissipation  scale $l_d$  of any  solution    $u^\nu(t; u_0)$ of \eqref{B}  equals $\nu$.
     \end{theorem}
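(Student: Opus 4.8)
The plan is to establish two matching bounds: that $\gamma = 1$ lies in the set $\Gamma$ (so $\inf\Gamma \le 1$), and that no $\gamma' < 1$ belongs to $\Gamma$ (so $\inf\Gamma \ge 1$). Together these give $\gamma = \inf\Gamma = 1$, hence $l_d(\nu) = \nu$.

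For the upper bound $\inf \Gamma \le 1$, I would fix $N \in \N$ and use Theorem~\ref{t4.1} with Sobolev exponent $m$ chosen large enough that $2m - 2 \ge N$. Since $\llangle |\hat u_s|^2\rrangle \le (2\pi|s|)^{-2m}\llangle \|u^\nu\|_m^2\rrangle$ (from the Parseval-type identity $\|v\|_m^2 = \sum |2\pi s|^{2m}|v_s|^2$, noting $|v_s|^2 = |\hat u_s|^2 + |\hat u_{-s}|^2$ up to constants), Theorem~\ref{t4.1} gives $\llangle |\hat u_s|^2\rrangle \le C_m (2\pi|s|)^{-2m} \nu^{-(2m-1)}$. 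For $|s| \ge \nu^{-1}$ we have $\nu^{-(2m-1)} = \nu \cdot \nu^{-2m} \le \nu |s|^{2m} \cdot |s|^{0}$... more carefully: $\nu^{-2m} \le |s|^{2m}$, so $\llangle |\hat u_s|^2\rrangle \le C_m' |s|^{-2m} \cdot |s|^{2m} \cdot \nu = C_m' \nu |s|^{0}$ — this is too crude. Instead split one power: $\nu^{-(2m-1)} \le |s|^{2m-1}$ when $|s| \ge \nu^{-1}$, giving $\llangle |\hat u_s|^2\rrangle \le C_m' |s|^{-2m}|s|^{2m-1} = C_m' |s|^{-1}$, still not enough. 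The right move is to use a slightly smaller threshold is not allowed; rather, use $m$ together with $m+1$: for $|s| \ge \nu^{-\gamma'}$ with $\gamma' > 1$, one gains the extra decay. Concretely, pick $\gamma' = 1 + \delta$ for small $\delta > 0$; then for $|s| \ge \nu^{-(1+\delta)}$ we get $\nu^{-(2m-1)} \le |s|^{(2m-1)/(1+\delta)}$, so $\llangle |\hat u_s|^2\rrangle \le C_m' |s|^{-2m + (2m-1)/(1+\delta)}$, and the exponent $-2m + (2m-1)/(1+\delta) \to -\infty$ as $m \to \infty$, beating $-N$ for $m$ large. Thus every $\gamma' > 1$ is in $\Gamma$, so $\inf\Gamma \le 1$.

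For the lower bound $\inf\Gamma \ge 1$, I would argue by contradiction: suppose some $\gamma' < 1$ lies in $\Gamma$, so $\llangle |\hat u_s|^2 \rrangle \le C |s|^{-N}$ for all $|s| \ge \nu^{-\gamma'}$ and all $\nu$, uniformly. Then estimate $\llangle \|u^\nu\|_m^2 \rrangle = \sum_s |2\pi s|^{2m}\llangle|\hat u_s|^2\rrangle$ by splitting at $|s| = \nu^{-\gamma'}$. The low-frequency part $\sum_{|s| < \nu^{-\gamma'}} |2\pi s|^{2m}\llangle|\hat u_s|^2\rrangle$ is bounded, using $\llangle|\hat u_s|^2\rrangle \le \llangle\|u^\nu\|_0^2\rrangle \le C$ from \eqref{m=0}, by $C \sum_{|s|<\nu^{-\gamma'}} |s|^{2m} \le C' \nu^{-\gamma'(2m+1)}$. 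The high-frequency tail $\sum_{|s| \ge \nu^{-\gamma'}} |s|^{2m}|s|^{-N}$ converges (choosing $N > 2m+1$) and is $O(\nu^{-\gamma'(2m+1-N)})$, which is bounded. Hence $\llangle\|u^\nu\|_m^2\rrangle \le C'' \nu^{-\gamma'(2m+1)}$. But Theorem~\ref{t4.1} forces $\llangle\|u^\nu\|_m^2\rrangle \ge c\,\nu^{-(2m-1)}$, so $\nu^{-(2m-1)} \le C \nu^{-\gamma'(2m+1)}$ for all $\nu \in (0,1]$, i.e. $2m-1 \le \gamma'(2m+1)$, i.e. $\gamma' \ge (2m-1)/(2m+1)$. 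Since this must hold for all $m \in \N$ and $(2m-1)/(2m+1) \to 1$, we get $\gamma' \ge 1$, contradicting $\gamma' < 1$. Therefore $\inf\Gamma \ge 1$, and $l_d(\nu) = \nu$.

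The main obstacle is bookkeeping the threshold-versus-exponent tradeoff in the upper bound correctly: the naive estimate at threshold exactly $|s| \ge \nu^{-1}$ does not immediately yield arbitrarily fast decay, so one must exploit that $\Gamma$ only requires the decay bound beyond $\nu^{-\gamma'}$ for $\gamma'$ strictly larger than the candidate exponent, which buys the room to absorb the $\nu^{-(2m-1)}$ factor and let $m \to \infty$. The lower bound is more robust but relies essentially on \eqref{m=0} to control the low-frequency contribution — without an a priori bound on $\llangle\|u^\nu\|_0^2\rrangle$ the split would fail.
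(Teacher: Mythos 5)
Your proof is correct and follows exactly the route the paper indicates: the paper omits the argument, stating only that Theorem~\ref{t4.1} ``relatively easily implies'' the result (deferring to \cite{BK}), and your two-sided argument — using the upper bound $\llangle\|u^\nu\|_m^2\rrangle\le C_m\nu^{-(2m-1)}$ with $m\to\infty$ to place every $\gamma'>1$ in $\Gamma$, and the matching lower bound together with \eqref{m=0} to exclude every $\gamma'<1$ — is precisely that derivation. The key observation you isolate, that the strict inequality $\gamma'>1$ is what buys the room to absorb $\nu^{-(2m-1)}$ into a power of $|s|$ strictly less than $2m-1$, is the essential point.
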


 In physics,   dissipation scale   is defined modulo a constant factor, so for eq.  \eqref{B} the physical dissipation scale is 
$C l_d=C\nu$.

  For the Burgers equation, Burgers himself predicted the correct value of   dissipation scale  as $C\nu$.
 \medskip

 \noindent{\it In 3d turbulence}.  In K41 the hydrodynamical (Kolmogorov's) 
 dissipation scale is predicted to be  $l_d^K =C\,\nu^{3/4}$ (we recall \eqref{Krate}). 
  \smallskip

 Dissipation and inertial ranges  are zones, specifying the sizes of  involved
 % Fourier modes $s$ or of the corresponding  wavelengths $\lam=s^{-1}$ and the 
  space-increments. They are defined in terms of the physical 
 dissipation scale.   Namely,   the {\it dissipation range}  (in the $x$-representation)    is the interval
  $
  I_{diss} = [0, c l_d],
  $
  and the {\it inertial range} is    the interval
  $
  I_{inert} =[c l_d, c_1].
  $
    The constants $c$ and $c_1$ certainly do not depend on $\nu$, and   for  1d turbulence, described by eq.~\eqref{B}, \eqref{xi}, 
    depend only on the random force \eqref{xi}. 
  These constants may change from one group of results to another. 
Accordingly in K41 the Kolmogorov inertial range is 
$$
I^K_{inert} = [C l_d^K, C_1] = [C' \nu^{3/4}, C_1],
$$
and the dissipation range is \ 
$
I^K_{diss} =  [0,C' \nu^{3/4}].
$
  
Theorem \ref{t4.3} below implies that in the framework of  1d turbulence the dissipation range $ I_{diss}$ may be defined 
  as the largest closed interval to the right from zero,  such that   for all $l\in I_{diss} $ the 
increments $u(t,x+l) -u(t,x)$ ``statistically behave linearly in $l$".

 \subsection{Moments of  small-scale increments.} \label{ss_9.2}
 We recall \eqref{we_assume}. 
 
  {\it In  1d turbulence.}
Small-scale increments in $x$ of a solution $u(t,x)$ are  quantities 
$
u(t,x+l) -u(t,x) $, where $ x\in S^1$ and $ |l| \ll 1.
$
Their absolute  moments with respect to the averaging in $x$ and  brackets $\lann\cdot, \cdot\rann$, are 
$$
\llangle \int | u(t,x+l) -u(t,x) |^p dx \rrangle  =: S_{p,l} = S_{p,l}(u)
,\quad p>0. 
$$
Function $(p,l)\mapsto S_{p,l}$ is called {\it the structure function} (of a solution  $u$).  Naturally, if a solution $u(t,x)$ is stationary 
in $t$ (see below Section~\ref{s_8.3}), then 
$
S_{p,l}(u) = \EE\big( \int | u(t,x+l) -u(t,x) |^p dx \big). 
$
  If in eq. \eqref{B} the force $\xi(t,x)$ is homogeneous in $x$  (see \eqref{hom}) 
  and $u_0^\om$ also is, then the  random 
   solution $u^\om(t,x;u_0)$ is homogeneous in $x$.\footnote{This is easdy; see in \cite{BK} Theorem 1.5.6.}
 In this case 
      $
    S_{p,l}(u) = \llangle | u(t, x+l) - u(t,x)|^p\rrangle, 
    $
    for any  $x \in S^1$. If in addition      $u^\om(t,x;u_0)$  is stationary in time,  then 
   \be\label{hom_sf}
    S_{p,l} (u)= \EE | u(t, x+l) - u(t,x)|^p,
    \ee
    for  any $t$ and $x$.

Structure function   $  S_{p,l} (u)$, calculated for any solution $u$ of \eqref{B}, \eqref{xi} with $u_0\in H^1$,  obeys  the following law:

\begin{theorem}\label{t4.3} If $u= u(t; u_0)$ is a solution as above and 
$0<\nu \le c_*$ for a  sufficiently small $c_*>0$,  then for each $p>0$ there exists $C'_p\ge1$ such that 
 for  $|l|$ in  inertial range   $ I_{inert} = [c_1\nu, {c}]$ 
with suitable $c, c_1>0$  the  structure function $S_{p,l}= S_{p,l}(u)$ satisfies:
\be\label{assert1}
\begin{split}
 {C'_p}^{-1} |l|^{\min(1, p)}\le S_{p,l} \le C'_p |l|^{\min(1, p)}.% \quad \text{if}\quad  0<p\le1, \\
%&S_{p,l} \sim | l | \quad \text{if}\quad  p\ge1.
\end{split}
\ee
While for $|l|$ in   dissipation range  $ I_{diss} =  [0, c_1\nu]$, 
 \be\label{assert2}
 C_p^{-1}  |l|^p \nu^{1-\max(1,p)} \le
 S_{p,l} \le C_p |l|^p \nu^{1-\max(1,p)}, \;\;  \forall\, p>0. 
 \ee
 The constants $c_*, C_p, C'_p, c, c_1$ depend on the force $\xi$. 
 \end{theorem}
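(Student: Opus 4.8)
\emph{Strategy.} The plan is to reduce \eqref{assert1}--\eqref{assert2} to three inputs: Oleinik's bounds (Theorems~\ref{t3.1}--\ref{t3.2}), the Sobolev asymptotics of Theorem~\ref{t4.1}, and --- for the lower bound in the inertial range only --- an integrated two--point energy balance, which is the Burgers counterpart of Kolmogorov's $4/5$ law. The first move is a preliminary estimate: for every $p>0$,
\[
\llangle\, |u^\nu_x|_p^p\,\rrangle\ \sim\ \nu^{\,1-\max(1,p)}
\]
(the right side being $\sim1$ when $p\le1$), uniformly in the bracket parameters. Splitting $u_x=u_x^+-u_x^-$, Theorems~\ref{t3.1}--\ref{t3.2} give that $|u_x^+|_\infty$ and $|u_x|_1$ have all $\llangle\cdot\rrangle$--moments bounded uniformly in $\nu$, so the bound reduces to $u_x^-$: for $1\le p\le2$ one interpolates $L_p$ between $L_1$ and $L_2$ using $\llangle\|u\|_1^2\rrangle\sim\nu^{-1}$ (Theorem~\ref{t4.1}), and for $p>2$ one uses a Gagliardo--Nirenberg inequality between $|u_x|_1$ and a high norm $\|u\|_m$ followed by H\"older in $\llangle\cdot\rrangle$, the $\nu$--exponent matching by scaling; the lower bound comes from the same interpolations read backwards, with $\llangle\|u\|_0^2\rrangle\sim1$ used when $p\le1$. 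The same reasoning yields $\llangle|u_{xx}|_\infty^q\rrangle\le C_q\nu^{-2q}$, needed below. Every a.s.\ manipulation with higher Sobolev norms here is legitimized by a limit in the Galerkin scheme, exactly as in Theorem~\ref{t2.3}.

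\emph{Upper bounds and the lower bound in $I_{diss}$.} Put $v_l(t,x)=u(t,x+l)-u(t,x)=\int_0^l u_x(t,x+y)\,dy$. Since $S_{1,l}\le\int_0^l\llangle|u_x(\cdot+y)|_1\rrangle\,dy\le Cl$, two applications of Jensen's inequality give $S_{p,l}\le S_{1,l}^{\,p}\le (Cl)^p$ for $0<p\le1$, which is \eqref{assert1} for $p\le1$ on $I_{inert}$ and all of \eqref{assert2} for $p\le1$. For $p\ge1$: on $I_{inert}$ the bound $|v_l|^p\le(2|u|_\infty)^{p-1}|v_l|$ integrates to $\int|v_l|^p\,dx\le 2^{p-1}|u|_\infty^{p-1}\,l\,|u_x|_1$, and averaging (H\"older plus the Oleinik moments) yields $S_{p,l}\le C_p l$; on $I_{diss}$, $\int|v_l|^p\,dx\le l^{p-1}\int_0^l|u_x(\cdot+y)|_p^p\,dy=l^p|u_x|_p^p$, and the preliminary estimate gives $S_{p,l}\le C_p l^p\nu^{1-p}$. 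For the matching lower bound on $I_{diss}$, write $v_l=l\,u_x+R_l$, $R_l(x)=\int_0^l\big(u_x(x+y)-u_x(x)\big)dy$, so that $\|R_l\|_p\le l\sup_{0\le y\le l}\|u_x(\cdot+y)-u_x\|_p$. Interpolating $\|u_x(\cdot+y)-u_x\|_p$ between its $L_2$--norm ($\le y\|u\|_2$) and its $L_\infty$--norm ($\le y|u_{xx}|_\infty$, or a Sobolev embedding if $p<2$) and comparing with $|u_x|_p\sim\nu^{(1-p)/p}$, one finds $\|R_l\|_p\le C(l/\nu)\,l\,|u_x|_p$; as $l\le c_1\nu$, a small $c_1$ makes $R_l$ a small fraction of the leading term, so $\int|v_l|^p\,dx\ge(1-Cc_1)^p\,l^p|u_x|_p^p$ pathwise, and averaging with the preliminary estimate completes \eqref{assert2}.

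\emph{Lower bound in $I_{inert}$ (the main point).} I would prove the Burgers form of the $4/5$ law for $S^{\mathrm{sgn}}_{3,l}:=\llangle\int_{S^1}(u(x+l)-u(x))^3\,dx\rrangle$. Applying It\^o's formula (justified as in Theorem~\ref{t2.3}) to $u\mapsto\int_{S^1}u(x)u(x+l)\,dx$ and taking $\llangle\cdot\rrangle$: the time--boundary term is $O(1/\sigma)$ by Theorem~\ref{t3.2}, the noise contributes $\mathcal B(l):=\sum_{s\in\Z^*}b_s^2\cos 2\pi sl$ (so $\mathcal B(l)=B_0+O(l^2)$), and --- after integrations by parts that use only $x$--periodicity --- the advection terms collapse to $\tfrac16\tfrac{d}{dl}S^{\mathrm{sgn}}_{3,l}$. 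Integrating in $l$ from $0$ (where $S^{\mathrm{sgn}}_{3,0}=0$), and using both $\llangle\nu\|u\|_1^2\rrangle=\tfrac12B_0+O(1/\sigma)$ (from \eqref{en_bal}) and the identity $\int u_x(x)u_x(x+y)\,dx=\|u\|_1^2-\tfrac12\|u_x(\cdot+y)-u_x\|^2$, one gets
\[
\tfrac16\,S^{\mathrm{sgn}}_{3,l}\ =\ -\nu\!\int_0^l \llangle\,\|u_x(\cdot+y)-u_x\|^2\,\rrangle\,dy\ +\ O(l^3)+O(l/\sigma).
\]
Here $\llangle\|u_x(\cdot+y)-u_x\|^2\rrangle\le4\llangle\|u\|_1^2\rrangle\sim\nu^{-1}$ (which re--proves $|S^{\mathrm{sgn}}_{3,l}|\le Cl$), while conversely the correlation $\llangle\int u_x(x)u_x(x+y)\,dx\rrangle$ is concentrated on the shock scale $y\sim\nu$, so for $y\ge c_1\nu$ with $c_1$ large it is at most $\tfrac12\llangle\|u\|_1^2\rrangle$, whence $\llangle\|u_x(\cdot+y)-u_x\|^2\rrangle\ge c\nu^{-1}$ there. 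Thus $\nu\int_0^l\llangle\|u_x(\cdot+y)-u_x\|^2\rrangle\,dy\sim l$ for $l\in[c_1\nu,c]$, and for $c$ small and $\sigma$ large we get $|S^{\mathrm{sgn}}_{3,l}|\ge cB_0\,l$, hence $S_{3,l}\ge|S^{\mathrm{sgn}}_{3,l}|\ge cB_0\,l$. For $1\le p\le3$, the H\"older interpolation $S_{3,l}\le S_{p,l}^{\alpha}S_{P,l}^{1-\alpha}$ ($P$ large) together with $S_{P,l}\le C_P l$ gives $S_{p,l}\ge c'l$; for $p>3$ one uses $S_{p,l}\ge\ve^{p-3}\big(S_{3,l}-\ve^2S_{1,l}\big)$ and takes $\ve$ small via $S_{1,l}\le Cl$. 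This yields \eqref{assert1} for all $p\ge1$.

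\emph{Remaining case and main obstacle.} The lower bound $S_{p,l}\ge C_p^{-1}l^p$ for $0<p<1$ on $I_{inert}$ cannot come from moment data (interpolation only gives $S_{p,l}\ge c'l$, which is weaker since $l^p>l$); it needs the structural fact, established in \cite{BK}, that a fixed fraction of $S^1$ carries a smooth monotone arc of $u$ on which $u_x$ is bounded below, so that $|v_l(x)|\ge cl$ there and $\int|v_l|^p\,dx\ge c^p l^p$. I expect the two genuinely delicate ingredients to be: (a) the decorrelation estimate above --- that $\llangle\int u_x(x)u_x(x+y)\,dx\rrangle$ is essentially supported in $|y|\le C\nu$, uniformly in $\nu$, which is what makes the viscous term in the $4/5$ relation of exact order $l$ and which is tied to the fine shock structure analyzed in \cite{BK}; and (b) supplying the geometric input just mentioned in the range $0<p<1$. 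The remaining work (the integrations by parts, the H\"older/Young manipulations, the Galerkin limits) is routine given Theorems~\ref{t4.1} and \ref{t3.1}--\ref{t3.2}.
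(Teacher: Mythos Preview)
Your upper bounds run parallel to the paper's. The real divergence is the inertial-range lower bound for $p\ge1$: you go through the $4/5$ identity and then bootstrap from $p=3$, whereas the paper argues \emph{pathwise}. It first isolates (via Lemma~\ref{l9.5} plus Chebyshev applied to Theorems~\ref{t3.2} and~\ref{t3.3}) an event $\bar Q_\nu\subset[T,T+\sigma]\times\Omega$ of measure $\ge\alpha/2$ on which simultaneously $\|u\|_1\sim\nu^{-1/2}$ and $|u_x^+|_\infty+|u_x|_1+\nu^{3/2}\|u\|_2+\nu^{5/2}\|u\|_3\le M$. On that event one locates a point $z$ with $u_x^-(z)\ge\alpha_2\nu^{-1}$; the bound $|u_{xx}|_\infty\le C\nu^{-2}$ (Gagliardo--Nirenberg from the $H^2,H^3$ control) forces this to persist on an interval of length $\sim\nu$, so for every $x\in[z-l/2,z]$ the increment $|u(x+l)-u(x)|$ exceeds a fixed positive constant. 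Hence $s_{p,l}\ge c^p\,l$ pathwise on $\bar Q_\nu$, and integration gives $S_{p,l}\ge c'l$, automatically uniformly in the bracket parameters.

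Your route has two loose ends. First, the decorrelation input you flag as obstacle~(a) is unproved --- but it is also unnecessary: writing $\partial_l f^l(u)=\int(u(x)-u(x+l))u_x(x+l)\,dx$ and applying Cauchy--Schwarz together with the already-established upper bound $S_{2,l}\le Cl$ gives $\nu\,|\llangle\partial_l f^l\rrangle|\le C\sqrt{l\nu}$ directly (this is exactly how the viscous term is handled in \S\ref{s_45}), and then your identity reads $\tfrac16 S^{\mathrm{sgn}}_{3,l}=-lB_0+O(l^3)+O(\sqrt{l\nu})+O(l/\sigma)$ with no decorrelation needed. Second, and more stubbornly, the boundary term $O(l/\sigma)$ is bounded only by $4C_*l/\sigma$, which at $\sigma=\sigma_*=4C_*/B_0$ equals $lB_0$ --- the same size as the main term. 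So even with the bypass your argument delivers the lower bound only for $\sigma$ a fixed multiple larger than the $\sigma_*$ already set in \S\ref{s_8}; the pathwise proof has no such loss. Separately, in your dissipation-range lower bound the step ``$\|R_l\|_p\le C(l/\nu)\,l\,|u_x|_p$, hence $\int|v_l|^p\ge(1-Cc_1)^p\,l^p|u_x|_p^p$ pathwise'' conflates averaged and a.s.\ control (the relation $\|u_{xx}\|_p\lesssim\nu^{-1}|u_x|_p$ is not a pointwise inequality); to repair it you again need to restrict to a good event, which is precisely the paper's mechanism.
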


 In  \cite{AFLV} U.Frisch with collaborators obtained   assertion \eqref{assert1}  
 by a convincing heuristic argument. Below in Subsection~\ref{ss_struct} we present its rigorous proof, based on 
  Theorem~\ref{t4.1} on Sobolev norms of solutions  and  Oleinik's estimate.

 The theory of turbulence also is interested in {\it signed moments} of increments of velocity fields, corresponding to the  {\it
 skew structure function }
 \be\label{S_skew}
 S^{s}_{p,l}= S^{s}_{p,l} (u) := 
\llangle \int ( u(t,x+l) -u(t,x) )^p dx \rrangle  
,\qquad  p \in \N.
\ee
Obviously $S^{s}_{p,l}=S_{p,l}$ for an even integer $p$,  but for an odd  $p$ the moment \eqref{S_skew} is different from $ S_{p,l}$.
The first signed moment 
vanishes, and the third moment  $S^{s}_{3, l}$ is of special interest. 
To apply   Theorem~\ref{t4.3} to  $S^{s}_{3,l} (u) $, where $u$ solves 
\eqref{B}, \eqref{xi}, let us consider the quantity 
$$
I_+ = \llangle  \int_{S^1} \big[ (u(t, x+l) -u(t,x))^+\big]^3dx \rrangle .
$$
Since $u(t, x+l) -u(t,x) \le \ds\int_x^{x+l} u_x^+(t,y)\,dy$, then 
$
I_+ \le \llangle (l |u_x^+(t)|_\infty)^3 \rrangle  \le C_1 l^3,
$
where the second inequality follows from Theorem~\ref{t3.2}.i).  As
$
u= -|u| + 2u^+,
$
then by the last estimate  and \eqref{assert1} we have that for $|l|$ in the inertial range, 
$$
-C^{-1}l +2 C_1 l^3 \ge  \llangle \int ( u(t,x+l) -u(t,x) )^3 dx \rrangle  \ge -C l. 
$$
Thus, there exists $c'>0$ (independent of  $\nu$) such that 
\be\label{B45}
S^{s}_{3, l}(u)
%  \llangle \int ( u(t,x+l) -u(t,x) )^3 dx \rrangle  
\sim -l \qquad \text{for} \quad l \in [c_1\nu, c'], \; \; 0<\nu\le c_*.
\ee
This is a weak form for  1d turbulence of the 4/5-law from the K41 theory which we discuss below in Section~\ref{s_45}. Literally the
same argument shows that relation \eqref{B45} hold for all moments $S^{s}_{p, l}(u)$ with odd $p\ge3$. 
  \medskip
  
  \noindent{\it In 3d turbulence}.  For a 3d velocity field   $u(x)$ the products
  $
  (u_i(x+l) -u_i(x))\frac {l_j}{|l|}
  $
  make a 9-tensor, and the corresponding  hydrodynamical   structure function organises moments or 
  absolute   moments of this tensor field. 
  Experts in turbulence often work with the {\it longitudinal structure function},  defined as 
  \be\label{abs_mom}
     S_{p,l}^{| | } = S_{p,l}^{| | }(u) 
      = \EE\Big ( \Big|  \frac{(u(x+l)- u(x)) \cdot l}{|l|} \Big|^p\Big), \quad p>0 .
  \ee
    Assuming that a velocity   field $u$ of water turbulence is stationary in time and homogeneous in space, 
    the K41 theory  predicts that for $|l|$      in the inertial range 
     \begin{equation} 
   S_{2,l}^{| | }(u) \sim|l|^{2/3}, 
  \ee
  see \cite{LL, F, Fal}. 
  This is the celebrated    {\it 2/3-law of the K41 theory}. 
  In the same time we have seen in (\ref{assert1}) that for  1d turbulence, for $|l|$ in the inertial range,  $S_{2,l} \sim|l|$.
  
    In the K41 papers the 2/3-law was stated in a stronger form:    it was   claimed there that if fluid's velocity field is a homogeneous 
    and isotropic random field on $\R^3$, then  in the inertial range
  \be\label{univ}
   S_{2,l}^{| | }(u)  =C^K (\eps^K | l|)^{2/3} + o ((\eps^K | l|)^{2/3} ),
  \ee
  where $C^K$ is an absolute constant. But then, due to a criticism from Landau, it became clear that the asymptotic above
  cannot hold (at least, with an absolute constant $C^K$). See in  \cite{LL} a footnote at p.\,126, see
   \cite[Section~6.4]{F}  and see below Section~\ref{ss_10.2}.

 \subsection{Proof of  Theorem \ref{t4.3} }\label{ss_struct}
 To prove the theorem  we have to establish two equivalence  relations in 
  (\ref{assert1}) and (\ref{assert2}). That is, to get  two upper bounds, two lower bounds and show that they make two
   equivalent pairs. Noting that $S_{p,l}$ is an even function of $l$, vanishing with $l$,
 we see that it suffices to consider $S_{p,l}$  with $0<l<1$. 
 \smallskip
  
 \noindent {\it Upper bounds.} We denote 
 $
 u(t,x+l)-u(x) =w_l(t,x)
 $
 and assume  first that $p\ge1$. Using H\"older's inequality we get that 
 \be\label{0.9}
 S_{p,l}(u) \le
 \lann \int |w_l(x)|dx \cdot |w_l|_\infty^{p-1}\rann \le  \lann\big( \int |w_l(x)|dx \big)^p\rann^{1/p} \cdot
  \lann  |w_l|_\infty^{p}\rann^{(p-1)/p} =: I\cdot J.
 \ee
 On one hand, since 
 $$
 \int |w_l(x)|dx = 2\int w_l(x)^+dx \le 2 l \sup_x u_x^+,
 $$
 then by Theorem \ref{t3.2}.i) \ $I\le c_p l$. On the other hand, obviously
 $
 J \le l^{p-1} \lann | u_x |^p_\infty\rann^{(p-1)/p}.
 $
 By the Gagliardo--Nirenberg estimate with a sufficiently large $m$ and by H\"older's inequality, 
 \[
 \lann |u_x|_\infty^p \rann^{(p-1)/p} \le 
 \big[ c \lann \|u\|_m^{2p/(2m-1)} |u|_{1,1}^a\rann \big] ^{(p-1)/p} \le 
 c  \lann \|u\|_m^2\rann^{p/(2m-1)}  \lann  |u|_{1,1}^b\rann^c,
 \]
 for some constants $a,b,c>0$. Now by Theorems  \ref{t3.2}.ii) and \ref{t4.1}
 $
 J \le C l^{p-1} \nu^{-(p-1)}. 
 $
 The obtained bounds on $I$ and $J$ imply the required upper bounds if $p\ge1$. For $p\in (0,1)$ we use H\"older's inequality 
 and the just established bound with $p=1$ to get that 
 $
 S_{p,l} \le \lann \int |w_l(x)|dx\rann^p  =S_{1,l}^p \le (c' l)^p. 
 $
 Since we imposed no restriction  on $l$, then we have thus proved the upper bound in \eqref{assert1}, and that in \eqref{assert2} 
 if $p\le1$. 
 
 To get the upper bound in  \eqref{assert2} for $p>1$, we once again use \eqref{0.9} and again estimate $I$ by $c_pl$.
 While to estimate $J$ we note that since 
 $
|w_l(x)| \le | u_x|_1, 
 $
 then by  Theorems~\ref{t3.2}.ii), 
 $
J \le \lann | u_x|_1^p\rann^{(p-1)/p} \le c_p.  
 $
 This yields that $S_{p,l} \le \tilde c_p l$ and completes the proof of the upper bounds, claimed in Theorem~\ref{t4.3}.
  \smallskip
 
 \noindent {\it Lower bounds.}  We restrict ourselves to the most important case of the lower bound in \eqref{assert1}
 for $S_{p,l}$ with $l$  in the inertial 
 range, $|l|\in [c_1\nu, c]$, when $p\ge1$. The lower bound in \eqref{assert1} for $p<1$ follows from that with $p=1$ and H\"older's 
 inequality, while the proof of the lower bound for \eqref{assert2} is similar to that for \eqref{assert1} but  easier. See  in \cite{BK}.

 Recall that   the brackets 
  $\llangle\cdot\rrangle$ is the  averaging on the extended probability  space  
  $Q=[T, T+\sigma]\times \Omega $, given the measure 
  $
  \rho = (dt/\sigma) \times P. 
  $
  Theorems~\ref{t3.2} and \ref{t3.3} imply the assertion of the lemma below (see \cite{BK} for a non-complicated proof): 
 
 \begin{lemma}\la{l9.5}  There is a constant  $\alpha >0$  and, for any $\nu$
 there is an event $Q_{\nu}\subset Q$     such that\\
  i)  $\rho(Q_{\nu}) \ge \alpha $, and  \\
ii)  for every $(t, \omega)\in Q_{\nu}$ we have 
  \be\label{Q1}
  \alpha  \nu^{-1/2} \le \| u^\om(t)\|_1 \le    \alpha ^{-1} \nu^{-1/2}.
  \ee
   \end{lemma}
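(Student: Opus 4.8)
The plan is to produce the event $Q_\nu$ by combining the two-sided $L^2$-bound on $\|u\|_1^2$ from Theorem~\ref{t4.1} (with $m=1$) with the uniform-in-$\nu$ upper moment bound on $\|u\|_1^2$ from Theorem~\ref{t3.3}, both transported to the extended probability space $Q=[T,T+\sigma]\times\Omega$ with measure $\rho=(dt/\sigma)\times P$. Recall that on $Q$ the bracket $\llangle\cdot\rrangle$ is precisely the $\rho$-expectation, so Theorem~\ref{t4.1} with $m=1$ gives constants $a_1,a_2>0$ with $a_1\nu^{-1}\le \llangle\|u\|_1^2\rrangle\le a_2\nu^{-1}$, and Theorem~\ref{t3.3} with $m=2$ (via the Gagliardo--Nirenberg interpolation $\|u\|_1^2\le C\|u\|_2\,|u_x|_1$ and H\"older, or directly) gives a bound on a higher moment, say $\llangle\|u\|_1^4\rrangle\le a_3\nu^{-2}$, again uniform in $\nu$, in $T\ge1$ and in $\sigma\ge\sigma_*$.

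The core is a Paley--Zygmund type argument applied to the nonnegative random variable $Z:=\nu\|u\|_1^2$ on $(Q,\rho)$. We have $\llangle Z\rrangle\ge a_1$ and $\llangle Z^2\rrangle\le a_3$, both with $\nu$-independent constants. First I would bound below the probability of the lower tail: by Paley--Zygmund, for $0<\theta<1$,
\[
\rho\big(Z\ge \theta\,\llangle Z\rrangle\big)\ \ge\ (1-\theta)^2\,\frac{\llangle Z\rrangle^2}{\llangle Z^2\rrangle}\ \ge\ (1-\theta)^2\,\frac{a_1^2}{a_3}=:\alpha_1>0 .
\]
Then I would bound above the probability of the upper tail by Chebyshev/Markov on $Z$ itself using the first moment only, or better on $Z^2$: choosing $R$ large enough that $a_3/R^2\le \tfrac12\alpha_1$ gives $\rho(Z\le R)\ge 1-\tfrac12\alpha_1$. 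Intersecting, the event $\{\theta a_1\le Z\le R\}$ has $\rho$-measure at least $\tfrac12\alpha_1>0$. Translating back through $Z=\nu\|u\|_1^2$ yields, on this event, $\sqrt{\theta a_1}\,\nu^{-1/2}\le\|u\|_1\le\sqrt{R}\,\nu^{-1/2}$. Setting $\alpha:=\min(\sqrt{\theta a_1},\tfrac12\alpha_1,1)$ and, if needed, shrinking the upper constant by replacing $\sqrt R$ with $\alpha^{-1}$ (legitimate once $\alpha\le 1/\sqrt R$, which we arrange by making $\alpha$ small), we obtain exactly the event $Q_\nu$ with $\rho(Q_\nu)\ge\alpha$ and the bound \eqref{Q1}.

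The only genuine point requiring care — the "main obstacle" — is checking that the constants $a_1,a_2,a_3$, hence $\alpha$, are truly independent of $\nu$, of $T\ge1$, and of $\sigma\ge\sigma_*$: this is where one must invoke that the equivalence $\sim$ in Theorem~\ref{t4.1} and the bound in Theorem~\ref{t3.3} are uniform in all these parameters (the statements, as given, assert exactly this), and that the fourth-moment bound $\llangle\|u\|_1^4\rrangle\le a_3\nu^{-2}$ follows from Theorem~\ref{t3.3} with $m=2$ together with Oleinik's estimate (Theorem~\ref{t3.2}.ii) and interpolation — the same chain already used in the proof of Theorem~\ref{t3.3}. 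Everything else is the routine Paley--Zygmund/Chebyshev bookkeeping sketched above, so I would relegate it to "an easy computation, see \cite{BK}" in the style of the surrounding text.
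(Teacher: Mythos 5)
Your proof is correct and uses exactly the ingredients the paper points to (the two-sided bound $\llangle\|u\|_1^2\rrangle\sim\nu^{-1}$, the uniform higher-moment upper bounds from Theorems~\ref{t3.2} and \ref{t3.3} combined via Gagliardo--Nirenberg and H\"older, and a Paley--Zygmund/Chebyshev argument on the extended space $(Q,\rho)$); the paper defers the details to \cite{BK}, but this is the intended route. The only remark worth making is that the paper's one-line attribution to Theorems~\ref{t3.2} and \ref{t3.3} alone omits the indispensable lower bound on $\llangle\|u\|_1^2\rrangle$ coming from the energy balance \eqref{en_diss}, which you correctly supply via Theorem~\ref{t4.1}.
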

   
   Now  for an $M\ge1$ we    define $\bar Q_\nu \subset Q_\nu$ as an event, formed by all $(t,\omega) \in Q_\nu$ such that 
   \be\label{Q2}
 | u_x^{\omega\, +}(t)|_\infty +  | u_x^{\omega}(t)|_1 + \nu^{3/2} \| u^\omega(t)\|_2  + \nu^{5/2} \| u^\omega(t)\|_3 \le M.
   \ee
   By  Theorems~\ref{t3.2}, \ref{t3.3}  and Chebyshev's  inequality, if $M$ is sufficiently large in terms of $\alpha$, then 
   $$
   \rho( \bar Q_\nu) \ge \tfrac12 \alpha. 
   $$
   We fix this choice of $M$ and of the corresponding  events $\bar Q_\nu$. 
   For any fixed $(t,\omega) \in \bar Q_\nu$ let us denote
   $
   v(x) := u^\omega(t,x).
   $
   Below we prove that 
   \be\label{Q3}
   s_{p,l}(v) := \int | v(x+l) -v(x)|^p dx \ge C l^{\min(1, p)}, \qquad \forall \, l\in [c_1'\nu, c_2'],\;\; \nu\in (0, c_*],
   \ee
   with  $C= C(c_1', c_2', p)>0$, if $c_1'$ is sufficiently large and 
   if $c_*$ and $ c_2'>0$ are sufficiently small. Obviously \eqref{Q3}
    (valid for all $(t,\omega) \in \bar Q_\nu$)  implies the required lower    bound. Below deriving  estimates  we systematically 
    assume that 
    \be\label{Q4}
    c_1' \gg1, \; c_2' \ll1,\;  c_* \ll1,
     \quad l \in [c_1'\nu, c_2'].
    \ee
    
    Due to \eqref{Q1} and \eqref{Q2},
    $$
    \alpha^2 \nu^{-1} \le \int |v_x|^2dx \le | v|_{1, \infty} |v|_{1,1} \le M  | v|_{1, \infty} .
    $$
    So
    \be\label{Q5}
     | v|_{1, \infty} \ge \alpha^2 M^{-1} \nu^{-1} =: \alpha_2 \nu^{-1}.
    \ee
    Since $|v_x^+|_\infty \le M$, then $|v_x^+|_\infty  \le \frac12 \alpha_2 \nu^{-1}$ (we recall \eqref{Q4}).  From here
    and \eqref{Q5}, 
    $$
    |v_x^-|_\infty  = |v_x|_\infty  \ge \alpha_2\nu^{-1}. 
    $$
    Let $z \in S^1$ be any point, where $ v_x^-(z)  \ge \alpha_2\nu^{-1}. $ Then 
    \be\label{Q6}
    s_{p,l}(v) \ge
    \int_{z-l/2}^z \Big| \int_x^{x+l} v_y^-(y)dy - \int_x^{x+l} v_y^+(y)dy \Big|^p dx.
    \ee
  By the Gagliardo--Nirenberg inequality and \eqref{Q2}, 
  $$
  |v_{xx}|_\infty \le C_2 \| v\|_2^{1/2}  \|v\|_3^{1/2} \le C_2 M \nu^{-2}. 
  $$
  Since $v^-_x(z) \ge \alpha_2 \nu^{-1}$, then for any $y \in [z- \alpha_3 \nu, z +\alpha_3 \nu]$, where 
  $\alpha_3 = \alpha_2/4C_2 M$, we have 
  $$
  v_x^-(y) \ge \alpha_2 \nu^{-1} -\alpha_3 C_2 M \nu^{-1} = \tfrac34 \alpha_2 \nu^{-1}. 
  $$
   
   Assume that $c_1'\ge\alpha_3$  (cf. \eqref{Q4}). Then $l\ge \alpha_3 \nu$. Since 
    $v_x \le M$ for all $x$, 
    then for $x \in [z-l/2, z]$ we have  that 
   $$
    \int_x^{x+l} v_y^-(y)dy  \ge \int_z^{z+\alpha_3\nu/2} v_y^-(y)dy \ge \frac38 \alpha_2 \alpha_3 \quad \text{and} \quad 
     \int_x^{x+l} v_y^+(y)dy  \le Ml. 
   $$
   So by \eqref{Q6},
   $$
    s_{p,l}(v) \ge   \int_{z-l/2}^z  \big| \tfrac38 \alpha_2 \alpha_3 -Ml |^pdx \ge \tfrac{l}2 \big( \tfrac14 \alpha_2 \alpha_3 )^p,
   $$ 
   if $l \in [\alpha_3\nu,   \alpha_2 \alpha_3 /8M]$ and $c_*$ is sufficiently small. 
   This proves \eqref{Q3} and thus establishes the desired 
   lower bound.

 \subsection{Distribution of energy along the  spectrum.}\label{ss_8.4}
 {\it In  1d turbulence.} 
 For a solution $u(t,x)$ of the stochastic Burgers equation,  regarded as the velocity of a 1d flow, consider the halves of the second moments of its Fourier coefficients 
 $
 \frac12 \llangle |\hat u_s|^2 \rrangle.
 $
 By Parseval's identity, 
      $$
      \llangle  \int  \tfrac12 |u|^2dx\rrangle = \sum_s  \tfrac12   \llangle |\hat u_s|^2\rrangle.
       $$
       So  quantities $ \tfrac12  \llangle |\hat u_s|^2\rrangle$ describe  distribution of the averaged 
       energy  of the flow   along the spectrum.  Another 
       celebrated law of the  K41 theory deals with similar quantities, calculated for 3d turbulent flows; we will return to this below.

  For any ${\mathbf k} \ge1$ define $E_{\mathbf k}(u)$   as the averaging of  $\tfrac12 \llangle |\hat u_s|^2\rrangle$  in $s$ from 
   the layer $J_k^M$ 
  around $\pm {\mathbf k}$, where 
  $\ 
  J_{\mathbf k}^M =\{ n\in \Z^* : M^{-1} {\mathbf k} \le |n| \le M {\mathbf k}\}, $ \ $M>1.
  $
  I.e.  
  \be\label{ek}
  E_{\mathbf k}^B(u)= \llangle e_{\mathbf k}^B(u)\rrangle, \quad e_{\mathbf k}^B(u) = 
  \frac1{| J_{\mathbf k}^M|} {\sum}_{n\in J_{\mathbf k}^M} \tfrac12 |\hat u_n|^2.
  \ee
  The function ${\mathbf k} \mapsto E_{\mathbf k}^B$ is  the {\it energy spectrum} of $u$.

   It is immediate from the definition of $l_d$    that  for ${\mathbf k} $ with ${\mathbf k}^{-1}$ in the dissipation range,  $E_{\mathbf k} ^B(u)$  decays 
  faster than any negative degree of ${\mathbf k} $ (uniformly in $\nu$): for any $N\in\N$, 
      $$
     E_{\mathbf k} ^B \le C_N {\mathbf k} ^{-N} \quad \text{if } \; {\mathbf k}  \gg l_d = C\nu^{-1}. 
       $$
   But for ${\mathbf k}^{-1} $ in the inertial range the behaviour of $E_{\mathbf k} ^B$ is quite different:

 \begin{theorem} \label{t4.4}
     There exists $M'>1$ such that if in the definition of energy spectrum we use layers $J_{\mathbf k} ^M$ with 
 $M\ge M'$, then for    ${\mathbf k}\ge1 $  with ${\mathbf k} ^{-1}$ 
  in the inertial range $I_{inert}$,   i.e. for  $1\le {\mathbf k} \le  c^{-1}  \nu^{-1}$, we have:
  \begin{equation} \label{spec-law}
  E_{\mathbf k} ^B (u^\nu) \sim  {\mathbf k} ^{-2}. 
\ee 
\end{theorem}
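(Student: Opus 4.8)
The plan is to work entirely with the layer‑averaged Fourier energies $a_s:=\llangle|\hat u_s|^2\rrangle$ (with $a_s=a_{-s}\ge0$). By Parseval's identity one has, for fixed $M$, $E_{\mathbf k}^B(u^\nu)\asymp {\mathbf k}^{-1}\sum_{{\mathbf k}/M\le s\le M{\mathbf k}}a_s$ and $S_{2,l}(u^\nu)=8\sum_{s\ge1}\sin^2(\pi sl)\,a_s$, so the theorem reduces to showing $\sum_{{\mathbf k}/M\le s\le M{\mathbf k}}a_s\asymp{\mathbf k}^{-1}$ for ${\mathbf k}$ in the inertial range. The increment $l=1/{\mathbf k}$ will be the main test scale; it lies in $[c_1\nu,c]$ precisely when ${\mathbf k}$ does, and the finitely many small ${\mathbf k}$ with $1/{\mathbf k}>c$, as well as the ${\mathbf k}$ lying within a bounded ratio of the dissipation scale $\nu^{-1}$, will be disposed of by elementary means (after, if necessary, shrinking $c$ and enlarging $c_1$ and $M$), using only $\llangle\|u^\nu\|_0^2\rrangle\sim1$, $\llangle\|u^\nu\|_1^2\rrangle\sim\nu^{-1}$ and Theorem~\ref{t3.2}.

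Two auxiliary facts would be extracted first. (i) From $S_{2,l}\ge8\sum_{s\le1/(4l)}\sin^2(\pi sl)a_s\ge 32l^2\sum_{s\le1/(4l)}s^2a_s$ together with the inertial upper bound $S_{2,l}\le C'_2 l$ of Theorem~\ref{t4.3} one gets the uniform partial‑sum bound $\sum_{s\le N}s^2a_s\lesssim N$ whenever $1\lesssim N\lesssim\nu^{-1}$ (for $N$ comparable to $\nu^{-1}$ one instead uses $\sum_{s\le N}s^2a_s\le\llangle\|u^\nu\|_1^2\rrangle\lesssim\nu^{-1}\lesssim N$). (ii) For the far tail, Theorem~\ref{t4.1} with a large fixed $m$ gives $\sum_{s>c^{-1}\nu^{-1}}a_s\le(c\nu)^{2m}\llangle\|u^\nu\|_m^2\rrangle\lesssim c^{2m}\nu$, a negligible fraction of ${\mathbf k}^{-1}$ once $c$ is small (equivalently, one may quote Theorem~\ref{t4.2}). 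Combining (i)–(ii) with the exact identity $\int_0^{1/N}8\sin^2(\pi sl)\,dl=\tfrac4N-\tfrac2{\pi s}\sin\tfrac{2\pi s}N$, whose coefficient is $\asymp N^{-1}$ for $s>N$ and $O(s^2N^{-3})$ for $s\le N$, yields $\Phi(N):=\sum_{s>N}a_s\lesssim N\int_0^{1/N}S_{2,l}\,dl$; bounding the integral by $\int_0^{c_1\nu}C\nu^{-1}l^2\,dl+\int_{c_1\nu}^{1/N}C'_2l\,dl\lesssim\nu^2+N^{-2}\lesssim N^{-2}$ for $N\lesssim\nu^{-1}$, and by Theorem~\ref{t4.1} for $N\gtrsim\nu^{-1}$, one gets $\Phi(N)\lesssim N^{-1}$ uniformly in $N\ge1$ and in $\nu$.

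The upper bound is then immediate: $\sum_{{\mathbf k}/M\le s\le M{\mathbf k}}a_s\le\Phi({\mathbf k}/M-1)\lesssim M{\mathbf k}^{-1}$. For the lower bound I would split the Fourier sum for $S_{2,1/{\mathbf k}}$ into low modes ($s<{\mathbf k}/M$), middle modes (${\mathbf k}/M\le s\le M{\mathbf k}$) and high modes ($s>M{\mathbf k}$). On the low modes $\sin^2(\pi s/{\mathbf k})\le(\pi s/{\mathbf k})^2$, so by (i) their contribution is $\le8\pi^2{\mathbf k}^{-2}\sum_{s<{\mathbf k}/M}s^2a_s\lesssim(M{\mathbf k})^{-1}$; on the high modes $\sin^2\le1$, so their contribution is $8\Phi(M{\mathbf k})\lesssim(M{\mathbf k})^{-1}$. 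By the inertial lower bound of Theorem~\ref{t4.3}, $S_{2,1/{\mathbf k}}\gtrsim{\mathbf k}^{-1}$, so choosing $M=M'$ large enough makes the low‑ and high‑mode contributions together at most $\tfrac12 S_{2,1/{\mathbf k}}$; the middle modes therefore satisfy $8\sum_{{\mathbf k}/M\le s\le M{\mathbf k}}a_s\ge\tfrac12 S_{2,1/{\mathbf k}}\gtrsim{\mathbf k}^{-1}$, which is the desired lower bound and completes the proof (with $M'$ as in the statement).

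The main difficulty, as usual for such ``spectral'' assertions, is uniformity: one must check that the low‑ and high‑frequency bands contribute only a controlled fraction of $S_{2,1/{\mathbf k}}$ simultaneously for every ${\mathbf k}$ across the whole inertial range and for all small $\nu$. This is why the sharp two‑sided bounds of Theorem~\ref{t4.3} in \emph{both} the inertial and the dissipation sub‑ranges are needed (the crude a priori bounds $\sum_s s^2a_s\lesssim\nu^{-1}$, $\sum_s a_s\lesssim1$ alone do not pin the spectrum down), together with the partial‑sum estimate $\sum_{s\le N}s^2a_s\lesssim N$ and the super‑polynomial decay past the dissipation scale from Theorem~\ref{t4.1} (with $m$ large), equivalently Theorem~\ref{t4.2}. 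The value $M'$ in the statement is precisely the threshold beyond which those fractions drop below $\tfrac12$. A secondary, routine nuisance is matching the admissible ranges of ${\mathbf k}$ and of $l$ at the two ends of the inertial interval, handled by adjusting the constants $c,c_1,M$ and by the trivial bounds just mentioned.
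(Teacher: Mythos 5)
Your proof is correct and follows essentially the same route as the paper's: both rest on Parseval's identity $S_{2,1/{\mathbf k}}=4\sum_n\sin^2(\pi n/{\mathbf k})\llangle|\hat u_n|^2\rrangle$, the two-sided bounds of Theorem~\ref{t4.3}, and the choice of $M$ large enough that the low- and high-frequency bands contribute at most half of $S_{2,1/{\mathbf k}}$, leaving the layer $J_{\mathbf k}^M$ to carry the rest. The only cosmetic difference is that the paper gets the tail and partial-sum bounds from the pointwise estimate $\llangle|\hat u_s|^2\rrangle\le(2\pi s)^{-2}\llangle|u_x|_1^2\rrangle\lesssim s^{-2}$ (Oleinik, Theorem~\ref{t3.2}.ii), whereas you re-derive them from the structure-function bounds; the pointwise route is slightly shorter.
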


  For solutions of the Burgers equation,  Burgers  already in   1948  \ci{B1948}
  predicted that
    $E_{\mathbf k} ^B \sim  {\mathbf k} ^{-2}$ for $1\le {\mathbf k} <\, $Const$\, \nu^{-1}$, 
 i.e. exactly the  spectral power law above.
     \smallskip

 \noindent {\it
 Open problem.} Is the assertion of the theorem true for any $M>1$ if ${\mathbf k} \ge {\mathbf k} _0$ with a suitable ${\mathbf k} _0(M)\ge1$, independent of 
 $\nu$?
 \smallskip

   \noindent {\it Proof of the theorem}.  
     We have to show that 
      \be\label{bounds}
C {\mathbf k} ^{-2}\ge     E_{\mathbf k} ^B \ge C^{-1} {\mathbf k} ^{-2}
       \ee
       for some $C>1$.

       1) {\it Upper bound.}  For any function $u(x)$ we have 
      $$
\hat u_k =  \int u(x) e^{-2\pi ikx} dx = 
\frac1{2\pi ik }  \int u'(x) e^{-2\pi ikx} dx.
       $$
     So from Theorem \ref{t3.2}.ii) we get that 
       \begin{equation} \label{up_bound}
\llangle | \hat u_k|^2 \rrangle \le   \Big( \frac1{2\pi k } \Big)^2 \llangle |u_x|_1^2\rrangle \le C|k|^{-2}. 
      \ee
      This implies the upper bound in \eqref{bounds}.

 2) {\it Lower bound.}  Consider 
 $
 \Psi_{\mathbf k}  = \sum_{|n| \le M{\mathbf k} } |n|^2 \lann | \hat u_n|^2 \rann.
 $
  Since $|\alpha| \ge |\sin\alpha|$, then 
 \[ \begin{split}
 \Psi_{\mathbf k} & \ge \frac{{\mathbf k} ^2}{\pi^2}  \Big( \sum_n \sin^2(n\pi {\mathbf k} ^{-1}) \lann\un\rann -
   \sum_{ |n| >M{\mathbf k} } \sin^2(n\pi {\mathbf k} ^{-1}) \lann\un\rann\Big)\\
 &\ge \frac{{\mathbf k} ^2}{\pi^2}  \Big( \sum_n \sin^2(n\pi {\mathbf k} ^{-1}) \lann\un\rann -  \sum_{ |n| >M{\mathbf k} } \lann\un\rann\Big).
 \end{split}
 \]
 From other hand, by Parseval's identity, 
      $$
\| u(t, \cdot +{{\mathbf k} ^{-1}}) - u(t, \cdot) \|^2 =4 \sum_n \sin^2(n\pi {{\mathbf k} ^{-1}}) | \hat u_n(t)|^2.
       $$
     Taking the brackets $\lann\cdot\rann$ from this equality we see that 
     $$
     S_{2, {\mathbf k} ^{-1}} (u)= 4 \sum_n \sin^2(n\pi {{\mathbf k} ^{-1}}) \lann | \hat u_n(t)|^2\rann
     $$
     (where $S$ is the structure function). Thus, 
     \be\label{d1}
     \Psi_{\mathbf k}  \ge   \frac{{\mathbf k} ^2}{\pi^2}  \Big( \frac14  S_{2, {\mathbf k} ^{-1}}  -  \sum_{ |n| >M{\mathbf k} } \lann\un\rann\Big).
     \ee
     From \eqref{up_bound} we have 
     $
       \sum_{ |n| >M{\mathbf k} } \lann\un\rann \le C_1 M^{-1} {\mathbf k} ^{-1}. 
     $
       Using in \eqref{d1} this estimate jointly with \eqref{assert2} where $p=2$ and $l=1/{\mathbf k} $ we find that 
       $$
       \Psi_{\mathbf k}  \ge  {\mathbf k} ^2 C_2 {\mathbf k} ^{-1} - C_3  M^{-1} {\mathbf k}  =
        {\mathbf k} (C_2 -  M^{-1} C_3) \quad \text{if}\quad c^{-1} \le {\mathbf k}  \le C_1^{-1} \nu^{-1}. 
       $$
       Noting that 
       $$
       E_{\mathbf k} ^B \ge \frac1{2{\mathbf k} ^3 M^3} \sum_{ M^{-1} {\mathbf k}  \le |n| \le M{\mathbf k} }  |n|^2 \lann \un\rann 
        \ge \frac1{2{\mathbf k} ^3 M^3} \Big( \Psi_{\mathbf k}  -   \sum_{ |n| < M^{-1}{\mathbf k} }  |n|^2 \lann \un\rann  \Big)
       $$
       and that by  \eqref{up_bound} the sum  in $ { |n| < M^{-1}{\mathbf k} } $\  in the r.h.s. 
        above is bounded by $C_4 M^{-1} {\mathbf k} $, we arrive at the relation 
        $$
         E_{\mathbf k} ^B \ge \frac1{2{\mathbf k} ^3 M^3}\big( {\mathbf k} (C_2 -M^{-1} C_3) - C_4M^{-1} {\mathbf k} \big).
        $$
        The latter implies the lower bound in \eqref{bounds} if $M$ is sufficiently large and ${\mathbf k} \ge C^{-1}$.  While for 
        $1\le {\mathbf k}  <  C^{-1}$  the bound follows from \eqref{m=0} and \eqref{up_bound} if $C$ and $M$ are sufficiently big.
                \qed
         \medskip
         
         \noindent{\it In 3d turbulence.}  Let us consider a 1-periodic in space turbulent flow $u(t,x)$ of water
         with Fourier coefficients $\hat u(t,s)$. Next for $r\ge1$ denote by $E_r^K$ the averaging of energies $\frac12 | \hat u(t,s)|^2$ over $s$ from  a suitable 
         layer around the sphere $\{ |s| =r\}$ and in ensemble.    The celebrated  Kolmogorov--Obukhov law predicts that 
  $$
     E_r^K \sim  r^{-5/3} \quad \text{ for \  $r^{-1}$ \  in  Kolmogorov's inertial range  } \  [C'\nu^{3/4}, C_1]. 
 $$
see \cite{LL, F}.

          \noindent {\it  Open problem.}   We saw that the Oleinik estimate and theorem on moments of small-scale increments of solutions for \eqref{B} jointly imply the  spectral power law of  1d turbulence. 
          Under what  assumption on $u(x)$ the latter is equivalent to the  theorem on moments of small-scale increments?
 More interesting is this question, asked for the laws of K41: under what restriction on a field $u(x)$ the Kolmogorov 2/3-law is
 equivalent to the Kolmogorov--Obukhov law?  Or at least one of them implies another?
 See Section 3.4 of \cite{K21}  for a discussion.

  \setcounter{equation}{0}
\section{Statistical equilibrium (the mixing)} \label{s_8.3}

 It is a general believe in the theory of 3d turbulence that  as time grows, statistical 
characteristics of a turbulent flow $u(t,\cdot)$ converge to a universal statistical equilibrium. 
 E.g. see in \cite{Bat} pages 6-7 and 109. 
 Mathematically it means that if we regard a space-periodic turbulent flow $u(t,x)$ 
as a random process $u(t, \cdot)$ in some function space $\cH$ of 1-periodic non-compressible 3d  vector fields, then for any bounded continuous 
functional $f$ on $\cH$  we have 
\be\label{mix} 
{\mathbb E} f(u(t, \cdot)) \to \lan f,  \mu_\nu\ran \quad \text{as} \quad t\to\infty,
\ee
where $\mu_\nu$ is a measure on $\cH$, describing the  equilibrium.
The property, manifested by relation \eqref{mix} is called  { \it the mixing}, see \cite{DZ, KS}.  Since
 the K41 theory deals with stationary in time turbulent vector 
fields, then there  convergence \eqref{mix} trivialises to an equality which holds for all $t$. 

 In  1d turbulence, if 
$u(t,x)$ is a solution of \eqref{B} and $\cH$ is the space $H^m$ with some $m\in\N$, then the  validity of  convergence \eqref{mix} 
with a suitable measure  $\mu_\nu$ on $H^m$ may be  derived from  general results for SPDEs (e.g. see in \cite{KS}). 
But then the rate of  convergence would 
depend on $\nu$.   At
 the same time,  in the theory of turbulence it should not depend on $\nu$,  and, as we show in this section,  for eq.~\eqref{B} it does not!

 \subsection{Convergence in distribution for solutions with different 
  initial states} 

 The result below is a key step in establishing  the mixing  (\ref{mix}) for solutions of the stochastic Burgers equation.

   \begin{theorem}\label{t4.5}
  Let  $f$ be a  continuous functional on  the space $L_1(S^1)$ such that 
  \be\label{f}
  \text{Lip}\, f \le1, \quad |f|\le1.
  \ee
  Then for any $u_1, u_2\in H^1$ and $t\ge3$ we have 
\be\label{mixt}
\big| \EE  f(u(t;u_1)) - \EE f(u(t;u_2))  \big| \le C (\ln t)^{-1/8},
\ee
  where $C>0$ depends    on the force $\xi$, but does not depend   on $f$, $\nu$,     $u_1$ and $u_2$.
 \end{theorem}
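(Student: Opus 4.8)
The plan is to couple the two solutions by the same realisation of the force and show that the $L_1$-distance between them is a non-increasing process which, with overwhelming probability, has been driven below $(\ln t)^{-1/8}$ by time $t$. So first I would fix $u_1,u_2\in H^1$ and run both solutions with the same $\xi^\om$ (as in Theorem~\ref{t5}), writing $w^\om(t):=u^\om(t;u_1)-u^\om(t;u_2)$. Since $\mathrm{Lip}\,f\le1$ on $L_1(S^1)$ and $|f|\le1$, we get
$$\big|\EE f(u(t;u_1))-\EE f(u(t;u_2))\big|\ \le\ \EE\min\!\big(2,\ |w(t)|_1\big).$$
Almost surely $\xi^\om\in X^2_\infty$, so Lemma~\ref{lnexp} gives that $t\mapsto|w^\om(t)|_1$ is non-increasing for a.e.\ $\om$; moreover, by Theorem~\ref{t3.2} (and $|v|_1\le|v|_\infty$ on $S^1$) we have $\EE|w(t)|_1\le C_0$ for all $t\ge1$, with $C_0$ depending only on $\xi$ and not on $\nu,u_1,u_2$. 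Thus it suffices to prove $\EE\min(2,|w(t)|_1)\le C(\ln t)^{-1/8}$ for $t\ge3$.

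The core of the argument is a quantitative coupling (drainage) lemma: there are constants $c,\tau>0$ such that for every $\delta\in(0,1]$, every $\nu$, and every pair $v_1,v_2$ lying in a suitable bounded set $K\subset H^1$, one can find an event $\mathcal G=\mathcal G^\delta_{v_1,v_2}$, measurable with respect to the force on $[0,\tau]$, with
$$\PP(\mathcal G)\ \ge\ \exp\!\big(-c\,\delta^{-8}\big),\qquad\text{and on }\mathcal G:\quad \big|u(\tau;v_1)-u(\tau;v_2)\big|_1\le\delta .$$
Heuristically, $\mathcal G$ is the event that the Gaussian force is, over $[0,\tau]$, atypically large and favourably shaped — an event of probability $\exp(-cR^2)$ for a force of amplitude $R$ (Proposition~\ref{p1}) — with $R$ a suitable power of $1/\delta$ calibrated through the Oleinik maximum principle (Theorems~\ref{t3.1}, \ref{t3.2}) and the $\nu$-uniform lower bounds behind Lemma~\ref{l9.5}, so that $u^1+u^2$ is driven into a profile carrying a strong, $\nu$-uniform shock. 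The advection in that shock sweeps in and damps the $L_1$-mass of $w=u^1-u^2$; combining this with the $L_1$-non-expansion of Lemma~\ref{lnexp} and the interpolation inequalities of Lemma~\ref{lGN} (using $\|u\|_2^2\sim\nu^{-3}$ from Theorem~\ref{t4.1} and $|u_x^+|_\infty\sim1$) forces $|w(\tau)|_1\le\delta$ on $\mathcal G$; the careful calculation, carried out in \cite{BK}, is precisely what produces the exponent $8$.

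Granted this lemma, the remainder is soft. By the Oleinik moment bounds (Theorems~\ref{t3.2}, \ref{t3.3}), Chebyshev's inequality and the monotonicity of $|w(t)|_1$, one can fix a ball $K=K(\delta)$ so that $u(t;u_i)\in K$ for all $t\ge1$ off an event of probability $\le\delta$, uniformly in $\nu,u_i$. Conditioning at time $1$ and splitting $[1,t]$ into $N=\lfloor(t-1)/\tau\rfloor$ successive windows of length $\tau$, the Markov property (the Chapman--Kolmogorov relation~\eqref{K-Ch}) makes the force increments on distinct windows independent, so $\mathcal G$ applies on each with its own fresh increment, and the probability that no window succeeds is at most $(1-e^{-c\delta^{-8}})^N\le\exp(-N e^{-c\delta^{-8}})$; on the complement some window succeeds and then $|w(t)|_1\le\delta$ by non-expansion. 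Hence
$$\EE\min\!\big(2,|w(t)|_1\big)\ \le\ 3\delta\ +\ 2\exp\!\Big(-\tfrac{t-1}{\tau}\,e^{-c\delta^{-8}}\Big).$$
Choosing $\delta=\delta(t)$ with $\delta^{-8}=\tfrac1{2c}\ln t$ makes the exponential term super-polynomially small and yields $\EE\min(2,|w(t)|_1)\le C(\ln t)^{-1/8}$ for $t\ge3$, as required. (One must also check that $K(\delta)$ and the amplitude $R$ depend only polynomially on $1/\delta$, so that this dependence is absorbed, and carry out the smooth-to-$H^1$ approximations; these are the routine points deferred to \cite{BK}.)

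The main obstacle is unquestionably the drainage lemma: producing, uniformly in $\nu$ and in the bounded initial data, an explicit ``favourable force'' event of probability $\exp(-c\delta^{-8})$ on which a genuine strength-$\sim1$ shock forms in $u^1+u^2$ and drains the $L_1$-mass of $u^1-u^2$ down to level $\delta$. This is where the Oleinik estimates (to control how the force shapes the gradient), the $\nu$-uniform Sobolev lower bounds underlying Lemma~\ref{l9.5}, the $L_1$-contraction of Lemma~\ref{lnexp}, the interpolation inequalities of Lemma~\ref{lGN}, and the Gaussian large-deviation bounds of Proposition~\ref{p1} all have to be welded together; everything else — the reduction to bounded initial data, the independent-windows chaining, and the final optimisation in $\delta$ — is comparatively mechanical.
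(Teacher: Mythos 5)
Your overall architecture (couple by the same noise, reduce to bounding $\EE\min(2,|w(t)|_1)$ via $L_1$-non-expansion, chain over independent time windows, optimise in $\delta$ with $\delta^{-8}\sim\ln t$) matches the paper's. But the entire content of the theorem sits inside your ``drainage lemma'', and that lemma is not proved: it is asserted with a heuristic and the remark that ``the careful calculation, carried out in \cite{BK}, is precisely what produces the exponent 8.'' That citation is misdirected, because the mechanism you propose -- an \emph{atypically large}, favourably shaped force that builds a strength-$\sim1$ shock in $u^1+u^2$ which then sweeps up the $L_1$-mass of $w=u^1-u^2$ -- is not the one used in \cite{BK} or in this paper, and you give no argument (uniform in $\nu$ and in the initial pair) that such an event with probability $\exp(-c\delta^{-8})$ exists or that a shock in the sum contracts the difference at the claimed rate. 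The actual mechanism is the opposite and much simpler: one waits for the force to be \emph{atypically small} over a long window of length $N$ (probability bounded below by Lemma~\ref{lgam}), and then Oleinik's estimate (Theorem~\ref{t3.1}, with $\theta=T=N$) forces \emph{every} solution, regardless of its state at the start of the window, to satisfy $|u(N)|_1\le CN^{-1}$ at the end of it. Both coupled solutions are therefore simultaneously $O(N^{-1})$-close to zero, hence to each other, and Lemma~\ref{lnexp} propagates this forward. The exponent $8$ then comes from comparing the number of windows $G(N)=Ce^{N^8}$ needed to beat the small-ball probability $\exp(-cN^{15/2})$ from Lemma~\ref{lgam}, not from any shock calibration.

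A secondary but genuine defect is your chaining step. Because your drainage lemma only applies to initial pairs in a bounded set $K=K(\delta)$, the success probability of a window is conditional on the (random, past-dependent) event that the state at the start of that window lies in $K\times K$; you cannot simply multiply $(1-e^{-c\delta^{-8}})$ over $N\sim e^{c\delta^{-8}}$ windows, and a union bound over the failure probability $\delta$ per window is catastrophic at that scale. The paper's route avoids this entirely precisely because the Oleinik bound is uniform over all of $H^1$ (it controls $\sup_{u_0\in H^1}|u^\om(t;u_0)|_1$), so no restriction to a bounded set is ever needed. As written, your proof therefore has a gap at its central step and a flaw in the step that was supposed to be routine.
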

 
 \begin{proof}  The proof follows from a combination of three results: 
 i) a lower 
 bound for the probability that during a time $T$ the Wiener process $\xi(t)$ stays inside the ball $B^\eps_{H^m}$ with any  small 
  $\eps>0$.
ii)  
 the $L_1$-nonexpending property of eq.~\eqref{B}, stated in Lemma~\ref{lnexp}, 
 and  iii) the Oleinik maximum principle.
 \smallskip
 
  \noindent {\it 1. \!Lower bound for the probability.}  The probability in question is 
  $$
  \PP\{ \| \xi \|_{X_T^m}<\varepsilon \} =: \gamma^m_{\varepsilon, T}.
  $$
 Let us consider the function
 $$
 f_m(\varepsilon)= e^{-\kappa_m(\varepsilon^{-3} + \varepsilon^{-2} )},\qquad \varepsilon>0,
 $$
 where $\vka_m>0$ is chosen in the next lemma.
 \bl\la{lgam} For any $m\ge 1$, there exists a $\kappa_m>0$ (depending on the force $\xi$) 
  such that
 \be\label{h0}
\gamma^m_{\varepsilon, T} \ge  \tfrac12 f_m(\varepsilon/\sqrt{T}) \qquad \forall\, 0<\varepsilon\le1,\; \forall\, T>0.
\ee
 \el
 \bpr
 Since $\cD \xi(t) = \cD \big(\sqrt{T} \xi(t /{T})\big) $, then 
 $
 \gamma^m_{\varepsilon, T} = \gamma^m_{\varepsilon/\sqrt{T}, 1}.
 $
 So  it suffices to prove the estimate with   $T=1$. Let us denote  
$$
 \xin(t)=  \Pi_N \xi(t), \qquad \xinn(t) = \xi(t)-\xin(t).
$$
Then 
$$
\gamma_{\varepsilon,1}^m \ge   \PP\{ \| \xin \|_{X_T^m}<\varepsilon/\sqrt2 \} \cdot  \PP\{ \| \xinn \|_{X_T^m}<\varepsilon/\sqrt2 \} =: P^N \cdot\,  ^N\!P.
$$
Estimating\,  $ ^N\!P$ is easy. Indeed, by Chebyshev's inequality 
$
 ^N\!P \ge 1-2\varepsilon^{-2}  \EE \| \xinn \|^2_{X^m_1}. 
$
Since by Doob's inequality
$
  \EE \| \xinn \|^2_{X^m_1} \le 4   \EE \| \xinn(1) \|^2_{m}, 
$
then 
$$
^N\!P\ge 
1-8 \varepsilon^{-2} \EE \| \xinn(1)\|_m^2 \ge 1- 8 \varepsilon^{-2} (2\pi N)^{-2} B_{m+1},
$$
where  the  second  inequality follows from Proposition \ref{p1}.ii) and the definition of $\xinn$. Choosing
$
N=N_\varepsilon = [2\sqrt{B_{m+1}}/\pi\varepsilon] +1
$
we achieve that 
$$
^N\!P  \ge 1/2.
$$ 

Now we  estimate $P^N$. First we note that  for any vector $\xi = \sum_{|s|\le N}  b_s \xi_s e_s$, where $\{ b_s\}$ are the constants in \eqref{xi}, 
 relation 
$
|\xi_s| <  \varepsilon B_m^{-1/2} /\sqrt2=: \varepsilon',
$
if valid for all $ s$,
implies that 
$
\| \xi\|_m^2 <\varepsilon^2/2.
$
So 
$$
P^N \ge \prod_{|s|\le N} \PP \Big\{ \sup_{0 \le t\le1} |\beta_s(t) |  < \varepsilon' \Big\} =\big(\rho(\varepsilon')\big)^{2N},
$$
where $\rho(\varepsilon') =  \PP \big\{ \sup_{0 \le t\le1} |\beta_s(t) |  < \varepsilon'\big\}$. The function $\rho$ is well known in probability. It
is given by a converging series and admits a lower bound 
$
\rho(a) \ge e^{-\pi^2 /8a^2}.
$
See \cite[Section 3.2]{BK} for  discussion and references, and see there Problem~3.2.3 for a sketch of a proof of this estimate.
Thus
$$
P^N \ge \big(\rho( \varepsilon B_m^{-1/2} / \sqrt2)\big)^{2N} \ge e^{-\kappa_m(\varepsilon^{-3} + \varepsilon^{-2} )} = f_m(\varepsilon), 
$$
for an explicit positive constant  $ \kappa_m$. So
(\ref{h0}) is established. 
\epr

  \noindent {\it 2. \!End of the proof.}  a) For an $N\in\N$  let us cut $[0,\infty)$ to blocks of $G(N)$ segments of length $N$, where 
  $G(N) = C\exp N^8$ (so each block  itself is a segment of length $NG(N)$). Then for a suitable $C$ the probability of event
  $$
  \Gamma_N =\{ \omega: \text{for each} \; 1\le k\le G(N),  \sup_{(k-1)N \le t\le kN} \| \xi^\om(t) - \xi^\om((k-1)N) \|_4 >N^{-2}/4 \}
  $$
  satisfies 
  \be\label{bound}
\PP(\Gamma_N)\le N^{-1},
\ee
 for each $N$. Indeed, since the increments of $\xi(t)$ on disjoint segments are i.i.d., then 
  by \eqref{h0} 
$
  \PP(\Gamma_N)\le \big(1-f_4(1/4N^{5/2})\big)^G,
  $
  where $f_4$ is the function from Lemma~\ref{lgam} with $m=4$. Now \eqref{bound} follows by an easy calculation. 
  \smallskip
  
  b) Now for $t\ge0$ consider the function 
  $
  F^\om(t) = \sup_{u_0\in H^1}  |u^\om(t; u_0)|_1.
  $
We claim that for each $N\in\N$ and for $G(N)$ as above,
 \be\label{h5}
 \PP\big(
 \inf%_{\N \ni k\le NG(N)}
  F^\om(kN) >24 N^{-1} \} \le N^{-1},
  \ee
  where the infimum is taken over $k$ from the integer segment $[0, NG(N)] \cap \Z$. 
Indeed, the estimate follows from \eqref{bound} and  Theorem~\ref{t3.1}.iii) with $\theta=T=N$, applied to segments 
$
[(k-1)N, kN], $ $ k=1, \dots, G(N).
$

 \smallskip
  
  c) Now we complete the proof. For any $u_1,u_2 \in H^1$ consider the random process 
  $
  U(t) = (u(t;u_1), u(t;u_2)) \in H^1\times H^1.
  $
  For $N\in\N$ define  closed sets  $O_N \subset H^1$, 
  $
  O_N = \bar B_{L_1}^{24N^{-1}} \cap H^1 ,
  $
  and  hitting times 
  $$
  \tau^\om_N =\min\{ l\in\N:l\le NG(N), \ U^\om(l)  \in O_N\times O_N\},
  $$
  where $ \tau_N =\infty$ if the set is empty. Applying \eqref{h5} 
  %the result of step a) 
  to solutions $u(t; u_j)$ we find that 
  $
  \PP(\tau_N=\infty) \le 2N^{-1}.
  $
  But if $\tau_N <\infty$, then 
  $\,
  | u^\om(\tau_N; u_1) -u^\om(\tau_N; u_2) |_1 \le 48N^{-1} ,
  $
  and this inequality still holds for $t\ge\tau_N$ by Lemma~\ref{lnexp}.  So for any functional $f$ as in \eqref{f}, for $t\ge NG(N)$ we have
  $$
  \big| \EE\big( f(u(t;u_1) -  f(u(t;u_2) \big) \big| \le 2N^{-1} +48 N^{-1}. 
  $$
Here the first term in the r.h.s. comes from the integrating over the event $\{\tau_N =\infty\}$ (since $|f|\le1$)
and the second -- over its complement (since Lip$\,f\le1$). 
 Thus the l.h.s. of \eqref{mixt} is at most $50N^{-1}$ if $t\ge NG(N)$ for some $N\in\N$. This implies the assertion of Theorem \ref{t4.5} 
 as  $\log(NG(N)=\log N+N^8\sim N^8$ for large $N$. 
 \end{proof} 
 
 \subsection{The mixing}
 
 We recall that for a  complete  separable metric space  $M$ and  two measures $ \mu_1,\mu_2\in\cP(M)$, 
the dual-Lipschitz distance between $ \mu_1$ and $\mu_2$ (also known as the Kantorovich--Rubinstein  distance) is 
\be\label{dL}
\|\mu_1-\mu_2\|_{L}^* =
\|\mu_1-\mu_2\|_{L,M}^*:=\sup_{f\in C_b^0(M), |f|_L\leqslant1}\Big|\langle f ,\mu_1\rangle-\langle f ,\mu_2\rangle\Big|\le2,
 \ee
   where  $|f|_L= |f|_{L,M}=     \,$Lip$\,f+\|f\|_{C(M)}$. This distance converts $\cP(M)$ to a complete metric space, and
   the convergence in it is equivalent to the weak convergence of measures   (see in \cite{Vil, KS, BK}). 
    Then Theorem~\ref{t4.5} means that 
  \be\label{mixx}
  \| \cD u(t;u_1) - \cD u(t;u_2)\|^*_{L, L_1} \le  C (\ln t)^{-1/8} \quad \text{ for all $u_1, u_2\in H^1$ and  $t\ge3$}. 
  \ee
  
  Now let $\mu=\mu_\nu$ be a stationary measure for the stochastic Burgers equation \eqref{B2},
   considered on the space $H^1$    (see \eqref{KB}), and let 
  $\lambda\in \cP(H^1)$. Let $f$ be a  continuous   function on $L_1$ as in \eqref{f}. Consider 
  $$
  X_t^f =: \big|  \lan f, S^*_t\lambda\ran - \lan f,\mu\ran\big| = \big|  \lan f, S^*_t\lambda\ran - \lan f, S_t^*\mu\ran\big|.
  $$
  We have
  $$
    \lan f, S^*_t\lambda\ran  =   \lan S_t f, \lambda\ran 
    =\int_{H^1} S_tf(u_1) \lambda(du_1) = \int_{H^1}   \int_{H^1} S_tf(u_1) \lambda(du_1) \mu(du_2) 
  $$
  (see \eqref{dual}). 
  Similarly, 
  $
    \lan f, S^*_t\mu \ran  =  \int_{H^1}   \int_{H^1} S_tf(u_2)\mu(du_2) \lambda(du_1).
  $
  Therefore
  $$
  X^f_t  \le  \int_{H^1}   \int_{H^1}  \big|S_tf(u_1) -  S_tf(u_2)  \big| \lambda(du_1) \mu(du_2). 
  $$
  By Theorem \ref{t4.5} the integrand is bounded by $ C (\ln t)^{-1/8}$. So 
  $
  X^f_t  \le C (\ln t)^{-1/8}.
  $
  Since $f$ is any continuous   function, satisfying \eqref{f}, we have proved

    \begin{theorem}\label{t4.55}
    Let $\mu_\nu \in \cP(H^1)$ be a stationary measure as in \eqref{KB} with $m=1$ 
    and $\lambda$ -- any measure from $\cP(H^1)$. Then 
    \be\label{mixL1}
     \| S^*_t \lambda- \mu_\nu \|^*_{L, L_1} \le  C (\ln t)^{-1/8} , \quad t\ge3,
     \ee
     where $C$ is the constant from Theorem \ref{t4.5}.  In particular, a stationary measure $\mu_\nu$ is unique. 
 \end{theorem}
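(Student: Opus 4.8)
The plan is to obtain Theorem~\ref{t4.55} as a soft consequence of Theorem~\ref{t4.5}, using the duality \eqref{dual} between the semigroups $S_t^*$ and $S_t$ together with the stationarity $S_t^*\mu_\nu=\mu_\nu$. First I would record that, since the inclusion $H^1\hookrightarrow L_1(S^1)$ is continuous, every functional $f$ on $L_1(S^1)$ satisfying \eqref{f} restricts to a bounded continuous functional on $H^1$; moreover $S_tf(v)=\EE f(u(t;v))$ is bounded and — by continuity of the maps $\cM_t$ and dominated convergence — continuous in $v\in H^1$, so $S_tf\in C_b(H^1)$ and \eqref{dual} applies.

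Second, for such an $f$ and any $\lambda\in\cP(H^1)$ I would compute, inserting the probability measure $\mu_\nu$ and using Fubini (legitimate since $S_tf$ is bounded),
\[
\langle f,S_t^*\lambda\rangle-\langle f,\mu_\nu\rangle
=\langle S_tf,\lambda\rangle-\langle S_tf,\mu_\nu\rangle
=\int_{H^1}\!\int_{H^1}\bigl(S_tf(u_1)-S_tf(u_2)\bigr)\,\lambda(du_1)\,\mu_\nu(du_2).
\]
By Theorem~\ref{t4.5}, for every fixed $u_1,u_2\in H^1$ and $t\ge3$,
\[
\bigl|S_tf(u_1)-S_tf(u_2)\bigr|=\bigl|\EE f(u(t;u_1))-\EE f(u(t;u_2))\bigr|\le C(\ln t)^{-1/8},
\]
with $C$ independent of $f$, $\nu$, $u_1$, $u_2$. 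Integrating this bound against the probability measure $\lambda\times\mu_\nu$ gives $|\langle f,S_t^*\lambda\rangle-\langle f,\mu_\nu\rangle|\le C(\ln t)^{-1/8}$; taking the supremum over all admissible $f$ — noting that any $f$ with $|f|_{L,L_1}\le1$ automatically satisfies $\mathrm{Lip}\,f\le1$ and $|f|\le1$, i.e. \eqref{f} — yields \eqref{mixL1}.

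Finally, uniqueness follows by feeding a second stationary measure $\mu_\nu'$ into \eqref{mixL1} in place of $\lambda$: then $S_t^*\lambda=\mu_\nu'$ for all $t$, so $\|\mu_\nu'-\mu_\nu\|^*_{L,L_1}\le C(\ln t)^{-1/8}$ for every $t\ge3$; letting $t\to\infty$ shows the left side vanishes, and since the dual-Lipschitz distance metrises weak convergence on $\cP(H^1)$ this forces $\mu_\nu'=\mu_\nu$.

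I do not expect a genuine obstacle here: all the analytic difficulty is already packaged in Theorem~\ref{t4.5} (which itself rests on Lemma~\ref{lgam}, the $L_1$-nonexpansion Lemma~\ref{lnexp}, and the Oleinik estimate). The only points requiring a line of care are the continuity of $S_tf$ on $H^1$ needed to invoke \eqref{dual}, the use of Fubini, and the bookkeeping identifying the test-function class of \eqref{f} with the one defining $\|\cdot\|^*_{L,L_1}$ in \eqref{dL}; all three are routine.
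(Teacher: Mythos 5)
Your proposal is correct and follows essentially the same route as the paper: write $\langle f,\mu_\nu\rangle=\langle f,S_t^*\mu_\nu\rangle$ by stationarity, pass to $S_tf$ via the duality \eqref{dual}, insert the second probability measure to form the double integral of $S_tf(u_1)-S_tf(u_2)$, and bound the integrand uniformly by Theorem~\ref{t4.5}. The only (harmless) extra care you add concerns the continuity of $S_tf$ on $H^1$ and the identification of the test-function classes, plus an explicit $t\to\infty$ argument for uniqueness that the paper leaves implicit.
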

 
 \begin{remark} \label{r_hommeas}
 1) Since for any $m$ a stationary measure $\mu_\nu\in \cP(H^m)$ as in \eqref{KB} 
 also is a stationary measure in $H^1$, then by  uniqueness
  $\mu_\nu$
 does not depend on $m$, and so  $ \mu_\nu\big( H^m\big) =1$   for all $m\in\N$.

 2) It is easy to see that if 
  the random field $\xi(t,x)$ is homogeneous in $x$ (see \eqref{hom}), then  the measure $\mu_\nu$ also is. 
\end{remark}

  \begin{cor}\label{c_mix} For any $1\le p<\infty$ there exists  positive constant $C_p$ % and $\kappa_p$,
   depending only on  the force $\xi$, 
  such that under the assumption of Theorem~\ref{t4.55}
      \be\label{mixLp}
     \| S^*_t \lambda- \mu_\nu \|^*_{L, L_p} \le  C_p (\ln t)^{-\kappa_p}  \;\; \text{for} \;\;   \;\;  t\ge3, \;\; \text{where} \;\;  \kappa_p = \tfrac1{9(2p-1)}. 
     \ee
 \end{cor}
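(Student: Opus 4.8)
The plan is to deduce the $L_p$-estimate \eqref{mixLp} from the $L_1$-estimate \eqref{mixL1} of Theorem~\ref{t4.55} by interpolating between $L_1$ and a higher Sobolev norm on which we have uniform control. The key observation is that although the test functions in the dual-Lipschitz distance on $L_p$ are only Lipschitz with respect to $|\cdot|_p$, the measures $S_t^*\lambda$ and $\mu_\nu$ are both supported (in an effective, quantitative sense) on bounded sets of a smoother space, so a single test function that is $1$-Lipschitz on $L_p$ can be approximated by one that is Lipschitz (with a controlled, large constant) on $L_1$ after truncating the mass lying in a large ball.

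First I would fix $f\in C_b(L_p)$ with $|f|_{L,L_p}\le1$ and a radius $R>0$. The idea is to replace $f$ by $f_R := f\circ \Theta_R$, where $\Theta_R$ is a retraction onto a ball $\overline B$ of radius $R$ in some Sobolev space $H^k$ (with $k$ chosen below), arranged so that $\Theta_R$ is the identity on that ball; then $f_R$ agrees with $f$ there and $|f_R|\le1$. On the ball $\overline B_{H^k}^R$ one has, by the Gagliardo--Nirenberg inequality (Example~\ref{ex1}.B), an estimate of the form $|v-w|_p \le C\,|v-w|_1^{\theta}\,(\|v\|_k+\|w\|_k)^{1-\theta} \le C R^{1-\theta} |v-w|_1^\theta$ for a suitable $\theta=\theta(p,k)\in(0,1)$; hence $f_R$ restricted to that ball is H\"older, not Lipschitz, in the $L_1$-metric. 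To genuinely reduce to Theorem~\ref{t4.5} I would instead use that H\"older functions of a fixed exponent are, up to a bounded error $\delta$, Lipschitz with constant $L(\delta)\sim \delta^{-(1-\theta)/\theta}$; applying \eqref{mixL1} to the rescaled function $L(\delta)^{-1} f_R$ gives a bound $\le L(\delta)\, C(\ln t)^{-1/8} + 2\delta$ for the $L_1$-dual-Lipschitz comparison of $f_R$.

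Next I would control the two discarded contributions. Both $\mu_\nu$ and $S_t^*\lambda = \cD\,u(t;\cdot)$ (for $t\ge3$, say) assign to the complement of $\overline B_{H^k}^R$ a mass that is $O(R^{-2})$, uniformly in $\nu$ and $t$: for $\mu_\nu$ this follows from the stationary version of Theorem~\ref{t3.3} together with Remark~\ref{r_hommeas}.1 and Chebyshev; for $S_t^*\lambda$ one integrates Theorem~\ref{t3.3} (valid for $t\ge\theta$, here $\theta$ fixed small, say $\theta=1$) against $\lambda$, again with Chebyshev. Since $|f|\le1$ and $f=f_R$ on the ball, we get $|\langle f, S_t^*\lambda\rangle - \langle f_R, S_t^*\lambda\rangle| \le C R^{-2}$ and likewise for $\mu_\nu$. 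Collecting everything,
$$
\|S_t^*\lambda-\mu_\nu\|^*_{L,L_p} \le C_p\Big( R^{-2} + \delta + R^{1-\theta}\,\delta^{-(1-\theta)/\theta}\,(\ln t)^{-1/8}\Big),
$$
and it remains to optimize over $R>0$ and $\delta\in(0,1]$ as functions of $t$. Choosing $\delta$ a small power of $(\ln t)^{-1/8}$ and $R$ a small power of $\delta^{-1}$ yields a bound $C_p(\ln t)^{-\kappa_p}$ with some $\kappa_p>0$ depending only on $p$ (through $\theta$, hence through $k$) and on the force.

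The routine part is the Gagliardo--Nirenberg interpolation and the Chebyshev tail bounds. The main obstacle is organizing the approximation of the $L_p$-Lipschitz test function by an $L_1$-Lipschitz one with an \emph{explicit} loss: one must make sure the retraction $\Theta_R$ onto the Sobolev ball is itself well-behaved (e.g.\ $1$-Lipschitz in $L_1$, which is automatic for the radial retraction in a Hilbert space but needs a short remark since $L_1$ is not Hilbert — alternatively retract in $H^1$ and pay only a fixed constant), and that the H\"older-to-Lipschitz trade-off and the tail estimates are balanced so that the final exponent $\kappa_p$ is strictly positive; this is where care with the exponents $\theta(p,k)$ is needed, but no deep new idea is required.
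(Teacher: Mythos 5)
There is a genuine gap in your choice of the ``smoother space''. You truncate to a ball $\ov B_{H^k}^R$ and claim that both $\mu_\nu$ and $S_t^*\lambda$ give its complement mass $O(R^{-2})$ \emph{uniformly in $\nu$}, citing Theorem~\ref{t3.3}. But Theorem~\ref{t3.3} gives $\EE\Vert u(t)\Vert_k^2\le C_k^*\nu^{-(2k-1)}$, so Chebyshev only yields a tail bound $C\nu^{-(2k-1)}R^{-2}$. By Theorem~\ref{t4.1} this blow-up is sharp: the measures genuinely live at Sobolev scale $\nu^{-(2k-1)/2}$. Hence to make the tail small you would be forced to take $R$ growing as a negative power of $\nu$, which then enters the H\"older constant $R^{1-\theta}$ of $f_R$ on the ball and, after your optimization in $(R,\delta)$, produces constants $C_p,\kappa_p$ depending on $\nu$ --- exactly what the corollary must avoid. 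The only smoothness information that is uniform in $\nu$ is the Oleinik-type control of Theorem~\ref{t3.2}.ii, i.e.\ moments of $|u|_\infty$ and $|u_x|_1$. If you replace $H^k$-balls by $L_\infty$-balls (on which $|v-w|_p\le(2R)^{(p-1)/p}|v-w|_1^{1/p}$ by elementary interpolation of $L_p$-norms, no Gagliardo--Nirenberg needed), your scheme can be repaired. A secondary but real defect is the retraction $\Theta_R$: radial retraction onto an $H^k$- (or $H^1$-, or $L_\infty$-) ball is not Lipschitz for the $L_1$-metric, since two functions close in $L_1$ may have very different $H^k$ or $L_\infty$ norms and hence be rescaled by very different factors; you should instead restrict $f$ to the ball, extend the restricted (H\"older in $L_1$) function by McShane, truncate to keep $|f|\le1$, and then pass to a Lipschitz approximation by inf-convolution as you propose.

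For comparison, the paper avoids test-function surgery altogether: it first upgrades the dual-Lipschitz bound \eqref{mixL1} to a bound on the Kantorovich distance on $\cP(L_1)$ (paying a small power, $1/8\to1/9$, via truncation of unbounded Lipschitz test functions against the uniform $L_1$-moments), then invokes the Kantorovich--Rubinstein theorem to produce a coupling $(U_t,\tilde U_t)$ with $\EE|U_t-\tilde U_t|_1\le C'(\ln t)^{-1/9}$, and finally estimates $\EE|U_t-\tilde U_t|_p$ by interpolating $L_p$ between $L_1$ and $L_{2p}$ under the expectation, the $L_{2p}$-moments being uniform in $\nu$ by Theorem~\ref{t3.2}.ii. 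The coupling route is cleaner because the $L_1$-versus-$L_p$ comparison is done pathwise on the random variables rather than on the test functions, and no retraction or H\"older-to-Lipschitz step is needed.
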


 \begin{proof}
 Since for $p=1$ estimate \eqref{mixL1} implies the assertion,  we may assume that $1<p<\infty$. Consider a solution $u(t)$ such that $\cD u(0)= \lambda$
 and denote $\lambda_t = \cD u(t) = S_t^*\lambda$.  In view of Theorem~\ref{t3.2}.ii), 
 $$
 \lan | u|_1^\gamma, \lambda_t \ran = \EE |u(t)|_1^\gamma \le C_\gamma<\infty \qquad \forall\, t\ge1,\; \gamma\ge0, 
 $$
 and similar
  $$
 \lan | u|_1^\gamma, \mu_\nu \ran \le C_\gamma<\infty \qquad \forall\, t\ge1,\; \gamma\ge0.
 $$
 Apart from the dual-Lipschitz distance on $\cP(L_1)$, consider there the Kantorovich distance 
 \be\label{Kant}
 \| \mu-\nu\|_{Kant} = \sup_{f\in C(L_1),\, \text{Lip}\,f\le1} | \lan f, \mu\ran - \lan f, \nu\ran \le\infty. 
 \ee
 Obviously,
 $
 \| \lambda_t - \mu_\nu \|_L \le  \| \lambda_t - \mu_\nu \|_{Kant}.
 $ 
 But in view of the estimates on moments of $\lambda_t$ and $\mu_\nu$ 
  the Kantorovich distance between them may be estimated via the 
 dual-Lipschitz distance. To do this in the r.h.s. of \eqref{Kant} we replace $f$ by $f^R = \min(|f|, R) \sgn\! f$, $R\ge1$. Then
 $| f^R|_L \le R$, so   the
 modified supremum  in \eqref{Kant}   is at most  $R\| \lambda_t - \mu_\nu\|_L$, 
  while the difference between the modified and non-modified supremums    may be 
 estimated in terms of $R$ and high moments of the two measures. Minimising the obtained estimate in $R$ we get that 
 \be\label{2.82}
   \| \lambda_t - \mu_\nu \|_{Kant} \le C   \| \lambda_t - \mu_\nu \|_{L} \le C' (\ln t)^{-1/9}, \quad t\ge3
 \ee
 (see \cite[Section 4.2]{BK} for details of this calculation). By the Kantorovich--Rubinstein theorem (see in \cite{Vil, BK}) 
 there exist r.v.'s $U_t$ and $\tilde U_t$ in $L_1$ such that 
 $\cD U_t = \lambda_t$, $\cD  \tilde U_t = \mu_\nu$ and 
 $
 \EE | U_t - \tilde U_t|_1 = \| \lambda_t - \mu_\nu \|_{Kant} .
 $
Due to this relation and \eqref{2.82},
 $$
 \EE | U_t - \tilde U_t|_1 \le  C' (\ln t)^{-1/9}, \qquad t\ge3.
 $$
 Since $\lambda_t$ and $\mu_\nu$ are measures on $H^1$, then 
 $U_t, \tilde U_t \in L_p$, a.s. For   any $f\in C_b(L_p)$  such that 
 $
 |f|_{L, L_p} \le 1
 $
 (see \eqref{dL})  we have 
  \be\label{zz15}
  \begin{split}
  | \lan f, \lambda_t\ran -  \lan f, \mu_\nu\ran| &=\EE | f\circ U_t -f\circ { \tilde{U_t}}|  \le  \EE | U_t - { \tilde{U_t}}|  _p \le 
   \EE | U_t - { \tilde{U_t}}|  _1^{1-\theta} |U_t - { \tilde{U_t}}|  _{2p}^{\theta}\\
  &\le(\EE | U_t - { \tilde{U_t}}|  _1)^{1-\theta}   \big( \EE(   |U_t| _{2p} +| { \tilde{U_t}}|  _{2p})\big)^{\theta},\quad \theta= {(2p-2)}/{(2p-1)}, 
  \end{split}
  \ee
  where the second  estimate    is the Riesz--Thorin inequality. 
As
$
 \EE(   |U_t| _{2p} +  |{ \tilde{U_t}}|  _{2p} )= \lan |u|_{2p}, \lambda_t\ran +  \lan |u|_{2p}, {\mu_\nu}\ran  \le C_{p},
 $
 then the r.h.s. of \eqref{zz15} is bounded by  $C_p (\ln t)^{-\kappa_p}$ with
 $\kappa_p=1/9(2p-1)$, for all $f$ as above. So \eqref{mixLp} is proved. 
 \end{proof}  
 
 Using instead of Theorem \ref{t3.2} estimates \eqref{*3.3} and  arguing as when proving the corollary 
 above  we may also  get that  under 
 the assumption of Theorem~\ref{t4.55} for any $M\in\N$, 
 $$
  \| S^*_t \lambda- \mu_\nu \|^*_{L, H^M} \le  C_M(\nu) (\ln t)^{-\kappa_M(\nu)},  \quad t\ge3.
 $$
 The dependence of the constants on $\nu$ makes this result less interesting, but still it shows that  equation \eqref{B2},  \eqref{xi} 
  is mixing in every Sobolev space $H^M$.

\subsection{Energy spectrum and structure function of the stationary measure}
Stationary solution $u^{stat}(t)$ of \eqref{B}  is a solution such that 
  $\ 
  \cD  u^{stat}(t) = \mu_\nu \;\; \text{for all} \;\; t,
  $
  where $\mu_\nu$ is the stationary measure.    Energy spectrum of  $ \mu_\nu $  is the function 
$\, 
E_{\mathbf k}^B(\mu_\nu) = \int e_{\mathbf k}^B(u) \mu_\nu(du) 
$
($e_{\mathbf k}^B$ is defined in \eqref{ek}). 
Obviously, 
  $$
  E_{\mathbf k}^B(\mu_\nu)= \llangle e_{\mathbf k}^B(u^{stat} (\cdot))\rrangle  =  \EE  e_{\mathbf k}^B(u^{stat}(t))\quad \forall\, t\ge0.
   $$
   Since $\llangle e_{\mathbf k}^B(u^{stat} (t))\rrangle$ satisfies the spectral power law (\ref{spec-law}),
    then $E_{\mathbf k}^B(\mu_\nu)$ also does:
      \be\label{SPL}
  E_{\mathbf k}^B(\mu_\nu) \sim {\mathbf k}^{-2},\;  \quad 1\le {\mathbf k} \le C_1^{-1} \nu^{-1}. 
   \ee
      The map $u\mapsto \hat u_n$ is a continuous linear  functional on  
   $L_1$     of unit norm. Moreover, all moments of the $L_1$-norm of   a solution $u(t;u_0)$, $u_0\in H^1$, 
   are bounded uniformly in $t\ge1$.  Hence,  from \eqref{mixL1} we get that 
   $$
   \EE e_{\mathbf k}^B(u(t;u_0)) \to E_{\mathbf k}^B (\mu_\nu) \quad\text{as} \quad t\to \infty,
   $$
    where the rate of convergence does not depend on $\nu$ and ${\mathbf k}$.   
   So asymptotically as $t\to \infty$ the instant energy spectrum $ \EE e_{\mathbf k}^B(u(t;u_0)) $ also satisfies the 
   spectral power law. 
   \smallskip
   
   Writing the structure function of a solution $u =u(t;u_0)$ as 
   $$
   S_{p,l} (u) = \llangle  s_{p,l} (u) \rrangle,  \qquad s_{p,l} (v)= \ds\int |v(x+l) -v(x)|^pdx, 
   $$
  % where as in \eqref{Q3} for a function $v(x)$ we set  $$  
   we define  $ S_{p,l} (\mu_\nu) $ as $\langle  s_{p,l} , \mu_\nu\rangle$. Similar to the above, $ S_{p,l} (\mu_\nu)$  satisfies the relations in 
    Theorem~\ref{t4.3}.  Noting that $ s_{p,l} $ is continuous on the space  $L_{\max(p, 1)}$ we derive from Corollary~\ref{c_mix} that 
    $
    \EE  s_{p,l} (u(t;u_0)) \to S_{p,l} (\mu_\nu),
    $
    as $t\to\infty$, uniformly in $l$ and $\nu$. 
    So asymptotically as $t\to\infty$ the instant structure function $\EE  s_{p,l} (u(t;u_0))$ also satisfies \eqref{assert1} and \eqref{assert2}.
   \smallskip
   
   As we pointed out (see \eqref{hom} and Remark \ref{r_hommeas}), if  the random force is homogeneous, 
    then stationary measure $\mu_\nu$ also is. In this case 
    $$
     S_{p,l} (\mu_\nu) = \langle |u (x+l)- u(x)|^p, \mu_\nu\rangle = \EE | u^{stat}(t, x+l) - u^{stat}(t, x)|^p, 
    $$
    for any $x$ ant $t$. This is in close agreement with the objects, treated by K41, where velocity fields $u$ are 
    assumed to be stationary and homogeneous (see in  Section~\ref{ss_9.2}).

      \setcounter{equation}{0}
      \section{The 4/5-law and Landau objection} \label{s_45}
      
      In this section we follow  paper \cite{PGK}. Now talking about K41, as in the K41 papers, we assume that the involved 
      velocity fields $u(t,x)$ are homogeneous and       isotropic in $x$ (so are not space-periodic).
      
      \subsection{The 4/5-law} \label{ss_10.1}

      \noindent{\it In 3d turbulence.}  Apart from the absolute moments of longitudinal increments \eqref{abs_mom}, K41 studies their signed 
      cubic moments
      $$
     S_{3,l}^{| | s } = S_{3,l}^{| | s }(u) 
      = \EE \Big(  \frac{(u(x+l)- u(x)) \cdot l}{|l|} \Big)^3,
  $$
   where $u$ is a  velocity field of a turbulent flow. 
  Concerning this quantity K41 makes a very precise prediction, called   {\it Kolmogorov's  4/5-law}:
\be\label{K45}
 S_{3,l}^{| | s } = -(4/5) \eps^K |l| + o(\eps^K |l|) 
  \quad \text{when \ $l$ \ is in the inertial range},
\ee
where $\eps^K$ is the rate of  energy dissipation \eqref{Krate}  (which is $\sim1$ by 
assumption). The law was intensively discussed by 
physicists and was re-proved by them a number of times, using physical arguments, related to that in the K41 papers. Recently
a  progress in rigorous verification of the law was achieved in \cite{Bedr}. There the relation in \eqref{K45} is established for
stationary solutions  $u(t,x)$ of the stochastic 3d~NSE on a torus, assuming that they meet the assumption 
$
\nu \EE |  u|_{L_2}^2 =o(1)
$
as $\nu\to0$, and that $| l |$ belongs to  some interval in $\R_+$ whose left edge converges to 0 with $\nu$, but whose
relation with the inertial range is not clear.
\medskip

  \noindent{\it In  1d turbulence.}   Following K41, proofs of the 4/5-law in physical works, as well as in the 
  rigorous paper \cite{Bedr}, crucially use the Karman--Howard--Monin formula (rather a class of formulae with this name). For a flow $u(t,x)$ the 
  formula relates time-derivative of $S_{2,l}^{| |}(u(t, \cdot)) $ with derivatives of $S_{3,l}^{| | s }(u(t, \cdot))  $ in $l$. 
  Variants of this formula,    e.g.  that
  in \cite{Fal},  instead of second moments $S_{2,l}^{| |} $ use   correlations 
  $
  \EE( u(t,x) \cdot u(t, x+l)), 
  $
  closely related to $S_{2,l}^{| |} $. Thus motivated let us 
   examine time-derivatives of  space-correlations for a solution $u(t) =u(t,x;u_0)$ of \eqref{B} 
 with some  random initial data $u_0\in H^{3}$:
  $$
  f^l(u(t)) := \int u(t,x) u(t,x+l) dx.
  $$
  To simplify a bit notation in this subsection and in the next one we assume that in the 1d case $l>0$.

  Applying Ito's formula to $f^l(u(t))$ (with the same stipulation as   we made in Section \ref{s_4} after Theorem~\ref{t5}) 
   and noting that    $d^2 f^l(u)(e,e)= 2 f^l(e)$ we arrive at the equality
  \[
 \begin{split}
 \frac{d}{dt} \EE f^l(u(t)) = \EE\big( -df^l(u)(uu_x) +\nu\, df^l(u)(u_{xx}) +  \sum b_s^2  f^l(e_s)\big) 
  =: \EE( -I_1(t) +I_2(t) +I_3(t)).
 \end{split}
 \]
 Noting that
 $
 df^l(u)(v) = \int \big( u(x) v(x+l) + u(x+l) v(x))\big) dx
 $
 and that, trivially,
 $
 (\partial/\partial l) u(x+l) = u_x(x+l) ,
 $
 we calculate that 
  \[
 \begin{split}
 I_1(t) &= -\frac16 \frac{\partial}{\partial l} s_{3,l} (u(t)), \quad \;\;s_{3,l} (v(x)) = \int \big( v(x+l) -v(x)\big)^3dx;\\
  I_2(t) &= 2\nu  \frac{\partial^2}{\partial l^2}  f^l(u(t)); \qquad \;\;
   I_3(t) =   \sum b_s^2  \cos(2\pi sl) =: \tilde B_0(l)
  \end{split}
 \]
  (see \cite{PGK} for details).  So
  \be\label{KHM}
  \frac{d}{dt} \EE f^l(u(t))  = \frac16 \EE \frac{\partial}{\partial l} s_{3,l}(u(t)) + 2\nu \EE  \frac{\partial^2}{\partial l^2} f^l(u(t)) + \tilde B_0(l).
 \ee
 This relation is a version of the Karman--Howard--Monin formula for the stochastic Burgers equation. 
 
 Now let $u(t) = u^{st}(t)$ be a stationary solution of \eqref{B}, \eqref{xi} (see Section~\ref{s_8.3}). Then
 the l.h.s. of \eqref{KHM}  vanishes. Since $s_{3,0}=0$ and 
 $
 (\partial/\partial l) f^l(u)\!\mid_{l=0} =0, 
 $
 then integrating \eqref{KHM} in $dl$ and multiplying  it by 6 we find that 
 \be\label{I5}
 \EE \big(s_{3,l}( u^{ st}(t) )\big) = -12 \nu  (\partial /\partial l) \EE \big(f^l (u^{ st}(t) )\big) -6 \int_0^l \tilde B_0( {r}) d {r}.
% + O(l^3) +O( \sqrt{l} \sqrt\nu). 
\ee
By Theorem \ref{t3.3}
 \be\label{I6}
 \EE \| u^{  st} (t)\|_1^2 = \lan \| u\|_1^2, \mu_\nu\ran
  \le C\nu^{-1}. 
 \ee
 Consider the first term in the r.h.s. of \eqref{I5}.  Abbreviating 
 $ u^{  st}$ to $u$ we get: 
  \[
 \begin{split}
 \big|  (\partial /\partial l) \EE \big(f^l (u (t) )\big) \big|&=
\big| \EE \int u(t,x) u_x(t,x+l) dx\big|  =  \big| \EE \int (u(t,x) - u(t,x+l)) u_x(t,x+l) dx\big|  \\
  &\le
 \Big[ \EE \int\big( u(t,x) -u(t,x+l)\big)^2dx\Big]^{1/2}  \Big[ \EE \int  u_x(t,x)^2dx\Big]^{1/2} .
 \end{split}
 \]
 Since $u$ is a stationary solution, then the first factor in the r.h.s. equals $S_{2,l}^{1/2}$. So in view of 
 \eqref{I6} and Theorem~\ref{t4.3} the first term in the r.h.s. of \eqref{I5} is 
 $O(\sqrt{l} \sqrt\nu)$.  As $ \tilde B_0(l)$ is a $C^2$-smooth  even function and $ \tilde B_0(0)= B_0(0)$,   then 
 $
 \int_0^l \tilde B_0(r)dr = B_0 l +O(l^3).
 $
 We have seen that 
 \be\label{I7}
  \EE \big(s_{3,l}( u^{ st}(t) )\big) = -6 B_0 l+ O(l^3)  +  O\big(\sqrt{l} \sqrt\nu\big).
 \ee
 If $l$ belongs to the inertial range $[C\nu, C_1]$, then $ O\big(\sqrt{l} \sqrt\nu\big) \le$Const$\,C^{-1/2} l$. 
So assuming that $C$ is sufficiently big we get from \eqref{I7} another proof of the weak law 
\eqref{B45} for stationary solutions $u^{st}(t)$. 

Now let $l$ belongs to a ``strongly inertial range",
 \be\label{I8}
  l \in [ L(\nu) \nu, C_1], 
   \ee
   where $L$ is some fixed positive function of $\nu>0$ such that 
   $$
   L(\nu)\nu \to0\;\; \text{and}\;\;
  L(\nu) \to \infty \;\;\text{as}\;\; \nu\to0.
   $$
   Then $\sqrt{l} \sqrt\nu = o(l)$ as $\nu\to0$ and we get from \eqref{I7} 
   
   \begin{theorem}\label{t_45str}
   Let $u^{st}(t)$ be a stationary solution of \eqref{B}, \eqref{xi} and $l$ satisfies \eqref{I8}. Then 
   \be\label{I9}
  \EE \big(s_{3,l}( u^{ st}(t) )\big) = -6 B_0 l +  o(l) \quad \text{as} \quad \nu\to0, 
  % O\big(\sqrt{l} \sqrt\nu\big).
 \ee
 where $o(l)$ depends on the function $L(\nu)$ and the random force $\xi$. 
   \end{theorem}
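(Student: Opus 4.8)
The plan is to obtain \eqref{I9} as an immediate consequence of the refined Karman--Howard--Monin identity \eqref{I7}, which has already been established for stationary solutions (the application of Ito's formula to the correlation functional $f^l$ is legitimate since $u^{st}(0)$ has law $\mu_\nu$ and $\mu_\nu(H^3)=1$ by Remark~\ref{r_hommeas}): for $l$ in the inertial range,
\[
\EE\big(s_{3,l}(u^{st}(t))\big) = -6B_0 l + O(l^3) + O\big(\sqrt l\,\sqrt\nu\,\big),
\]
where the $O$-constants depend only on the random force $\xi$ --- through $B_0$ and $B_1$ for the $O(l^3)$ term (which comes from $\int_0^l\tilde B_0(r)\,dr = B_0 l + O(B_1 l^3)$), and through the Sobolev bound \eqref{I6} from Theorem~\ref{t3.3} for the $O(\sqrt l\sqrt\nu)$ term (which comes from the viscous contribution $12\nu\,(\partial/\partial l)\EE f^l(u^{st})$ estimated by Cauchy--Schwarz against $S_{2,l}^{1/2}$). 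So it only remains to check that both error terms are $o(l)$, uniformly over $l$ satisfying \eqref{I8}.

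The term $O(\sqrt l\,\sqrt\nu)$ is precisely where the ``strongly inertial'' hypothesis $l\ge L(\nu)\nu$ enters: from it, $\sqrt\nu\le\sqrt{l/L(\nu)}$, hence $\sqrt l\,\sqrt\nu\le l/\sqrt{L(\nu)}$, and since $L(\nu)\to\infty$ as $\nu\to0$ this equals $l\cdot o(1)$ uniformly in $l$. (On the ordinary inertial range the lower bound is only $l\ge C\nu$, which gives merely $\sqrt l\,\sqrt\nu\le C^{-1/2}l$, a term comparable to $l$; that is why \eqref{I7} yields there only the weak law \eqref{B45} and not the sharp asymptotic.) For the remaining term $O(l^3)=l\cdot O(l^2)$ one uses that $l\le C_1$, with $C_1$ a constant of the inertial range, so that in the small-$l$ regime that governs the asymptotic \eqref{I9} we have $O(l^2)=o(1)$, i.e. $O(l^3)=o(l)$ as well. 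Adding the two contributions gives $O(l^3)+O(\sqrt l\,\sqrt\nu)=o(l)$ with the little-$o$ controlled by $L$ and by $\xi$, which is exactly \eqref{I9}.

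There is essentially no further obstacle at this step: all the substance has already gone into deriving \eqref{I7} --- producing the Karman--Howard--Monin relation \eqref{KHM} via Ito's formula, integrating it in $l$ (using $s_{3,0}=0$ and $(\partial/\partial l)f^l(u)|_{l=0}=0$) to obtain \eqref{I5}, identifying the first factor of the viscous remainder with $S_{2,l}^{1/2}$ by stationarity, and then invoking Theorem~\ref{t4.3} and the energy estimate \eqref{I6}. The one point deserving a word of care here is the uniformity in $l$ of the $O$-constants in \eqref{I7}; but this is evident from their derivation, as they come from the fixed quantities $B_0,B_1$ and from the $l$-independent bound \eqref{I6}.
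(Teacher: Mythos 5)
Your proposal is correct and follows exactly the paper's route: Theorem~\ref{t_45str} is obtained directly from the already-established expansion \eqref{I7} by noting that the strongly inertial lower bound $l\ge L(\nu)\nu$ gives $\sqrt{l}\sqrt{\nu}\le l/\sqrt{L(\nu)}=l\cdot o(1)$, which is precisely the one-line argument the paper uses. Your explicit handling of the $O(l^3)$ term and of the uniformity of the constants is, if anything, slightly more careful than the paper's own write-up.
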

   
   Due to the balance of energy relation \eqref{patV}, for  stationary solutions $u^{st}(t)$ the rate of energy dissipation 
   $\eps^B = \nu \EE \| u^{st}(t)\|_1^2 $ 
   is given by 
   \be\label{beB}
   \eps^B  = \tfrac12 B_0.
   \ee
   So relation \eqref{I9} may be written as 
    \be\label{bubu}
  \EE \big(s_{3,l}( u^{ st}(t) )\big) =  -12 \eps^B l + o(l) \quad \text{as} \quad \nu\to0.
 \ee
In this form the 4/5-law for  1d turbulence appears in works of physicists, justified by a heuristic argument. 

Combining the last theorem with Theorem \ref{c_mix} we get that 
 for any r.v. $u_0\in H^1$, independent of  $\xi$, and for $l$ as in   \eqref{I8},  solution $u(t;u_0)$ satisfies 
$$
\lim_{t\to\infty} \EE \big(s_{3,l}( u(t;u_0) )\big) = -6 B_0 l +  o(l) = - 12 \eps^B l +  o(l) 
\quad \text{as} \quad \nu\to0.
$$

An easy calculation shows that if $u(t,x)$ is an $L$-periodic in $x$ solution of the equation in \eqref{B2}, where $\xi$
has the form \eqref{xi} with $e_s(x)$ replaced by  $e_s(x/L)$, then relation \eqref{bubu} still holds. 

\subsection{The Landau objection}\label{ss_10.2}

As we have already mentioned in Section \ref{ss_9.2}, Landau suggested a physical argument,
implying that a  relation  for a moment of velocity increments like relation  \eqref{univ} for the second moment, 
 may hold with a universal constant $C^K$, independent of  the  random force $\xi$,  only if the 
value of the moment, suggested by the relation, 
 is linear in the rate of energy dissipation $\eps$, like relation \eqref{K45} for the third moment. So the 2/3-law cannot hold 
 in the stronger  form  \eqref{univ} with a universal constant $C^K$. The goal of this section is to show that  for 
 1d turbulence, indeed, the only universal relation for the moments $S^{s}_{p,l}$    (see \eqref{S_skew})    is relation 
\eqref{bubu} for the cubic moment (which is linear in $\eps^B$). 

 Namely, for a stationary solution $u^{\nu \,st}(t,x)$ of the stochastic Burgers equation \eqref{B} and an integer $p\ge2$  consider the following
 hypothetic   relation for the  $p$-th moment of $u^{\nu \,st}$: 
\be\label{L}
 S_{p,l}^{s}( u^{\nu \,st}(t) )= C_* (\varepsilon^B l)^{q}+ o(\varepsilon^B l)^{q} \;\; \text{as} \; \nu\to0,
 \ee
where $l$ is any number from  the inertial range $[c_1\nu, c]$, and  $q >0$.
 We address the following question: for which 
$p$ and $q$ relation \eqref{L} holds with a \textit{universal} constant $C_*$, 
independent of   the random force $\xi$?

\begin{theorem}\label{t_L}
%Let $ u^{\nu \,st}(t) $, $\nu>0$, be a stationary solution of eq. \eqref{1.1} and $l$ satisfies \eqref{strong}.
  If relation \eqref{L} holds for each random force $\xi$ as in \eqref{xi}, \eqref{B_assume},
    with a $C_*$ independent of  $\xi$,  then
$$p=3,~q=1,~C_*=-12.$$
\end{theorem}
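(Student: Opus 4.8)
The plan is to argue in three stages: first show that necessarily $q=1$ and $C_*\ne0$; then use a space--time rescaling of the Burgers equation to exclude every integer $p\ne3$; and finally read off $C_*=-12$ from the Kármán--Howarth--Monin identity already established. Throughout I read the error term $o(\eps^B l)^q$ in \eqref{L} as: for each fixed $l$ in the inertial range, a quantity tending to $0$ as $\nu\to0$; this is what legitimizes the limits below. For the first stage I would use that, for a stationary solution $u^{\nu,st}$ and any integer $p\ge2$, the skew structure function is comparable to $|l|$ throughout the inertial range: when $p$ is even, $S^{s}_{p,l}=S_{p,l}\sim l$ by Theorem~\ref{t4.3} applied to $\mu_\nu$ (see Section~\ref{s_8.3}), and when $p$ is odd, $S^{s}_{p,l}\sim-l$ by \eqref{B45} together with its stated extension to all odd $p\ge3$; in either case $c_p l\le|S^{s}_{p,l}(u^{\nu,st})|\le C_p l$ for $l\in[c_1\nu,c]$. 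Fixing a small $l>0$, so that $l$ lies in the inertial range once $\nu<l/c_1$, and letting $\nu\to0$ in \eqref{L} gives $|C_*|(\eps^B)^q l^{q}\in[c_p l,C_p l]$; since $\eps^B=\tfrac12 B_0>0$, taking $C_*=0$ is impossible, and then sending $l\to0^+$ forces $q=1$.

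For the second stage, fix any force $\tilde\xi$ as in \eqref{xi}, \eqref{B_assume}, with stationary solution $\tilde u$, and an integer $k\ge2$. Since $e_s(kx)=e_{ks}(x)$ and $t\mapsto k^{-1}\beta_s(k^2 t)$ is again a standard Wiener process, one checks that $u(t,x):=k\,\tilde u(k^2 t,kx)$ solves \eqref{B},~\eqref{xi} with the \emph{same} viscosity $\nu$ and with the force $\xi^{(k)}$ whose coefficients are $b_{ks}=k^2\tilde b_s$ and $b_n=0$ for $k\nmid n$; in particular $\xi^{(k)}$ satisfies \eqref{B_assume}, and $\eps^B(\xi^{(k)})=\tfrac12 B_0(\xi^{(k)})=k^4\eps^B(\tilde\xi)$ by \eqref{beB}. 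By uniqueness of the stationary measure (Theorem~\ref{t4.55}) and the fact that the law of a solution depends only on $\cD(u_0)\times\cD(\xi(\cdot))$, $u$ is \emph{the} stationary solution for $\xi^{(k)}$. A change of variable $y=kx$ on $S^1$, exploiting $1$-periodicity of the integrand, then yields the exact identity $S^{s}_{p,l}(u^{\xi^{(k)}})=k^p\,S^{s}_{p,kl}(u^{\tilde\xi})$ for all $p,l,\nu$. Now I would pick one fixed small $l>0$ which, for all small $\nu$, lies both in the inertial range of $\xi^{(k)}$ and in the $k$-dilate of the inertial range of $\tilde\xi$ (this intersection is a nonempty interval $[\,\mathrm{const}\cdot\nu,\ \mathrm{const}\,]$ for small $\nu$, because the integral scale of $\xi^{(k)}$ is of order $1/k$); applying \eqref{L} with $q=1$ to $\xi^{(k)}$ at $l$ and to $\tilde\xi$ at $kl$, and letting $\nu\to0$ in the identity above, I obtain $C_*\eps^B(\xi^{(k)})\,l=k^p\cdot C_*\eps^B(\tilde\xi)\,kl$, i.e. $C_*\big(k^4-k^{p+1}\big)\eps^B(\tilde\xi)=0$. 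Since $\eps^B(\tilde\xi)>0$ and $C_*\ne0$ from stage one, this forces $k^4=k^{p+1}$; taking $k=2$ gives $p=3$. With $p=3$ and $q=1$, relation \eqref{L} then reads $S^{s}_{3,l}(u^{st})=C_*\eps^B l+o(l)$, and comparing it on the strongly inertial range \eqref{I8}, where \eqref{bubu} (Theorem~\ref{t_45str}) gives $S^{s}_{3,l}(u^{st})=-12\,\eps^B l+o(l)$, dividing by $\eps^B l$ and letting $\nu\to0$ yields $C_*=-12$.

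I expect the main obstacle to be stage two. The computational part of the scaling is routine, but two points need care: verifying rigorously that $k\,\tilde u(k^2 t,kx)$ is genuinely the stationary solution associated with the admissible force $\xi^{(k)}$ (using the uniqueness of $\mu_\nu$ and the distributional dependence of solutions on $\cD(u_0)\times\cD(\xi(\cdot))$), and --- more delicately --- controlling the inertial range of $\xi^{(k)}$ well enough, uniformly as $\nu\to0$, that one single fixed test value $l$ lies inside both that range and the $k$-dilate of the range of $\tilde\xi$, so that \eqref{L} may legitimately be invoked for both forces and the limit $\nu\to0$ passed to the leading terms. Everything else (stages one and three) reduces to elementary asymptotics once the meaning of the $o(\cdot)$ term is pinned down.
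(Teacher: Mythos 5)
Your proposal is correct, and its overall architecture coincides with the paper's: (i) the bound $|S^{s}_{p,l}(u^{\nu\,st})|\sim |l|$ on the inertial range (Theorem~\ref{t4.3} for even $p$, \eqref{B45} and its extension for odd $p$) pins down $q=1$ and $C_*\ne0$; (ii) a scaling symmetry of the stochastic Burgers equation acting on the class of admissible forces yields the dimensional constraint that forces $p=3$; (iii) Theorem~\ref{t_45str} identifies $C_*=-12$. The one genuine difference is the scaling used in step (ii). The paper takes $w(\tau,x)=\mu\,u(\mu\tau,x)$, a pure time--amplitude dilation: this keeps the spatial structure of the force and the increment length $l$ fixed but changes the viscosity to $\mu\nu$, and comparison of the two expansions at the same $l$ gives $q=p/3$. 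You instead use the full Burgers scaling $u(t,x)=k\,\tilde u(k^2t,kx)$, which preserves $\nu$ but pushes the force onto wavenumbers divisible by $k$ (with $b_{ks}=k^2\tilde b_s$, so $\eps^B\mapsto k^4\eps^B$) and relates $S^{s}_{p,l}$ to $k^pS^{s}_{p,kl}$; with $q$ left general this gives exactly the same constraint $p=3q$. Both variants must handle the same delicate point, namely that a test value of $l$ lies simultaneously in the two relevant inertial ranges for all small $\nu$ (the paper's intersection $J^1\cap J^\mu$ of ranges at viscosities $\nu$ and $\mu\nu$, versus your intersection of the range of $\xi^{(k)}$ with the $k$-dilate of the range of $\tilde\xi$), and your version carries the extra, easy but necessary, verifications that $\xi^{(k)}$ satisfies \eqref{B_assume} and that $k\,\tilde u(k^2t,k\cdot)$ is \emph{the} stationary solution for $\xi^{(k)}$ via uniqueness of $\mu_\nu$ --- both of which you correctly flag and which do go through. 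Neither route is more general than the other; they exploit two different one-parameter subgroups of the same symmetry group and buy the same conclusion.
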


\begin{proof} Let us abbreviate $u^{\nu \,st}(t)$ to $u(t)$. We  take some real number 
 $\mu >1$ and define $\tilde{\xi}(\tau):=\mu ^{-\frac{1}{2}}\xi(\mu \tau)$. This also is a process as in \eqref{xi} (with another set of
 independent Wiener processes $\beta_s$).  
   Denote $w(\tau,x):=\mu\, u(\mu \tau,x).$ Then $w$ is a stationary 
 solution of the equation
\begin{eqnarray}\label{L1}
w_{\tau}(\tau,x)+w(\tau,x)w_{x}(\tau,x)- {\nu^\mu} w_{xx}(\tau,x)=\mu^\frac{3}{2} \partial_{\tau}\tilde{\xi}(\tau,x), 
\qquad {\nu^\mu}=\nu \mu,
\end{eqnarray}
which is eq. \eqref{B}, \eqref{xi}  with scaled $\nu$ and $\xi$.

Consider  inertial range $J^1 = [c_1\nu, c]$ for eq.  \eqref{B} and   inertial
 range  $J^\mu = [c^\mu_1\nu, c^\mu]$  for eq.~\eqref{L1}. For small $\nu$ their intersection 
 $ J =J(\nu):= J^1\cap J^\mu$ is not empty. For any $l\in J$ relation  \eqref{L} holds for $u$ which solves eq.~\eqref{B} and for 
 $w$, solving  eq.~\eqref{L1}. Since 
 $
 S^{s}_{p,l}(w) = \mu^p S^{s}_{p,l} (u)
 $
 and as by \eqref{beB} 
  $\eps^B_w = \mu^3 \eps^B_u$, then from here we get that 
 $$
 \mu^p  C_* \big( \eps_u^B l\big)^q + o(\eps_u^B l )^q=   C_* \big( \mu^3 \eps_u^B l\big)^q + o(\eps_u^B l )^q
 $$
 for  all small $\nu>0$ and all $l\in J(\nu)$.
  As $\mu>1$, then by this equality   $q=p/3$.\,\footnote{
 This is in line with relation $| u(t,x+r) -u(t,x)| \asymp (\eps |r|)^{1/3}$ which appears in the theory of
 turbulence due to a basic dimension argument, without any relation to the equations, describing  the fluid.
 See \cite[(32,1)]{LL}. } 
 On the other hand, it follows from Theorem~\ref{t4.3} if $p$ is even and from relation \eqref{B45} and a discussion after it  if $p$ is odd 
  that $ |S^{s}_{p,l}( u)| \sim |l|$ for any integer $p\ge2$. Thus in \eqref{L} $q=1$, and so
 $p=3q=3$. Then by Theorem~\ref{t_45str}   $C_*=-12$ 	and the theorem is proved.
\end{proof}

\begin{remark}
1) The result of Theorem \ref{t_L} remains true with the same proof if relation \eqref{L} is claimed to 
hold not for all $l$ from the inertial range, but only  for $l$ from a strongly 
inertial range as in \eqref{I8}. In this form asymptotic  \eqref{L}  with $p=3$ and $q=1$ indeed is valid by Theorem~\ref{t_45str}.

2) We do not know if for some integer $p\ge2$, different from 3, asymptotical expansion for $S_{p,l}^{s}( u^{\nu \,st}(t) )$ of the form 
 \eqref{L}, valid for all $l$ from the inertial  range (or from a strongly inertial  range)  may hold  with a constant $C_*$ which  depends on the random force 
$\xi$. 
\end{remark}

\section {Inviscid   1d turbulence}\label{s_9}
\setcounter{equation}{0}
 
 In this section we study the  asymptotics of solutions for equation  (\ref{B2}) as $\nu\to 0$, define the limiting 
{\it entropy solutions,}
corresponding  to  $\nu=0$, and  establish their properties.

 \subsection{ Asymptotics of solutions as $\nu\to 0$}
 
For $m\in\N\cup 0$ we  denote by $\ov H^m$ the Sobolev space of order $m$
of  functions on $S^1$ with {\bf any} mean value and set 
define $\ov X^m_T=C(0,T;\ov H^m)$.  In (\ref{B2}) we  considered
 the problem 
  \be\la{B22}
\left\{\ba{rcl}
u_t(t,x)+u u_x-\nu u_{xx}&=&\eta(t,x)=\pa_t \xi(t,x),\qquad t\ge 0,
\\\
u(0,x)&=&u_0(x)
\ea\right|,\,\,x\in S^1.
\ee
Now let us also consider  another one
\be\la{B23}
\left\{\ba{rcl}
\vp_t(t,x)+\vp_x^2/2-\nu \vp_{xx}&=&\eta(t,x)
=\pa_t \zeta(t,x),\qquad t\ge 0,
\\\
\vp(0,x)&=&\vp_0(x)
\ea\right|,\,\,x\in S^1.
\ee
For solutions for  the latter problem the mean value 
$\ds\int \vp(x)dx$ is {\bf not} an integral of motion.
But obviously, if  $\vp(t,x)$ solves (\ref{B23}), then $u(t,x)=\vp_x(t,x)$ has zero mean-value and 
 solves (\ref{B22}) with 
$\xi=\pa_x\zeta$ and $u_0=\pa_x\vp_0$.
Conversely, if $u(t,x)$ is a solution of  (\ref{B22}), then $\vp(t,x)=\ds\int_0^x u(t,y)dy-\theta(t)$
with a suitable $\theta(t)$ (which is explicit in terms of $u$)
solves (\ref{B23}) with $\vp_0=\int_0^x u_0(y)dy$ and 
$\zeta(t,x)=\ds\int_0^x\xi(t,y)dy$.
Obviously,
$$
\vp \in \ov X_T^m \Leftrightarrow
 u \in  X_T^{m-1},\;\;\;
\vp_0 \in \ov H^m \Leftrightarrow
 u_0 \in  H^{m-1},\;\;\;
\zeta \in  {\ov X}^m_T \Leftrightarrow
  \xi \in  {\ov X}^{m-1}_T.
$$
So, essentially, (\ref{B22}) and (\ref{B23}) is the same problem. As we will now show, this  isomorphism
between the two problems  is an instrumental  tool to study the asymptotics of solutions for (\ref{B22}) as $\nu\to 0$.

We need a version of  the Oleinik estimates, valid up to $t=0$,  whose proof is similar  to that of  Theorem~\ref{t3.1}
(see in \cite{BK}):

\bt\la{t5.1}
Let $u$ solves (\ref{B22}) with $\xi\in { X}^{4}_T$
and $u_0\in H^2$. Then
$$
\sup_{0\le t\le T}|u(t)|_\infty\le B,\qquad 
\sup_{0\le t\le T}|u_x(t)|_1\le 2B,\qquad 
\sup_{0\le t\le T}|u_x^+(t)|_\infty\le B,
$$
where
\be\la{5.3w}
B=B_T(u_0,\xi)=
\max\big[|u_{0x}^+|_\infty+|\xi|_{X_T^4},
4|\xi|_{X_T^4}+|\xi|_{X_T^4}^{1/2}\big].
\ee
\et

 Let $u^\nu(t,x)$ solves 
(\ref{B22}) with 
\be\la{u0X}
u_0\in H^2,\qquad \xi\in {X}_T^4,\qquad 0 < \nu\le 1,
\ee
 let $\vp^\nu(t,x)$ be a corresponding solution of    problem 
(\ref{B23}) and $B$ be as in  (\ref{5.3w}).
Then Theorem~\ref{t5.1} implies that
\be\la{5.4}
|\vp^\nu_x(t)|_\infty = |u^\nu(t)|_\infty \le B\,, \qquad 
|\vp^\nu_{xx}(t)|_1 = |u^\nu_x(t)|_1 \le 2B,\qquad
|\vp^{\nu\,+}_{xx}(t)|_\infty = |u^{\nu\,+}_x(t)|_\infty \le B,
\ee
for any $0\le t\le T$.

\bt\la{t5.2}(S. Kruzkov)
Let $0<\nu_1<\nu_2\le 1$,  $1\le p<\infty$,  $T<\infty$ and  (\ref{u0X}) holds.  
 Let $u^\nu$ solves (\ref{B22}). Then
\be\la{5.5}
|u^{\nu_2}(t)-u^{\nu_1}(t)|_p\le C_p B^{1-\al_p}\ov\nu^{\al_p} e^{Bt\al_p},\qquad 0\le t\le T,
\ee
where $\ov\nu=\nu_2-\nu_1>0$, 
$\al_p=\min(\fr14,\fr1{3p})$, and $B=B_T(u_0,\xi)$
is defined in (\ref{5.3w}).

\et
{\bf Sketch of the proof.} Let $u^\nu(t;u_0,\xi)\in X_T^4$
be a solution of (\ref{B22}), and 
$\vp^\nu(t,x)\in X_T^5$
be the corresponding  solution of (\ref{B23}).
Denote $b(t,x)=\vp^{\nu_1}(t,x)-\vp^{\nu_2}(t,x)$.
Then subtracting  the equation for $\vp^{\nu_1}$ from that for  $\vp^{\nu_2}$ we get that 
$$
b_t+(\vp^{\nu_1}_x+\vp^{\nu_2}_x)b_x/2
=\nu_1\vp^{\nu_1}_{xx}-\nu_2\vp^{\nu_2}_{xx}.
\qquad
%\mbox{\color{red} jelatelno podrobnee}
$$
Denote $E(t)=|b(t)|_2^2$. In view of \eqref{5.4} and 
 the equation above,
$$
( d/{dt}) E\le B E(t)+8B^2\ov\nu,\qquad E(0)=0,
$$
(see \ci{BK}) for the calculation). So by  Gronwall's 
inequality,
\be\la{5.7}
|b(t)|_2^2=E(t)\le 8B \ov\nu e^{Bt}.
\ee
Since 
$
\fr\pa{\pa x}b(t,x)
=u^{\nu_1}(t,x)-u^{\nu_2}(t,x) ,
%\quad\mbox{\color{red} eto neponatno...}
$
 then by (\ref{5.4}) we have
\be\la{b*}
|b(t)|_{2,1}\le 4B
\ee
(we recall \eqref{norm}). Now the Gagliardo--Nirenberg inequality and (\ref{5.7}), (\ref{b*}) imply that 
%Then {\color{red} (GN)  ili  (\ref{5.7}) ??} implies:
$$
|u^{\nu_1}(t)-u^{\nu_2}(t)|_{4/3}=|\pa_x b(t,\cdot)|_{4/3}
\le
| b(t,\cdot)|_{2,1}^{1/2}| b(t,\cdot)|_2^{1/2}
\le C_1B^{1/2}(B^{1/4}e^{Bt/4}\ov\nu^{1/4}).
$$
This proves (\ref{5.5}) with $p\le 4/3$. To get
(\ref{5.5}) for $4/3\le p<\infty$, we apply the Riesz--Thorin
interpolation inequality to 
$v=u^{\nu_1}(t)-u^{\nu_2}(t)$ to get that 
$$
|v|_p\le |v|_\infty^{1-\fr43p}
 |v|_{4/3}^{\fr43p},\qquad p\ge \tfrac43,
$$
where   $|v|_\infty\le 2B$ by (\ref{5.4}). This complets the proof. 
 $\hfill\Box$

Inequalities  (\ref{5.5}) mean that for each $p<\infty$  the mapping \ 
$
(0,1]\ni\nu\mapsto u^\nu(t)\in C(0,T;L_p) \ 
$
is Cauchy-continuous as $\nu\to 0$. So, there exists an 
$
u^0\in\cap_{p<\infty}C(0,T;L_p),
$
such that
\be\la{5.9}
u^\nu
\underset{\nu\to 0}{-\!\!\!-\!\!\!-\!\!\!\to} \,\,u^0
\,\,{\rm in}\,\,C(0,T;L_p),\qquad \forall p<\infty.
\ee
Passing to  limits in the last estimate in (\ref{5.4})
and in (\ref{5.5}), we get:

\bc\la{c5.3} There exists $u^0(t,x) \in\cap_{p<\infty}C(0,T;L_p)$  such that
(\ref{5.9}) holds and (\ref{5.5}) stays true  for $\nu_1=0$ and $0<\nu_2\le1$. Moreover, 
\be\la{5.10}
|u^0|_{C(0,T;L_p)}\le B,\qquad \forall p<\infty.
\ee
\ec

Take any $t\in[0, T]$. Then $u^\nu(t)\underset{\nu\to 0}{-\!\!\!-\!\!\!-\!\!\!\to} u^0(t)$
in $L_1$, and hence, 
$
u^{\nu_j}(t,x)\underset{\nu_j\to 0}{-\!\!\!-\!\!\!-\!\!\!\to} u^0(t,x)$
for a.a. $x\in S^1$.
Therefore by (\ref{5.4}) we  obtain that also  
$
|u^0(t)|_\infty\le B$  for all  $t\le T.$

\subsection{The entropy solutions}
Similarly to  (\ref{M}) and  (\ref{Mt}), for $p<\infty$ 
we define the mapping
\be\la{M0}
\cM^0: H^2\times{X}_T^4\to 
C(0,T;L_p),\quad 
(u_0,\xi)\mapsto u^0(\cdot;u_0,\xi),
\ee
and for  $0\le t\le T$ -- the mappings
\be\la{Mt0}
\cM_t^0: H^2\times{X}_T^4\to L_p,
\quad 
(u_0,\xi)\mapsto u^0(t;u_0,\xi).
\ee
They are the limits of continuous  mappings (\ref{M}) and  (\ref{Mt})
as $\nu\to 0$, where we naturally embedded $X^2_T$ to $C(0,T; L_p)$ and $H^2$ to $L_p$. 
As the convergences (\ref{5.9}) are uniform on  bounded sets (since their rates depend only on $B$), 
 then the mappings $\cM^0$ and $\cM_t^0$ also are continuous.
\smallskip

Consider   equation (\ref{B2}) with $\nu=0$:
\be\la{B0}
u_t(t,x)+\tfrac12\pa_xu^2=\pa_t\xi(t,x),\qquad u(0,x)=u_0^\om(x).
\ee
It follows immediately from   (\ref{5.9}) that $u^0(t;u_0,\xi)$ with $u_0,\xi$ as above, solves (\ref{B0})
in the sense of generalized functions. 
 A generalized solution of (\ref{B0})
{\bf is not unique}, and the construction above single out among various solutions 
 a  {\bf unique} one.  It is called an {\it entropy}, or an {\it inviscid} solution of (\ref{B0}), e.g. see in  \cite{Daf}. 

Now let $\xi$ be the  random force (\ref{xi}).   Let $u_0\in H^2$ be a r.v., independent of  $\xi$.
\bd
$u^{0\om}(t,x;u_0,\xi):=\cM^0(u_0^\om,\xi^w)$
%{\color{red} $\xi^w$ ili $\xi$ ????}
is an entropy solution for  problem (\ref{B0}), \eqref{xi}.
\ed

We will usually write a solution $u^0$  in this definition as $u^{0}(t,x;u_0)$ or $u^{0}(t;u_0)$.

Let $u_0\in H^2$, $\theta>0$,  $1\le p <\infty$ and $a>0$. 
Then Theorem \ref{t3.2},  convergence (\ref{5.9}) and  Fatou's lemma imply that 
\be\la{5.12}
\aE|u^0(t ; u_0)|_p^a\le C(a,B_4)\theta^{-a}, \qquad \forall\,t\ge\theta.
\ee

Due to convergence \eqref{5.9} with $p=1$, the mappings
$H^2\ni
u_0 \mapsto \cM_t^0(u_0, \xi)$, \, $t\ge0,
$
with a fixed $\xi \in {X}^4_T$ inherit estimate \eqref{nexp} and extend by continuity to  1-Lipschitz mappings  $L_1\to L_1$. Accordingly entropy solutions  $u^0(t;u_0)$ extend  to a Markov process in $L_1$. The latter is mixing: 
 there is a measure 
 $\mu_0\in \cP(L_1)$, satisfying $\mu_0( \cap_q L_q)=1$, such that for any r.v. $u_0 \in L_1$, independent of  $\xi$, 
 $$
 \cD u^0(t;u_0) \strela \mu_0 \quad \text{in} \quad \cP(L_p) \quad \text{as}\;\; t\to \infty,
 $$
 for any $p<\infty$. If $\cD u_0 = \mu_0$, then $u^{0\, st} (t) := u^0(t;u_0)$ is a stationary entropy solution,
  $\cD u^{0\, st} (t)  \equiv \mu_0$. Moreover, the viscous stationary measures $\mu_\nu$ as in Theorem~\ref{t4.55} weakly 
 converge, as $\nu\to0$,   to $\mu_0$:
  \be\label{entr_mix}
 \mu_\nu \strela \mu_0\quad\text{as}\quad \nu\to0, \quad \text{on each space } \cP(L_p), \;\; p<\infty.
  \ee
 See \cite[Chapter~8.5]{BK}. 
 
  The limiting ``entropy'' Markov process in $L_1$ admits an elegant presentation in terms of stochastic Lagrangians, e.g. 
see \ci{WKMS} and \ci{IK2003}.

%\subsection{Inviscid burgulence}

\subsection{Moments of small-scale increments and  energy spectra of entropy solutions.}
Similar to Section \ref{ss_9.2}  we define the structure function of an entropy  solution $u^0(t,x)$ as 
$
   S_{p,l} (u^0) = \llangle  s_{p,l} (u^0) \rrangle,$ where  $s_{p,l} (v)= \ds\int |v(x+l) -v(x)|^pdx.
 $
 
By Theorem \ref{t4.3},  for suitable $C_1,c, c_*>0$, for any $0< \nu\le c_*$
and for every $p>0$ we have
\be\la{ine-dis}
S_{p,l}\big(u^\nu(\cdot ;u_0)\big) = \llangle s_{p,l} (u^\nu(\cdot,\cdot;u_0)\rrangle
\sim|l|^{\min(1,p)} \quad \mbox{if }\,\,|l|\in[C_1\nu,c].
\ee
Since functional $s_{p,l} $ is continuous on the space $L_{\max(1,p)}$ and $|s_{p,l} (v)| \le C_p |v|_{\max(1,p)}^p$, 
 then   convergence  \eqref{5.9}  
and the estimate
$$
\aE|u^\nu(t)|_p^a\le C(a,\theta)\qquad\forall\nu > 0,\,\,\,\forall
t\ge\theta>0, \,\,\,\forall a >0,
$$
which follows from Theorem \ref{t3.2}.\,ii), allow to  pass to a limit in (\ref{ine-dis}) as $\nu\to0$, $\nu \le |l| /C_1$, 
 and prove the following result.
\bt\la{t5.5}
Let 
$c$ be as in (\ref{ine-dis}). Then for any $u_0\in H^2$   entropy solution $u^0(t;u_0)$ of (\ref{B0}) satisfies
\be\la{*5}
S_{p,l}\big(   u^0(\cdot;u_0)\big)\sim |l|^{\min(1,p)},\qquad \forall p>0,
\ee 
for  $|l|\le c$.
\et

Since  (\ref{*5}) holds for all $|l|\le c$, then  for entropy solutions there is no dissipation range!
\medskip

 Now let us turn to the 4/5-law \eqref{K45}.   Consider relation \eqref{I5}.  Its l.h.s. equals 
 $ \lan s_{3,l}, \mu_\nu\ran$. The functional $ s_{3,l}$ is continuous on $L_3$,
  and by \eqref{entr_mix},
 $\mu_\nu\strela \mu_0$ in $\cP(L_3)$, where $\mu_0$ is the stationary measure for the inviscid Burgers equation. So
 passing to a limit as $\nu\to0$ in relation \eqref{I7} we get that 
 $$
 \EE \big(s_{3,l}( u^{0\, st}(t) )\big) = \lan  s_{3,l},  \mu_0\ran =
  -6 B_0 l +  O(l^3),
   $$
   where $ u^{0\, st}(t)$ is the stationary entropy solution. This relation is a version of    the 4/5-law for inviscid  1d turbulence. 
   \medskip

Similarly, one can pass to a limit in the energy-spectrum 
Theorem~\ref{t4.4} and get
\bt\la{t5.6} For $M\ge M'$ as in Theorem \ref{t4.4} 
and any $u_0\in H^2$, the energy spectrum of entropy solution 
$u^0(t;u_0)$ satisfies
\be\la{Ek}
E_{\mathbf k}^B(u^0)\sim {\mathbf k}^{-2},\qquad {\mathbf k}\ge 1.
\ee
\et

 In 3d turbulence   no analogies of Theorems \ref{t5.5} and \ref{t5.6} are known. 
That is, for the moment of writing  inviscid 3d turbulence is missing.

\noindent
{\bf \large Acknowledgement.}  I thank  Alexandre Komech for
 helping to edit the LN  for my  on-line lecture-course on the 1d Burgers turbulence 
  to this paper.

%%%%%%%%%%%%%%%%%%%%%%%%%%%%%%%%%%%%%%%%%%%%%%%%%%
\end{document}